\numberwithin{equation}{section}
\newcommand{\R}{\mathbb{R}}
\DeclareMathOperator{\sn}{sn}
\DeclareMathOperator{\cn}{cn}
\DeclareMathOperator{\dn}{dn}
\newtheorem{example}{Example}
\newtheorem{theorem}{Theorem}
\newtheorem{lemma}{Lemma}
\begin{document}
	
\title[Bright and dark breathers in the defocusing mKdV equation]{\bf Bright and dark breathers on an elliptic wave \\in the defocusing mKdV equation}

\author{Dmitry E. Pelinovsky}
\address[D. E. Pelinovsky]{Department of Mathematics and Statistics, McMaster University, Hamilton, Ontario, Canada, L8S 4K1}
\email{pelinod@mcmaster.ca}

\author{Rudi Weikard}
\address[R. Weikard]{Department of Mathematics, University of Alabama at Birmingham, Birmingham, AL 35294--1241 USA}
\email{weikard@uab.edu}

\maketitle

\begin{abstract}
	Breathers on an elliptic wave background consist of nonlinear superpositions of a soliton and a periodic wave, both traveling with different wave speeds and interacting periodically in the space-time. For the defocusing modified Korteweg--de Vries equation, the construction of general breathers has been an open problem since the elliptic wave is related to the elliptic degeneration of the hyperelliptic solutions of genus two. We have found a new representation of eigenfunctions of the Lax operator associated with the elliptic wave, which enables us to solve this open problem and to construct two families of breathers with bright (elevation) and dark (depression) profiles. 
\end{abstract}

\section{Introduction}

We consider the integrable model given by the defocusing mKdV (modified Korteweg--de Vries) equation written in the form:
\begin{equation}
\label{mkdv}
u_t-6u^2u_x+u_{xxx}=0,
\end{equation}
where $(x,t)\in \mathbb{R}\times\mathbb{R}$ and $u=u(x,t) \in \R$. As is well-known (see reviews in \cite{El,Hoefer}), there exists a three-parameter family of traveling periodic wave solutions of the defocusing mKdV equation (\ref{mkdv}), which we call {\em the elliptic wave}. In the symmetric case, it 
reduces to the snoidal profile, which is related 
to the genus-one elliptic solutions of integrable equations of the AKNS hierarchy \cite{B}. In the non-symmetric case, it is related to the elliptic degeneration of the genus-two hyperelliptic solutions. Although 
Riemann theta functions degenerate into Jacobi theta functions in this limit \cite{FayBook}, the general theory is too abstract for practical applications 
of the mKdV equation. 

The mKdV equation is one of the fundamental models of the dispersive hydrodynamics, which brings together the balance between the cubic dispersion 
and the cubic nonlinearity, with many applications in the dynamics of internal waves in seas and oceans \cite{Grimshaw,Pelinovsky}. In the dispersive 
hydrodynamics, the elliptic waves and breathers on their backgrounds capture dynamics of dispersive shock waves and soliton gases \cite{ACEHL}. {\em Breathers} are nonlinear superpositions of a soliton and a periodic wave, both traveling with different wave speeds and interacting periodically in the space-time. In the symmetric case, breathers with the dark (depression) profiles were constructed in \cite{MP24} by using the explicit expression for eigenfunctions of the Lax operator associated with the genus-one elliptic functions from \cite{Takahashi}. In the non-symmetric case, breathers with the kink profiles were constructed in \cite{AP25} for the only case when the spectral parameter is at the origin (which is the center of symmetry of the Lax spectrum). The main purpose of this work is to construct breathers with both bright (elevation) and dark (depression) profiles in the general case of the genus-two elliptic functions and for the arbitrary non-zero values of the spectral parameter. 

The mKdV equation (\ref{mkdv}) is related to the Lax system of linear equations written for the eigenfunction $\varphi \in C^2(\mathbb{R} \times \mathbb{R},\mathbb{C}^2)$ and the spectral parameter $\zeta \in \mathbb{C}$:
\begin{equation}
\label{LS}
\partial_x \varphi = U(\zeta,u) \varphi, \quad  
\partial_t \varphi = V(\zeta,u) \varphi, 
\end{equation}
where 
\begin{align*}
U(\zeta,u) &=\left(\begin{array}{ll} i\zeta & u\\ u & -i\zeta\end{array} \right), \\
V(\zeta,u)  &= \left( \begin{array}{ll} 4i\zeta^3+2i\zeta u^2 & 4\zeta^2u-2i\zeta u_x+2u^3-u_{xx}\\ 4\zeta^2u+2i\zeta u_x+2u^3-u_{xx} & -4i\zeta^3-2i\zeta u^2 \end{array}\right).
\end{align*}
Classical solutions of the mKdV equation (\ref{mkdv}) arise as a compatibility condition of the Lax system (\ref{LS}) given by $\partial_x \partial_t \varphi = \partial_t \partial_x \varphi$. Since the spectral problem 
\begin{equation}
\label{spectral-problem}
\left(\begin{array}{ll} -i \partial_x & i u\\ -i u & i \partial_x \end{array} \right) \varphi = \zeta \varphi,
\end{equation}
is self-adjoint in $L^2(\mathbb{R},\mathbb{C}^2)$, the admissible values of the spectral parameter $\zeta$ (forming the Lax spectrum) belong to a subset of $\mathbb{R}$. We construct explicit eigenfunctions $\varphi \in C^2(\mathbb{R} \times \mathbb{R},\mathbb{C}^2)$ in the case when $u(x,t) = \phi(x+ct)$ is a general elliptic wave with the (left-propagating) wave speed $c \in \mathbb{R}$ and the smooth periodic profile $\phi(x) : \mathbb{R}\to \mathbb{R}$ and when $\zeta \in \mathbb{R}$ is arbitrary.

\subsection{Main results} 

The periodic wave profile $\phi$ of the traveling wave $u(x,t) = \phi(x+ct)$ 
with the wave speed $c \in \mathbb{R}$ satisfies the third-order equation 
\begin{equation}
\label{first}
\phi'''-6\phi^2\phi'+c\phi' = 0,
\end{equation}
which can be integrated twice with two real constants of integration 
$b$ and $d$ as follows:
\begin{align}
\label{second}
\phi'' - 2 \phi^3 + c \phi &= b,  \\
\label{third}
( \phi')^2 - \phi^4 + c\phi^2 &= 2b \phi + 2 d.
\end{align}
The general periodic solution of the system (\ref{first}), (\ref{second}), and (\ref{third}) has different analytic forms for $b = 0$  and $b \neq 0$. 
For $b = 0$ ({\em symmetric case}), the level curves of (\ref{third}) on the phase plane $(\phi,\phi')$ are symmetric with respect to reflection $\phi \to -\phi$. For $b \neq 0$ ({\em non-symmetric case}), the symmetry of reflection $\phi \to -\phi$ is lost on the phase plane $(\phi,\phi')$. If $b \neq 0$ (without loss of generality, we consider $b > 0$), parameters ($b,c,d$) can be parameterized by the real parameters $(\zeta_1,\zeta_2,\zeta_3)$ satisfying $0 < \zeta_3 < \zeta_2 < \zeta_1$:
\begin{equation}
\label{parameterization}
\begin{cases}
b = 4 \zeta_1 \zeta_2 \zeta_3, \\
c = 2(\zeta_1^2 + \zeta_2^2 + \zeta_3^2), \\
d = \frac{1}{2}(\zeta_1^4 + \zeta_2^4 + \zeta_3^4) - \zeta_1^2 \zeta_2^2 - \zeta_1^2 \zeta_3^2 - \zeta_2^2 \zeta_3^2.
\end{cases}
\end{equation}
The elliptic wave is then represented in the standard form 
(see, e.g., \cite{Hoefer}):
\begin{equation}
\label{form-2}
\phi(x) = \dfrac{2 (\zeta_1 + \zeta_3)(\zeta_2 + \zeta_3)}{(\zeta_1 + \zeta_3) -(\zeta_1 - \zeta_2){\rm sn}^2(\nu x,k)} - \zeta_1 - \zeta_2 - \zeta_3,
\end{equation}
with
\begin{equation}
\label{parameters-nu-k}
\nu = \sqrt{\zeta_1^2 - \zeta_3^2},  \quad
k = \sqrt{\dfrac{\zeta_1^2 - \zeta_2^2}{\zeta_1^2 - \zeta_3^2}}.
\end{equation} 
Parameters $\{ \zeta_1,\zeta_2,\zeta_3\}$ in (\ref{parameterization}), (\ref{form-2}), and (\ref{parameters-nu-k}) define the end points of the Lax spectrum 
\begin{equation}
\label{Lax-spectrum}
\sigma_L = (-\infty,-\zeta_1] \cup [-\zeta_2,-\zeta_3] \cup [\zeta_3,\zeta_2] \cup [\zeta_1,\infty),
\end{equation}
with two symmetric bandgaps $(-\zeta_1,-\zeta_2) \cup (\zeta_2,\zeta_1)$ and the central bandgap $(-\zeta_3,\zeta_3)$. If $0 < \zeta_3 < \zeta_2 < \zeta_1$, then the wave profile $\phi$ is periodic on $\mathbb{R}$ with the period $2\nu^{-1} K(k)$ and on $i \mathbb{R}$ with the period $2 \nu^{-1} K'(k)$,  where $K(k)$ is a complete elliptic integral of the first kind and $K'(k) = K(k')$ with $k' = \sqrt{1-k^2}$. The spectral stability of the elliptic wave with the profile (\ref{form-2}) is shown 
in \cite{DN1} based on the Lax spectrum (\ref{Lax-spectrum}) and the 
squared eigenfunction relation between eigenfunctions of the Lax system (\ref{LS}) 
and the linearized mKdV equation.

In what follows, we drop the dependence of the elliptic functions and integrals on $k \in (0,1)$ if it does not cause a confusion. We also use standard notations for elliptic functions, which are reviewed in (\ref{Jacobi-theta}) and (\ref{Jacobi-sn}) below. 

The main result of this study is the representation formula for eigenfunctions of the Lax system (\ref{LS}) with $u(x,t) = \phi(x+ct)$ given by (\ref{form-2}) and with arbitrary $\zeta \in \mathbb{R}$. To do so, we use a 
new representation of the elliptic function $\phi$ obtained in \cite{AP25}, which is defined by poles $\pm \nu^{-1} (iK'+\alpha)$ and zeros $\pm \nu^{-1} \beta$, 
where the values of  $\alpha \in (0,K)$ and $\beta \in [0,K) \cup i[0,K')$ are uniquely obtained from 
\begin{equation}
\label{parameters-nu-k-alpha}
\sn(\alpha) = \sqrt{\dfrac{\zeta_1 - \zeta_3}{\zeta_1 + \zeta_2}}, \quad 
\sn(\beta) = \sqrt{\frac{(\zeta_1 + \zeta_3)(\zeta_1 - \zeta_2 - \zeta_3)}{
		(\zeta_1 - \zeta_2) (\zeta_1 + \zeta_2 + \zeta_3)}}.
\end{equation}
If $\zeta_1 \neq \zeta_2 + \zeta_3$, the wave profile $\phi$
is represented in the factorized form (see \cite[Theorem 1]{AP25}):
\begin{equation}
\label{gen-theta}
\phi(x) = (\zeta_1 - \zeta_2 - \zeta_3) \frac{\Theta^2(\alpha)}{H^2(\beta)} \frac{H(\nu x-\beta) H(\nu x + \beta)}{\Theta(\nu x-\alpha) \Theta(\nu x+\alpha)},
\end{equation}
where $\beta \in (0,K)$ for $\zeta_1 > \zeta_2 + \zeta_3$ and $\beta \in i(0,K')$ for $\zeta_1 < \zeta_2 + \zeta_3$. If $\zeta_1 = \zeta_2 + \zeta_3$, then $\beta = 0$, and the wave profile $\phi$ can be written in the factorized form:
\begin{equation}
\label{gen-theta-degenerate}
\zeta_1 = \zeta_2 + \zeta_3 : \quad 
\phi(x) = \frac{2 (\zeta_2 + \zeta_3) \zeta_3 \Theta^2(\alpha) \sn^2(\nu x)  \Theta^2(\nu x)}{(\zeta_2 + 2 \zeta_3) \Theta^2(0) \Theta(\nu x-\alpha) \Theta(\nu x+\alpha)}.
\end{equation}

\begin{figure}[htb!]
	\includegraphics[width=0.6\textwidth,height=0.3\textheight]{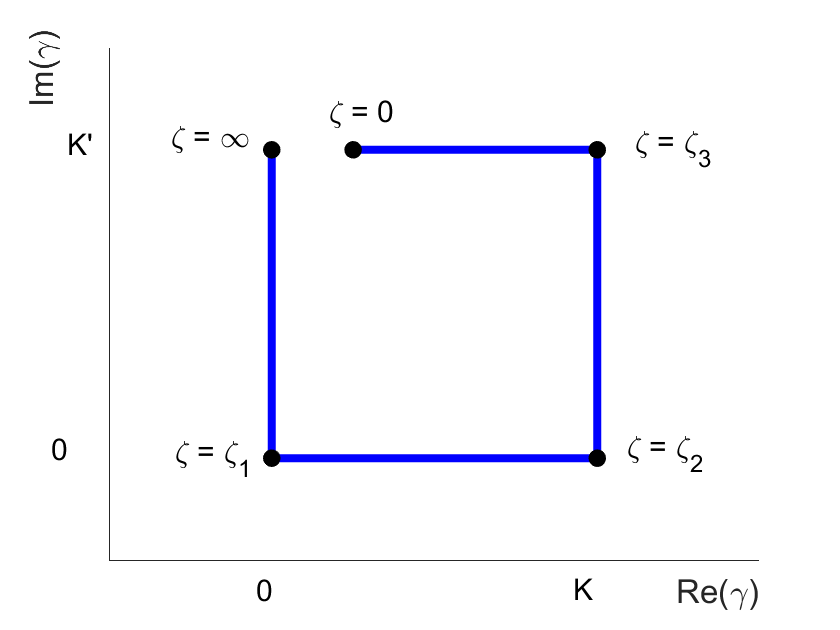}
	\caption{The pre-image of the mapping $[0,K]  \times [0,iK'] \ni \gamma \to \zeta \in [0,\infty)$ on the complex plane when $\zeta$ changes from $\zeta = 0$ to $\zeta = \infty$.}
	\label{fig-image}
\end{figure}

To introduce the eigenfunctions of the Lax system (\ref{LS}), we parameterize 
the spectral parameter $\zeta \in \mathbb{R}$ by using $\gamma \in [0,K] \times [0,iK']$ from the dispersion relation:
\begin{equation}
\label{dispersion}
\zeta^2 = \zeta_3^2 + (\zeta_1^2 - \zeta_3^2) \dn^2(\gamma).
\end{equation}
This dispersion relation is obtained in Lemma \ref{lem-gamma} below. Figure \ref{fig-image} shows the pre-image of the mapping 
$[0,K]  \times [0,iK'] \ni \gamma \to \zeta \in [0,\infty)$, 
which is justified in Lemma \ref{lem-beta} below. With this parameterization, the following theorem specifies the explicit representation of the eigenfunctions. 
The proof is based on Lemmas \ref{lem-coefficients} and \ref{lem-mu} below. 

\begin{theorem}
	\label{th-main}
	Let $u(x,t) = \phi(x+ct)$ be defined by (\ref{form-2}) for $0 < \zeta_3 < \zeta_2 < \zeta_1$ and $\gamma \in [0,K]  \times [0,iK']$ be defined by (\ref{dispersion}) for a given $\zeta \in (0,\zeta_3) \cup (\zeta_3,\zeta_2) \cup (\zeta_2,\zeta_1) \cup (\zeta_1,\infty)$. There exist two linearly independent solutions of the Lax system (\ref{LS}) in the form:
	\begin{equation}
	\label{eigenfunction-time}
	\varphi(x,t) = \left( \begin{matrix} p_1(x+ct)  \\ q_1(x+ct) \end{matrix} \right) e^{\mu t}, \quad \varphi(x,t) = \left( \begin{matrix} p_2(x+ct)  \\ q_2(x+ct) \end{matrix} \right) e^{-\mu t},
	\end{equation}
	where
	\begin{equation}
	\label{Lame-sol-1-fin}
	\left( \begin{matrix} p_1(x) \\ q_1(x) \end{matrix} \right) = 
	\left( \begin{matrix}  -i\zeta\frac{H(\nu x + \gamma + \alpha)}{\Theta(\nu x + \alpha)} + \zeta_1 \frac{\Theta(0) \Theta(2\alpha + \gamma)}{\Theta(\gamma) \Theta(2\alpha)} \frac{H(\nu x + \gamma - \alpha)}{\Theta(\nu x - \alpha)} \\
	-i\zeta\frac{H(\nu x + \gamma + \alpha)}{\Theta(\nu x + \alpha)} - \zeta_1 \frac{\Theta(0) \Theta(2\alpha + \gamma)}{\Theta(\gamma) \Theta(2\alpha)} \frac{H(\nu x + \gamma - \alpha)}{\Theta(\nu x - \alpha)} \end{matrix} \right) 
	e^{-\nu x Z(\gamma)}
	\end{equation}
	and
	\begin{equation}
	\label{Lame-sol-2-fin}
	\left( \begin{matrix} p_2(x) \\ q_2(x) \end{matrix} \right) = 
	\left( \begin{matrix}  \zeta_1 \frac{\Theta(0) \Theta(2\alpha + \gamma)}{\Theta(\gamma) \Theta(2\alpha)}\frac{H(\nu x - \gamma + \alpha)}{\Theta(\nu x + \alpha)} +  i\zeta \frac{H(\nu x - \gamma - \alpha)}{\Theta(\nu x - \alpha)} \\ 
	\zeta_1 \frac{\Theta(0) \Theta(2\alpha + \gamma)}{\Theta(\gamma) \Theta(2\alpha)}\frac{H(\nu x - \gamma + \alpha)}{\Theta(\nu x + \alpha)} -  i\zeta \frac{H(\nu x - \gamma - \alpha)}{\Theta(\nu x - \alpha)} \end{matrix} \right) e^{\nu x Z(\gamma)},
	\end{equation}
with
	\begin{align}
	\label{omega}
	\mu = -4 \nu^3 k^2 \sn(\gamma) \cn(\gamma) \dn(\gamma).  
	\end{align}
\end{theorem}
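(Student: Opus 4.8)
The plan is to reduce the Lax system (\ref{LS}) to ordinary differential equations in the traveling variable, verify the proposed eigenfunctions against the reduced spatial equation by theta-function calculus, and then read off the exponent $\mu$ and the linear independence from the structure of the reduction.

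\emph{Step 1: traveling-wave reduction.} Write $\xi=x+ct$ and look for a solution of the form $\varphi(x,t)=\psi(\xi)\,e^{\mu t}$ with $\psi=(p,q)^{\top}$. Since $u=\phi(\xi)$, the first equation of (\ref{LS}) becomes $\psi'=U(\zeta,\phi)\psi$, and the second becomes $c\psi'+\mu\psi=V(\zeta,\phi,\phi',\phi'')\psi$; eliminating $\psi'$ gives $M\psi=\mu\psi$, where $M:=V(\zeta,\phi,\phi',\phi'')-cU(\zeta,\phi)$. Because $\phi$ solves the stationary mKdV equation (\ref{first})--(\ref{third}), the zero-curvature identity $\partial_tU-\partial_xV+[U,V]=0$ collapses in the variable $\xi$ to $M'=[U,M]$, hence $(M\psi)'=U(M\psi)$, so $M$ maps the two-dimensional solution space of $\psi'=U\psi$ into itself. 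Since $\operatorname{tr}M=0$ and $\det M$ is $\xi$-independent (again from $M'=[U,M]$), the induced endomorphism has constant eigenvalues $\pm\mu$ with $\mu^{2}=-\det M$. Thus the theorem reduces to: (A) the pairs $\psi_j=(p_j,q_j)^{\top}$ of (\ref{Lame-sol-1-fin})--(\ref{Lame-sol-2-fin}) solve $\psi'=U(\zeta,\phi)\psi$; (B) they are eigenvectors of $M$ for $+\mu$ and $-\mu$ with $\mu$ as in (\ref{omega}); and (C) the two solutions are linearly independent.

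\emph{Step 2: the spatial equation.} For (A) I would pass to the rotated components $f=p+q$, $g=p-q$, in which $U(\zeta,\phi)\psi$ becomes $f'=\phi f+i\zeta g$, $g'=i\zeta f-\phi g$, and in which $\psi_1$ takes the clean shape $f_1\propto \tfrac{H(\nu x+\gamma+\alpha)}{\Theta(\nu x+\alpha)}e^{-\nu xZ(\gamma)}$, $g_1\propto \tfrac{H(\nu x+\gamma-\alpha)}{\Theta(\nu x-\alpha)}e^{-\nu xZ(\gamma)}$ (the $\zeta_1$- and $i\zeta$-terms cancel pairwise in $f_1$ and $g_1$). Substituting the factorized representation (\ref{gen-theta}) of $\phi$ and the logarithmic-derivative and addition formulas for $H$ and $\Theta$, each of the two scalar equations becomes an identity between meromorphic functions of $x$ that are in fact elliptic: the factor $e^{-\nu xZ(\gamma)}$ together with the quasi-periodicity multipliers of the theta ratios cancel precisely because the shifts $\pm\alpha,\pm\gamma$ and the exponent $Z(\gamma)$ are matched to the divisor of $\phi$ (double poles at $\pm\nu^{-1}(iK'+\alpha)$, where $\Theta(\nu x\pm\alpha)=0$). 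Both sides have at most simple poles at the same points; the construction is arranged so that the residues cancel, and matching the remaining constant at one convenient value of $x$ closes the identity by Liouville's theorem. This residue cancellation is what forces the coefficient $\zeta_1\Theta(0)\Theta(2\alpha+\gamma)/[\Theta(\gamma)\Theta(2\alpha)]$, which is the content of Lemma~\ref{lem-coefficients}; the analysis of $\psi_2$ is parallel, being (up to an overall constant) the image of $\psi_1$ under the reflection $\gamma\mapsto-\gamma$, under which $Z(\gamma)\mapsto-Z(\gamma)$.

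\emph{Step 3: the exponent and linear independence.} Once (A) holds, $M\psi_j$ again solves $\psi'=U\psi$, so $M\psi_j=\mu\psi_j$ on the whole (regular) interval as soon as $M(\xi_0)\psi_j(\xi_0)$ is proportional to $\psi_j(\xi_0)$ at a single regular point $\xi_0$. Evaluating at a half-period, where the theta functions collapse to $\sn,\cn,\dn$ of $\gamma$ and the constants simplify, and using the dispersion relation (\ref{dispersion}), $\zeta^{2}=\zeta_3^{2}+(\zeta_1^{2}-\zeta_3^{2})\dn^{2}(\gamma)$, yields the eigenvalue $\mu=-4\nu^{3}k^{2}\sn(\gamma)\cn(\gamma)\dn(\gamma)$ of (\ref{omega}) (equivalently, $\mu^{2}=-\det M$ computed from (\ref{gen-theta}) reproduces $16\nu^{6}k^{4}\sn^{2}\cn^{2}\dn^{2}$, with the sign fixed by the chosen eigenvector); this is Lemma~\ref{lem-mu}. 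For (C), the stated range $\zeta\in(0,\zeta_3)\cup(\zeta_3,\zeta_2)\cup(\zeta_2,\zeta_1)\cup(\zeta_1,\infty)$ corresponds, via Lemma~\ref{lem-beta} and Figure~\ref{fig-image}, to $\gamma$ on the boundary of the rectangle $[0,K]\times[0,iK']$ but away from its four corners (the corners being exactly where $\zeta\in\{\zeta_1,\zeta_2,\zeta_3,\infty\}$), so $\sn(\gamma)\cn(\gamma)\dn(\gamma)\neq0$ and hence $\mu\neq0$; the two solutions then carry genuinely distinct time factors $e^{\mu t}$ and $e^{-\mu t}$, and since neither $\psi_j$ vanishes identically (each has simple poles), they are linearly independent.

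\emph{Main obstacle.} I expect the crux to be Step 2: organizing the theta-function bookkeeping so that the residue cancellations at $\nu x=\pm\alpha+iK'$ are transparent rather than a brute-force coincidence. The correct Baker--Akhiezer ansatz (the shifts $\pm\alpha,\pm\gamma$ and the exponential $e^{-\nu xZ(\gamma)}$) is dictated by the pole/zero structure of $\phi$ in the representation (\ref{gen-theta}), but verifying it still requires careful use of the addition theorems, and pinning the normalizing constant is delicate. The reduction of Step 1 and the linear-independence argument are routine, and the computation of $\mu$ is a finite identity in $\sn,\cn,\dn$ once a good evaluation point is chosen.
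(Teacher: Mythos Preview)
Your proposal is sound in outline, and Steps~1 and~3 match the paper's argument closely: the traveling-wave reduction to $\psi'=U\psi$ together with $M\psi=\mu\psi$, $M'=[U,M]$, $\mu^2=-\det M$ is exactly the content of Section~\ref{sec-3-2} (Lemma~\ref{lem-polynomial}), and your point-evaluation strategy for $\mu$ is Lemma~\ref{lem-mu}, where the paper evaluates at the specific point $x_0=(\alpha-\gamma)/\nu$ (at which $p_1=q_1$, i.e.\ $\rho=1$) rather than at a half-period, and then reads off $\mu=-2\wp'(a)$ from the quotient (\ref{rho-basic}). Your linear-independence argument via $\mu\neq0$ is also correct.

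The genuine difference is Step~2. The paper does \emph{not} verify the coupled first-order system $f'=\phi f+i\zeta g$, $g'=i\zeta f-\phi g$ directly by a Liouville argument. Instead it \emph{squares} the spectral problem to obtain the two \emph{uncoupled} Schr\"odinger equations $(-\partial_x^2+\phi^2\pm\phi')\psi_\pm=\zeta^2\psi_\pm$ for $\psi_\pm=p\pm q$, uses the Miura identity $\phi^2\pm\phi'=\wp(v)+2\wp(x\mp v/2)$ to recognise each as a translate of the classical Lam\'e equation, and then quotes the known Hermite--Lam\'e solutions from Ince (Lemma~\ref{lem-gamma}). This hands you the \emph{form} of $f$ and $g$ for free; the only remaining task is to fix the ratio $c_1^+/c_1^-$ so that the \emph{first-order} (not merely the squared) system holds, and that is done by the pole expansion at $x=\pm v/2$ in Lemma~\ref{lem-coefficients}---precisely your residue argument. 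Your direct route would also work, but be aware that the expression $f'-\phi f-i\zeta g$ carries potential \emph{double} poles at $\nu x=-\alpha+iK'$ (from $\phi f$ and $f'$), not only simple ones; their cancellation is automatic and encodes exactly the Lam\'e equation, but you would still have to exhibit it, so the theta-function bookkeeping is heavier than your sketch suggests. The Miura-to-Lam\'e reduction is the paper's shortcut that bypasses this.
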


\begin{example}
	When $\zeta = 0$, we have $\gamma = 2 \alpha + i K'$, see Figure \ref{fig-image}. By using 
	\begin{equation}
	\label{Wei-Jac-3}
	H(z+iK') = i e^{\frac{\pi K'}{4 K}} e^{-\frac{i \pi z}{2K}} \Theta(z), \quad 
	\Theta(z+iK') = i e^{\frac{\pi K'}{4 K}} e^{-\frac{i \pi z}{2K}} H(z),
	\end{equation}
	expressions (\ref{Lame-sol-1-fin}) and (\ref{Lame-sol-2-fin}) reduce up to the norming constants to 
	\begin{equation}
	\label{exact-2}
	\left( \begin{matrix} p_1(x) \\ q_1(x) \end{matrix} \right) = 
	\left( \begin{matrix} 1 \\ -1 \end{matrix} \right) \frac{\Theta(\nu x + \alpha)}{\Theta(\nu x - \alpha)} e^{-\nu x \frac{H'(2\alpha)}{H(2\alpha)}}
	\end{equation}
	and
	\begin{equation}
	\label{exact-1}
	\left( \begin{matrix} p_2(x) \\ q_2(x) \end{matrix} \right) = 
	\left( \begin{matrix} 1 \\ 1 \end{matrix} \right) \frac{\Theta(\nu x - \alpha)}{\Theta(\nu x + \alpha)} 
	e^{\nu x \frac{H'(2\alpha)}{H(2\alpha)}},
	\end{equation}
	which were derived in \cite{AP25}. Furthermore, by using 
	\begin{equation}
	\label{rel-2}
	\sn(z \pm i K') = \frac{1}{k \sn(z)}, \;\; \cn(z \pm i K') = \frac{\mp i \dn(z)}{k \sn(z)}, \;\; 
	\dn(z \pm i K') = \frac{\mp i \cn(z)}{\sn(z)},
	\end{equation}
	and
	\begin{equation}
	\label{elliptic-double-alpha}
	\sn(2\alpha) = \frac{\nu}{\zeta_1}, \quad \cn(2\alpha) = \frac{\zeta_3}{\zeta_1}, \quad \dn(2\alpha) = \frac{\zeta_2}{\zeta_1},
	\end{equation}
	which follows from 
	\begin{equation}
	\label{elliptic-alpha}
	\sn(\alpha) = \frac{\sqrt{\zeta_1 - \zeta_3}}{\sqrt{\zeta_1 + \zeta_2}}, \quad 
	\cn(\alpha) = \frac{\sqrt{\zeta_2 + \zeta_3}}{\sqrt{\zeta_1 + \zeta_2}}, \quad 
	\dn(\alpha) = \frac{\sqrt{\zeta_2 + \zeta_3}}{\sqrt{\zeta_1 + \zeta_3}},
	\end{equation} 
	expression (\ref{omega}) yields $\mu = 4 \zeta_1 \zeta_2 \zeta_3$, also in agreement with \cite{AP25}.
\end{example}

Expressions (\ref{Lame-sol-1-fin}) and (\ref{Lame-sol-2-fin}) of Theorem \ref{th-main} will now be used to construct breathers on the elliptic wave for  eigenvalues $\zeta$ chosen in the gaps $(-\zeta_1,-\zeta_2)$, $(-\zeta_3,\zeta_3)$, and $(\zeta_2,\zeta_1)$ of the Lax spectrum (\ref{Lax-spectrum}).

\subsection{Construction of breathers}

Darboux transformation for the mKdV equation are well-known in the literature, e.g., see \cite{Gu,Matveev} or \cite[Appendix]{CP2018}.
The one-fold Darboux transformation is written in the form:
\begin{equation}
\label{DT}
u = \phi - \frac{4i \zeta p q}{p^2 - q^2},
\end{equation}
where $p = c_1 p_1 + c_2 p_2$ and $q = c_1 q_1 + c_2 q_2$ for some constant coefficients $c_1,c_2 \in \mathbb{C}$ and for a given choice of the spectral parameter $\zeta \in \mathbb{R}$. Expanding yields 
\begin{equation}
\label{DT-expand}
u = \phi - \frac{4i \zeta (c_1^2 p_1 q_1 + c_1 c_2 (p_1 q_2 + p_2 q_1) + c_2^2 p_2 q_2)}{c_1^2 (p_1^2 - q_1^2) + 2c_1 c_2 (p_1 p_2 - q_1 q_2) + c_2^2 (p_2^2 - q_2^2)}.
\end{equation}

The main application of Theorem \ref{th-main} is an algorithm on how to choose $\zeta \in \R$ and $c_1,c_2 \in \mathbb{C}$ in order to obtain bounded real-valued solutions $u = u(x,t)$ of the mKdV equation (\ref{mkdv}) from the transformation formula (\ref{DT-expand}), which describe breathers on the elliptic wave background. We show in Section \ref{sec-4} below that if $\zeta \in (0,\zeta_3) \cup (\zeta_2,\zeta_1)$, then we can choose scaling factors in $(p_1,q_1)$ and $(p_2,q_2)$ such that 
\begin{equation}
\label{condition-1}
p_1^2 - q_1^2, p_2^2 - q_2^2, p_1q_2 + p_2 q_1 \in i \mathbb{R}
\end{equation}
and 
\begin{equation}
\label{condition-2}
p_1p_2 - q_1 q_2, p_1 q_1, p_2 q_2 \in \mathbb{R}.
\end{equation}
This implies that the solution $u = u(x,t)$ in (\ref{DT-expand}) is real-valued if $c_1 c_2 \in i \mathbb{R}$, $c_1^2,c_2^2 \in \mathbb{R}$. The exponential factors in (\ref{Lame-sol-1-fin}) and (\ref{Lame-sol-2-fin}) are defined differently between $(0,\zeta_3)$ and $(\zeta_2,\zeta_1)$ by the parameter $\kappa > 0$:
\begin{equation}
\label{kappa}
\kappa = \left\{ \begin{array}{ll} 
	\frac{\Theta'(\gamma)}{\Theta(\gamma)}, \quad & \zeta \in (\zeta_2,\zeta_1), \\
	\frac{H'(\gamma - iK')}{H(\gamma - iK')}, \quad & \zeta \in (0,\zeta_3). \end{array} \right.
\end{equation}
It follows from (\ref{DT-expand}) that only the factor $c_2/c_1$ affects the representation of the solutions. To reduce the number of parameters, we introduce the phase translation of the breather relative to the elliptic wave and define
\begin{equation}
\label{c1-c2}
c_1 = e^{-\nu \kappa x_0 }, \quad 
c_2 = -i e^{\nu \kappa x_0 },
\end{equation}
where the negative sign of $c_2$ was chosen to obtain bounded solutions. 
The choice of the sign of $c_2$ is the only result based on 
the numerical verifications. All other results 
have been justified by using elliptic functions.

It follows from (\ref{eigenfunction-time}) 
with (\ref{Lame-sol-1-fin}) and (\ref{Lame-sol-2-fin}) that the breather  
propagates along the coordinate $\eta = x + c_s t + x_0$ with 
the (left-propagating) wave speed $c_s$ given by 
\begin{equation}
\label{speed}
c_s = c - \frac{\mu}{\nu \kappa}.
\end{equation}
In comparison with the three-parameter family of the elliptic wave 
with the profile $\phi$ in (\ref{form-2}), the solution obtained 
from (\ref{DT}), (\ref{DT-expand}), and (\ref{c1-c2}) has two additional parameters $\zeta \in (0,\zeta_3) \cup (\zeta_2,\zeta_1)$ (which determines uniquely $\gamma \in [0,K] \times [0,iK']$, see Figure \ref{fig-image}) and $x_0 \in \mathbb{R}$. Since $\kappa > 0$ in (\ref{kappa}), it follows from (\ref{Lame-sol-1-fin}) and (\ref{Lame-sol-2-fin}) that the elliptic wave is shifted across the breather with the phase shift in $x$ given by $2 \nu^{-1} (\gamma - \alpha)$.

To illustrate the breather solutions, let us first show the periodic profiles of $\phi$ in Figure \ref{Fig:Profile} for $(\zeta_1,\zeta_2,\zeta_3) = (2,1,0.5)$ (left) and 
$(\zeta_1,\zeta_2,\zeta_3) = (1.25,1,0.5)$ (right). Either solution form  (\ref{form-2}) or  (\ref{gen-theta}) can be used for plotting within the machine precision error since the two different representations (\ref{form-2}) and  (\ref{gen-theta}) are mathematically equivalent. The left panel corresponds to $\beta \in (0,K)$, for which $\phi$ has zeros on $\mathbb{R}$, and the right panel corresponds to $\beta \in i (0,K')$, for which $\phi$ has no zeros on $\mathbb{R}$.

\begin{figure}[htb!]
	\includegraphics[width=0.45\textwidth]{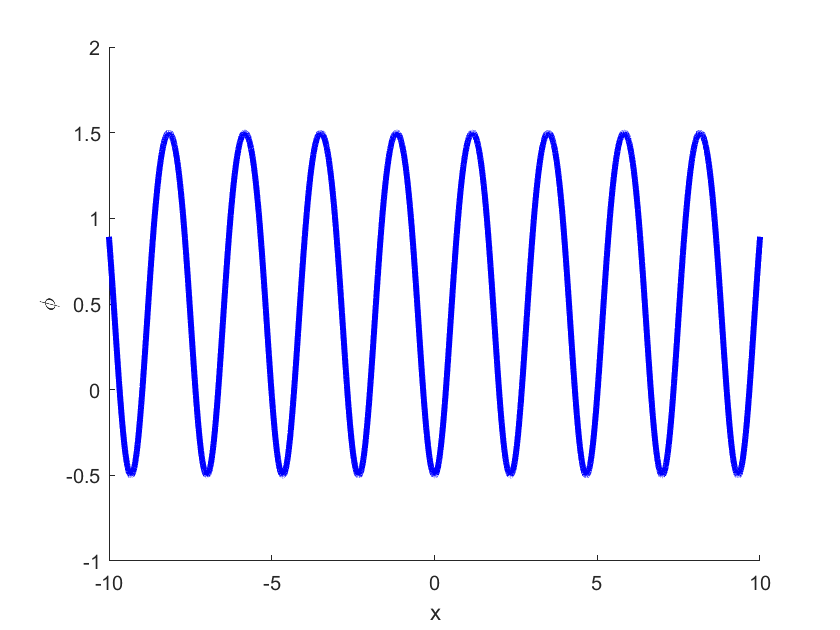}
	\includegraphics[width=0.45\textwidth]{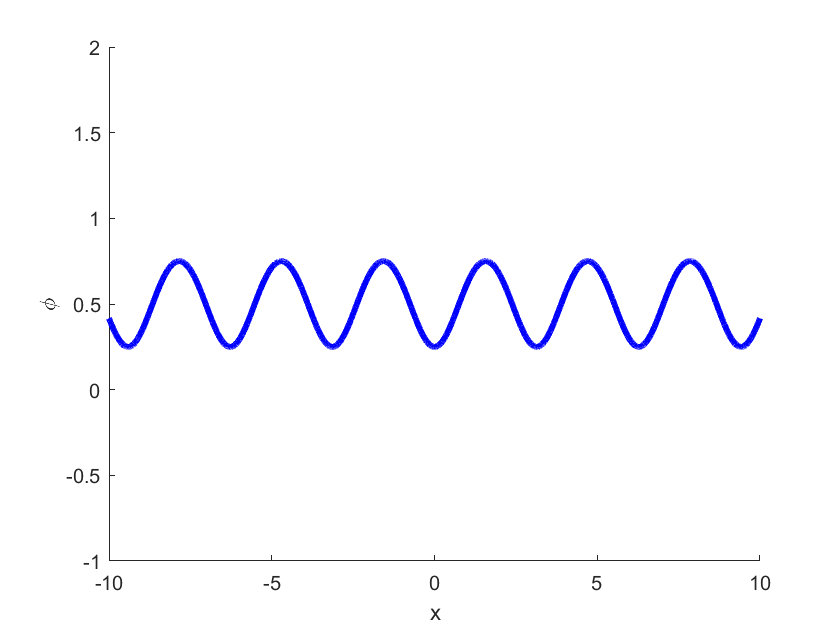}
	\caption{The elliptic function $\phi$ versus $x$ given by either (\ref{form-2}) or (\ref{gen-theta}) for $(\zeta_1,\zeta_2,\zeta_3) = (2,1,0.5)$ (left) and $(\zeta_1,\zeta_2,\zeta_3) = (1.25,1,0.5)$ (right).}
	\label{Fig:Profile}
\end{figure}

The solution $u = u(x,t)$ in (\ref{DT-expand}) for the choice of $\zeta$ in $(0,\zeta_3)$ is shown for $(\zeta_1,\zeta_2,\zeta_3) = (2,1,0.5)$ in Figure \ref{Fig:Breather2} and for $(\zeta_1,\zeta_2,\zeta_3) = (1.25,1,0.5)$ in Figure \ref{Fig:Breather12}. In both cases, the left panels show the snapshot versus $x$ at $t = 0$ and the right panels show the solution surface versus $(\xi,t)$, 
where $\xi = x + ct$. In both cases, we have set $x_0 = 0$ in (\ref{c1-c2}) 
and chosen $\zeta$ from (\ref{dispersion}) with $\gamma = 2 \alpha + 0.1 (K - 2 \alpha) + i K'$. It is obvious from Figures \ref{Fig:Breather2} and \ref{Fig:Breather12} that the solution $u = u(x,t)$ represents a breather propagating on the elliptic wave  background. For the choice of $\zeta$ in $(0,\zeta_3)$, the breather has the elevation (bright) profile and it propagates to the right relative to the traveling wave, which is stationary in the reference frame $\xi = x + ct$. This corresponds to $c_s < c$ in (\ref{speed}). 

In the limit $\gamma \to 2 \alpha + i K'$ (when $\zeta \to 0$), the breather profile becomes wide and the solution degenerates to the kink breather constructed in \cite[Theorem 2]{AP25}. There are two kinks by the symmetry: from negative to positive values of the periodic wave and from positive to negative values of the periodic wave. The kinks also moves to the right relative to the traveling wave, see Figure 4 in \cite{AP25}.

\begin{figure}[htb!]
	\includegraphics[width=0.45\textwidth]{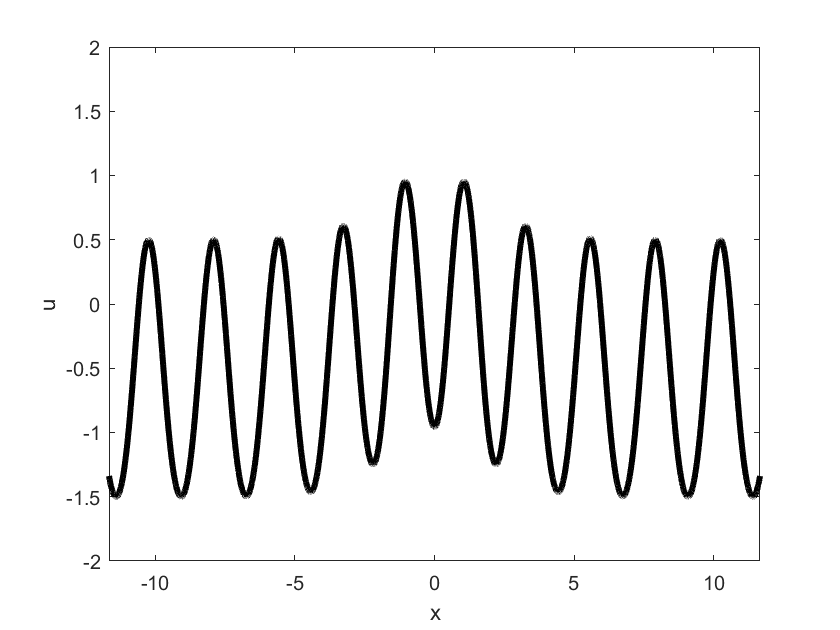}
	\includegraphics[width=0.45\textwidth]{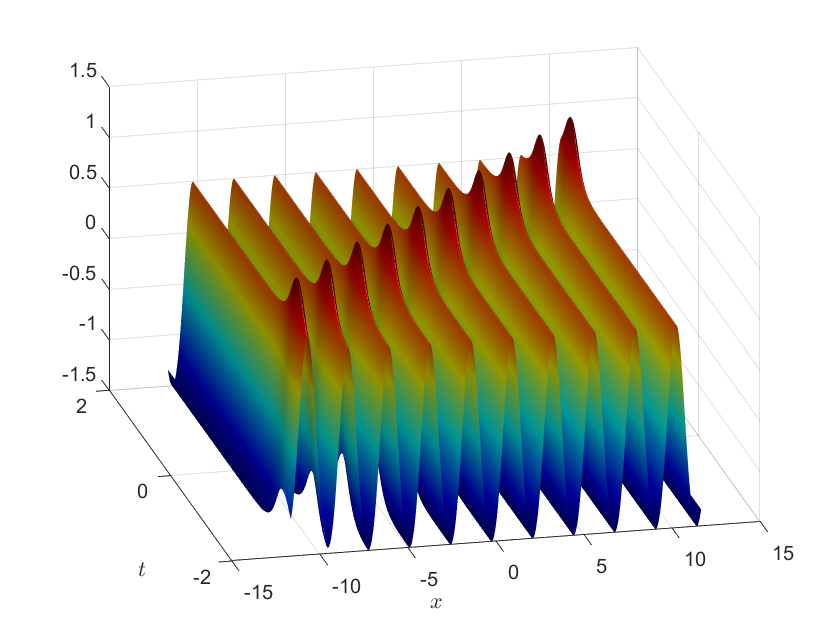}
	\caption{The snapshot versus $x$ for $t = 0$ (left) and the solution surface versus $(\xi,t)$ (right) for the bright breather with 
		$(\zeta_1,\zeta_2,\zeta_3) = (2,1,0.5)$. }
	\label{Fig:Breather2}
\end{figure}

\begin{figure}[htb!]
	\includegraphics[width=0.45\textwidth]{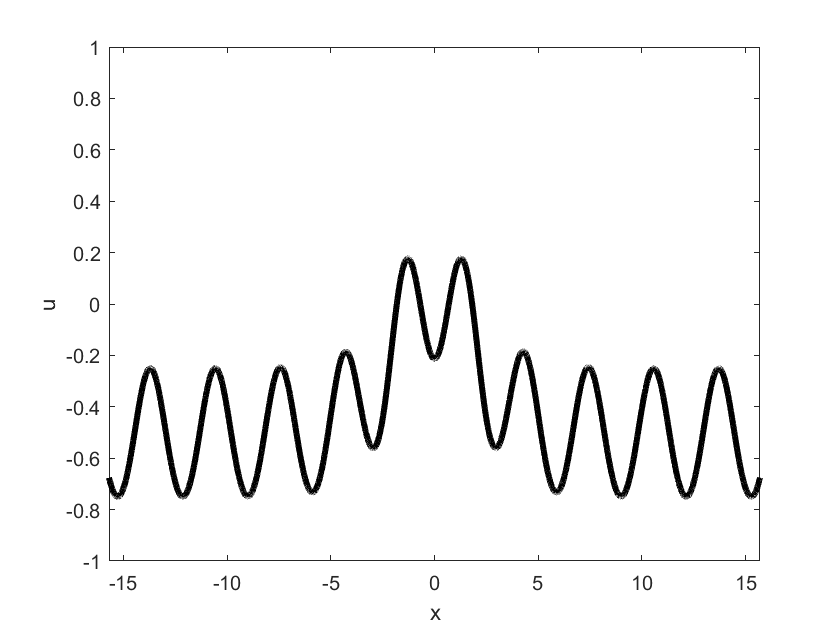}
	\includegraphics[width=0.45\textwidth]{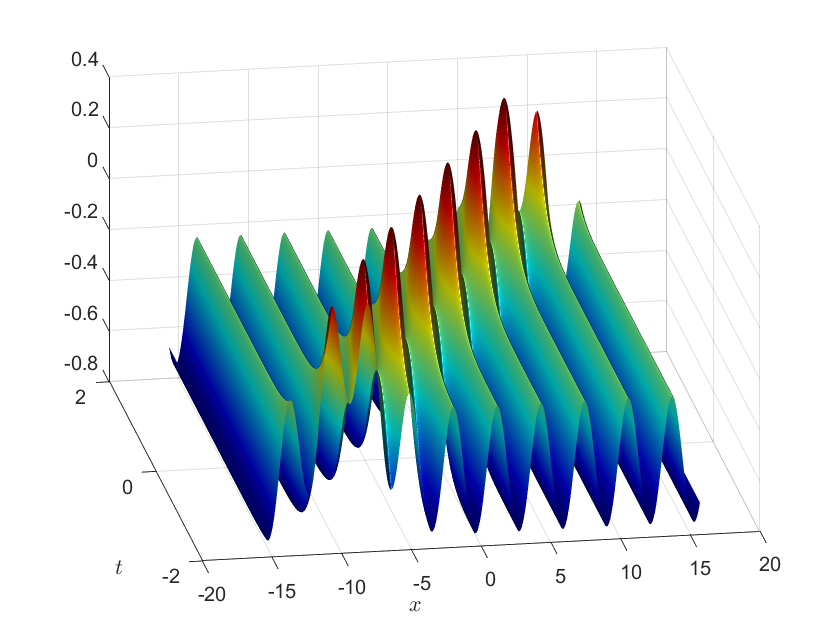}
	\caption{The same as Figure \ref{Fig:Breather2} but for  
		$(\zeta_1,\zeta_2,\zeta_3) = (2,1,0.5)$.}
	\label{Fig:Breather12}
\end{figure}

The solution $u = u(x,t)$ in (\ref{DT-expand}) for the choice of $\zeta$ in $(\zeta_2,\zeta_1)$ is shown for $(\zeta_1,\zeta_2,\zeta_3) = (2,1,0.5)$ in Figure \ref{Fig:Breather2-dark} and for $(\zeta_1,\zeta_2,\zeta_3) = (1.25,1,0.5)$ in Figure \ref{Fig:Breather12-dark}. The spectral parameter $\zeta$ is defined by (\ref{dispersion}) with $\gamma = 0.5 K$. It is again obvious from Figures \ref{Fig:Breather2-dark} and \ref{Fig:Breather12-dark} that the solution $u = u(x,t)$ represents a breather propagating on the elliptic wave  background. For the choice of $\zeta$ in $(\zeta_2,\zeta_1)$, the breather has the depression (dark) profile and it propagates  to the left relative to the traveling wave, which is stationary in the reference frame $\xi = x + ct$. This coresponds to $c_s > c$ in (\ref{speed}). 

In the limit $\zeta_3 \to 0$, the elliptic wave is reduced to the snoidal profile by the Landen transformation \cite[Example 3.2]{AP25}. Dark 
breathers on the snoidal profile were constructed explicitly in \cite[Theorem 1]{MP24}. The dark breathers also move to the left relative to the traveling wave, see Figure 1 in \cite{MP24}.

\begin{figure}[htb!]
	\includegraphics[width=0.45\textwidth]{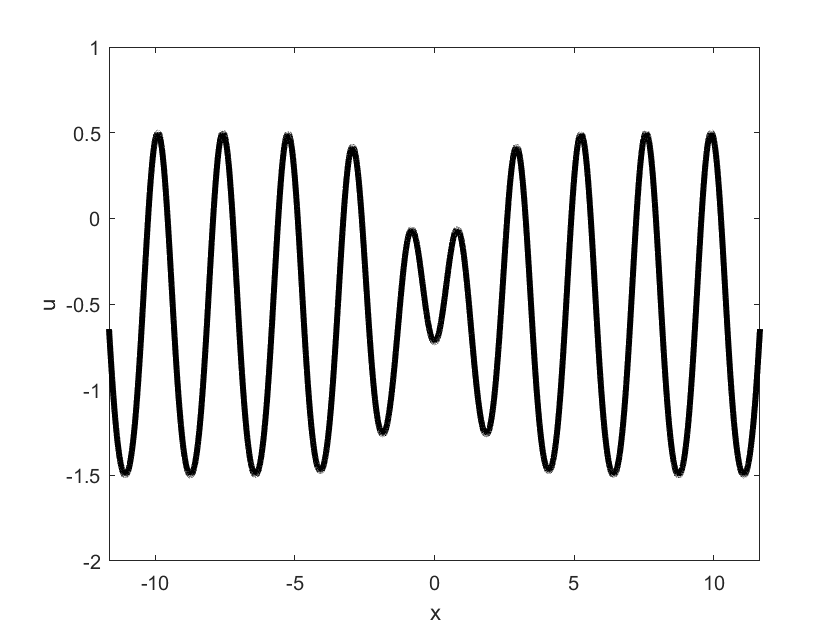}
	\includegraphics[width=0.45\textwidth]{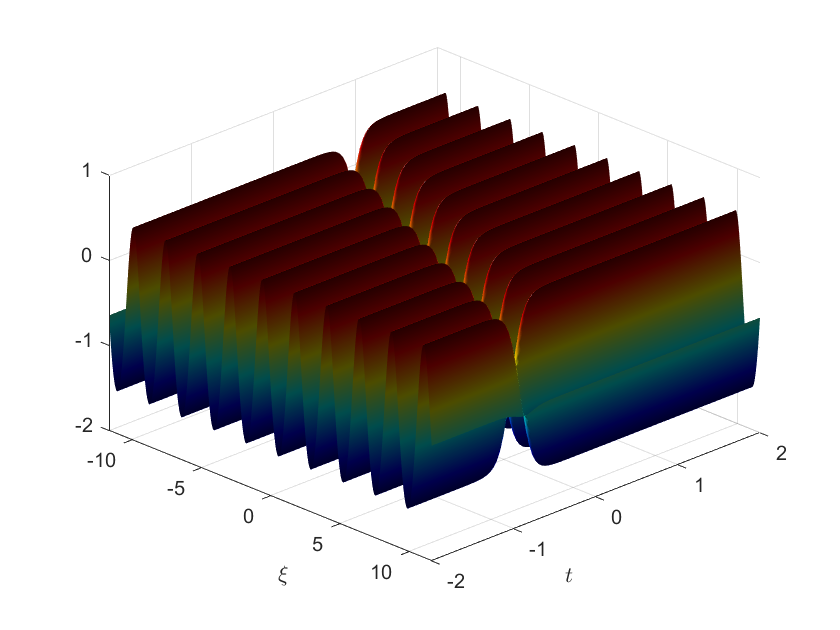}
	\caption{The snapshot versus $x$ for $t = 0$ (left) and the solution surface versus $(\xi,t)$ (right) for the dark breather with 
		$(\zeta_1,\zeta_2,\zeta_3) = (2,1,0.5)$. }
	\label{Fig:Breather2-dark}
\end{figure}

\begin{figure}[htb!]
	\includegraphics[width=0.45\textwidth]{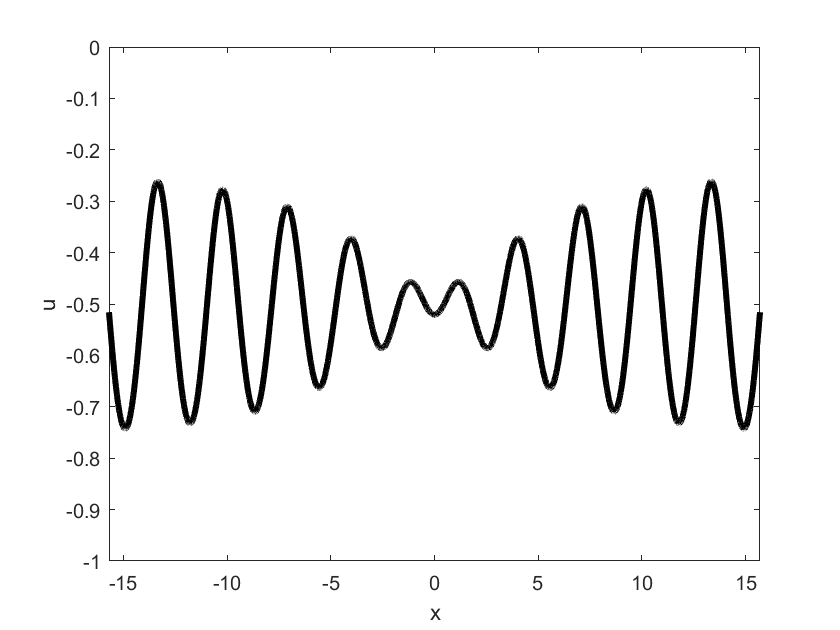}
	\includegraphics[width=0.45\textwidth]{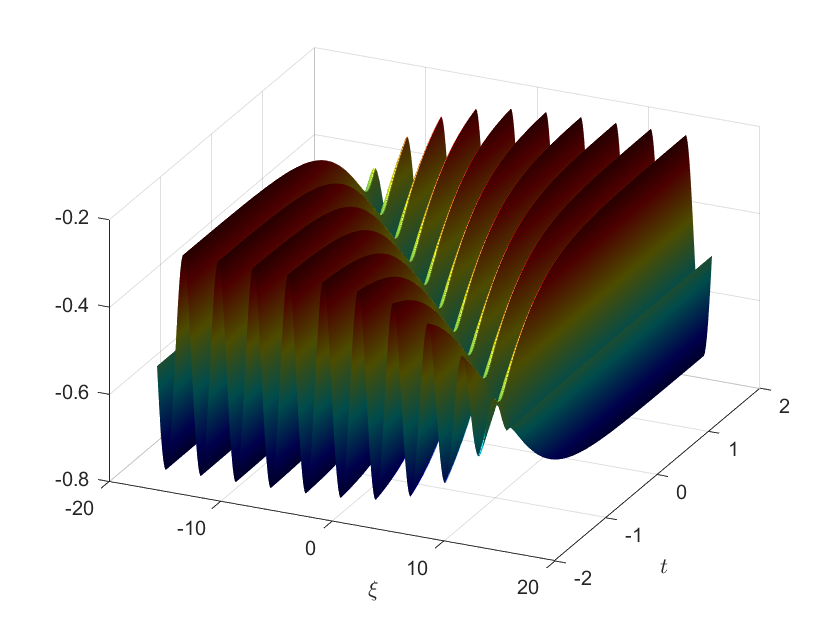}
	\caption{The same as Figure \ref{Fig:Breather2-dark} but for  
		$(\zeta_1,\zeta_2,\zeta_3) = (2,1,0.5)$.}
	\label{Fig:Breather12-dark}
\end{figure}

\subsection{Background of the problem}

Breathers for the Korteweg--de Vries (KdV) equation were first constructed in the pioneering work \cite{KM74} based on the dressing method. Algebro-geometric constructions of breathers were developed in \cite{GS95} with a rather abstract approach based on commutation identities. Another method of constructing such solutions for the KP hierarchy was developed in \cite{BEN20,Nak} with the degeneration of Grassmanians, see also \cite{LiZhang}.
 
Inspired by the recent studies of the soliton gas in \cite{Congy,Girotti,Grava}, the breather solutions on the elliptic wave background appear to be the central objects for interactions of solitons in the limit of infinitely many solitons. Consequently, the same solutions of the KdV equation were comprehensively studied in two independent works \cite{Bertola,HMP23}. It is shown in \cite{HMP23} that breathers with the bright profiles are related to the spectral parameter chosen below the Lax spectrum and breathers with the dark profiles are related to the choice of the spectral parameter in a spectral gap of a finite length. 

The same bright and dark profiles of breathers are found in the nonlocal Benjamin--Ono (BO) equation \cite{ChenPel24}, where the traveling wave is expressed in terms of trigonometric rather than elliptic functions but the Lax spectrum and the choice of the spectral parameter are similar to the case of the KdV equation. For the nonlocal derivative nonlinear Schr\"{o}dinger NLS equation, it is shown in \cite{ChenPel25} that only dark profiles exist in the defocusing case, whereas both bright and dark profiles coexist in the focusing case. Similarly to the BO equation, the traveling wave is still expressed by trigonometric functions and has a very similar Lax spectrum but the spectral parameter for the breathers can be chosen inside the continuous spectrum, contrary to the cases of KdV and BO equations. 

For the defocusing cubic NLS equation, dark profiles of breathers 
were constructed for the snoidal elliptic wave in  \cite{Shin,Takahashi}, see also \cite{Ling}. It was found recently in \cite{Sun} 
that both bright and dark profiles of breathers exist for the elliptic 
wave with a non-trivial phase, which is still expressed by the genus-one elliptic 
functions.  The elliptic wave is modulationally unstable for the focusing cubic NLS equation and formations of rogue waves and time-decaying and space-periodic breathers complicate dynamics of perturbations \cite{Biondini2,CPW,Feng}.

For the defocusing mKdV equation, the results of \cite{AP25} and this study shows that breathers propagating on the general elliptic wave may have bright, dark, and kink profiles. This is very different to the only dark profiles found on the snoidal elliptic wave in the symmetric case \cite{MP24}.
For the focusing mKdV equation, one family of traveling waves (generalizing the dnoidal profiles) is modulationally stable and the other family of traveling waves (generalizing the cnoidal profiles) is modulationally unstable, see \cite{Cui} and the references therein. Consequently, bright breathers were observed on the dnoidal profiles \cite{Grava,Minakov} and rogue waves were constructed on the cnoidal 
profiles \cite{CP2018,CP2019}, but the comprehensive study of 
breathers in the focusing mKdV equation is still open.

For the focusing mKdV equation, stability of the elliptic waves and breathers on their background were only developed in the symmetric case \cite{LingSun,LingSun2}. The main obstacle has been again 
the explicit characterization of the eigenfunctions of the Lax operators. Only 
recently in \cite{LingSun3}, the eigenfunctions are constructed by degeneration of the Riemann theta functions of genus two, very similar to the general elliptic theory in \cite{FayBook}. One of the possible application of our work is to construct the explicit eigenfunctions of the Lax operators for the focusing mKdV equation in terms of the elliptic functions. We anticipate that 
the eigenfunctions can be obtained in a similar form to the one in Theorem \ref{th-main} but the challenges arise in analysis of the pre-image of the mapping $[0,K]  \times [0,iK'] \ni \gamma \to \zeta \in \sigma_L$ since the Lax spectrum 
$\sigma_L$ is no longer a subset of $\R$.

Finally, we mention another form used to characterize eigenfunctions 
of the Lax operators related to the elliptic degeneration of the Riemann theta functions of genus two \cite{Geng,Geng3}. The elliptic functions can be written in the factorized form given by as quotient of products of Jacobi theta functions, 
similarly to the forms (\ref{gen-theta}) and (\ref{gen-theta-degenerate}). The factorized form for the eigenfunctions of the Lax system (\ref{LS}) are expected to be useful in the construction of breathers on the general elliptic wave by using the transformation formula (\ref{DT}), as it was the case for the snoidal wave in \cite{MP24}. However, we show in Section \ref{sec-5} below that the poles and zeros of the elliptic eigenfunctions in the factorized form depend on the spectral parameter via the branch point singularities, and therefore, the factorized form is not as explicit as the representation of eigenfunctions obtained in Theorem \ref{th-main}.

\subsection{Methods and organization of the paper}

Theorem \ref{th-main} is proven in Sections \ref{sec-2} and \ref{sec-3}. 
The main idea for the proof comes from an application of the Miura transformation 
\begin{equation}
\label{Miura-time}
v(x,t) = u^2(x,t) \pm u_x(x,t),
\end{equation}
which relates solutions $u(x,t)$ of the mKdV equation (\ref{mkdv}) to solutions  $v(x,t)$ of the KdV equation $v_t + 6 v v_x + v_{xxx} = 0$. The Miura 
transformation can be used to map the spectral problem (\ref{spectral-problem}) to 
the stationary Schr\"{o}dinger equation. By using Weierstrass' elliptic functions, we show that the Schr\"{o}dinger equation for the general elliptic wave (\ref{form-2}) becomes Lam\'{e} equation with a single-gap elliptic potential, solutions of which are well-known, see \cite[p.395]{Ince} and \cite[Section 3]{GW95}. This yields the representation of the eigenfunctions of the spectral problem 
(\ref{spectral-problem}) as a linear combination of two solutions of the 
Schr\"{o}dinger equation. The coefficients of this linear combination are 
uniquely computed by using expansions of elliptic functions near poles in the complex plane. This is achieved in Section \ref{sec-2} with the complex parameter $\gamma \in [0,K] \times [0,iK']$ to represent the eigenfunctions in the form (\ref{Lame-sol-1-fin}) and (\ref{Lame-sol-2-fin}). 

The pre-image of the mapping $[0,K] \times [0,iK'] \ni \gamma \to \zeta \in [0,\infty)$ shown in Figure \ref{fig-image} is computed in Section \ref{sec-3}, where we also compute the time-dependent form (\ref{eigenfunction-time}) of eigenfunctions. The mapping $[0,K] \times [0,iK'] \ni \gamma \to \mu \in \mathbb{C}$ is obtained in the explicit form (\ref{omega}) also by using 
relations to Weierstrass' elliptic functions and computation at a single point $x_0 \in \mathbb{C}$. We also show that $\mu \in \mathbb{R}$ if $\zeta \in (-\zeta_1,-\zeta_2) \cup (-\zeta_3,\zeta_3) \cup (\zeta_2,\zeta_1)$, which are 
the three gaps of the Lax spectrum (\ref{Lax-spectrum}) associated with the elliptic wave. 

Breathers in the form (\ref{DT-expand}) are studied in Section \ref{sec-4}, 
where we justify the conditions (\ref{condition-1}), (\ref{condition-2}), and (\ref{c1-c2}) for the breather solution to be 
real-valued and bounded. For the positive half-gap $(0,\zeta_3)$, where bright breathers of Figures \ref{Fig:Breather2} and \ref{Fig:Breather12} are constructed, we only need to rewrite the eigenfunctions of Theorem \ref{th-main} 
for $\gamma = iK' + \delta$ with $\delta \in (2\alpha,K)$ and choose the norming constants appropriately. However, for the positive gap $(\zeta_2,\zeta_1)$, where dark breathers of Figures \ref{Fig:Breather2-dark} and \ref{Fig:Breather12-dark} are constructed, we use $\gamma \in (0,K)$ but change $x$ to $iK' + x$. This transformation maps the bounded profile $\phi$ of the elliptic wave in (\ref{form-2}) to the singular profile but also maps the singular profile of the breather solution $u = u(x,t)$ in (\ref{DT-expand}) into a bounded profile.

Finally, in Section \ref{sec-5}, we discuss the factorized form for elliptic eigenfunctions of the Lax system (\ref{LS}) associated with the elliptic wave. We show that the poles and zeros of the elliptic eigenfunctions depend on the spectral parameter via the branch pole singularities, which cannot be unfolded with the use of elliptic functions. 

\subsection{Notations for elliptic functions}

We use Jacobi's theta functions:
 \begin{align*}
\left\{ \begin{array}{l} 
\theta_1(y) = 2 \sum\limits_{n=1}^{\infty} (-1)^{n-1} q^{(n-\frac{1}{2})^2} \sin(2n-1) y, \\
\theta_4(y) = 1 + 2 \sum\limits_{n=1}^{\infty} (-1)^{n} q^{n^2} \cos 2n y,
\end{array} \right.
\end{align*}
where $q := e^{-\frac{\pi K'}{K}}$ with $K$ and $K'$ being the complete elliptic integrals for elliptic modulus $k$ and $k' = \sqrt{1-k^2}$, respectively. For notational convenience, we use 
\begin{equation}
\label{Jacobi-theta}
H(x) = \theta_1(y), \quad \Theta(x) = \theta_4(y), \quad \mbox{\rm with} \quad y = \frac{\pi x}{2 K}.
\end{equation}
Jacobi's theta functions (\ref{Jacobi-theta}) are related to the elliptic function $\sn$, $\cn$, and $\dn$ due to
\begin{equation}
\label{Jacobi-sn}
\sn(x) = \frac{H(x)}{\sqrt{k} \Theta(x)}, \quad \cn(x) =  \frac{\sqrt{k'} H(x+K)}{\sqrt{k} \Theta(x)}, \quad \dn(x) = \frac{\sqrt{k'} \Theta(x + K)}{\Theta(x)}.
\end{equation}
These elliptic functions are related by the fundamental relations 
\begin{equation}
\label{fund-rel}
\sn^2(x) + \cn^2(x) = 1, \quad \dn^2(x) + k^2 \sn^2(x) = 1, \quad \dn^2(x) - k^2 \cn^2(x) = (k')^2.
\end{equation}
Functions $H$, $\sn$, and $\cn$ are $2K$-antiperiodic and $2iK'$-periodic, whereas functions $\Theta$ and $\dn$ are $2K$-periodic and $2iK'$-periodic. Further properties of elliptic functions can be consulted in \cite{Lawden}.

\section{Eigenfunctions of the spectral problem (\ref{spectral-problem})}
\label{sec-2}

We derive here the explicit representations (\ref{Lame-sol-1-fin}) and (\ref{Lame-sol-2-fin}) of elliptic eigenfunctions in Theorem \ref{th-main}.  
To do so, we first give the relation between the elliptic wave (\ref{form-2}) and Weierstrass' elliptic functions (Section \ref{sec-2-1}), which is 
used in the derivation. Then, we proceed with the Miura transformation (\ref{Miura-time}) to derive the explicit solutions of the Lam\'{e} equation with a single-gap 
elliptic potential (Section \ref{sec-2-2}). Finally, we obtain the unique 
expressions for the linear superpositions of explicit solutions 
of the Lam\'{e} equation from expansion of elliptic functions near poles 
in the complex plane (Section \ref{sec-2-3}).

\subsection{Relation to Weierstrass' elliptic function}
\label{sec-2-1}

We recall from \cite{A1990,W}, reviewed in \cite{AP25}, that the elliptic wave (\ref{form-2}) is related to Weierstrass' elliptic function 
\begin{equation}
\label{rel-Jac-Wei}
\wp(x) = e_3 + \frac{\nu^2}{\sn^2(\nu x,k)}, \quad \nu = \sqrt{e_1-e_3}, \quad k = \sqrt{\frac{e_2-e_3}{e_1-e_3}},
\end{equation}
where $e_3 < e_2 < e_1$ are given by the relations
\begin{equation}
\label{rel-par-e-zeta}
\left\{ \begin{array}{l} 
\displaystyle e_1 = \frac{1}{3} (\zeta_1^2 + \zeta_2^2 - 2 \zeta_3^2), \vspace{0.25cm} \\
\displaystyle e_2 = \frac{1}{3} (\zeta_1^2 - 2 \zeta_2^2 + \zeta_3^2), 
\vspace{0.25cm} \\
\displaystyle e_3 = \frac{1}{3} (-2 \zeta_1^2 + \zeta_2^2 + \zeta_3^2),
\end{array} \right.
\quad \Rightarrow \quad 
\left\{ \begin{array}{l} 
e_1 - e_2 = \zeta_2^2 - \zeta_3^2,
\vspace{0.1cm} \\
e_1 - e_3 = \zeta_1^2 - \zeta_3^2, 
\vspace{0.1cm}\\
e_2 - e_3 = \zeta_1^2 - \zeta_2^2.
\end{array} \right.
\end{equation}
Weierstrass' function $\wp$ is periodic with periods $2\omega$ and $2 \omega'$, where 
\begin{equation}
\label{periods}
\omega = \frac{K}{\sqrt{e_1 - e_3}}, \quad 
\omega' = \frac{i K'}{\sqrt{e_1 - e_3}}.
\end{equation}
By \cite[Lemma 3.4]{AP25}, there exists $v \in [-\omega,\omega] \times [-\omega',\omega']$ such that   
\begin{equation}\label{pcW}
\frac{c}{6}=\wp(v), \quad \frac{b}{2}=\wp'(v).
\end{equation}
By using $v \in [-\omega,\omega] \times [-\omega',\omega']$, the elliptic functions $\phi$, $\phi'$, and $\phi^2$ are related to Weierstrass' function 
$\wp$ as follows:
\begin{equation}
\label{phinova}
\phi(x) = \frac{1}{2}\frac{\wp'(x-\frac{v}{2})+\wp'(x+\frac{v}{2})}{\wp(x-\frac{v}{2})-\wp(x + \frac{v}{2})} = \zeta\left(x+\frac{v}{2}\right)-\zeta\left(x-\frac{v}{2}\right)-\zeta(v),
\end{equation}
\begin{equation}
\label{philinha}
\phi' (x) = \wp\left(x-\frac{v}{2} \right) - \wp\left(x+\frac{v}{2}\right),
\end{equation}
and
\begin{equation}
\label{form-3-squared}
\phi^2(x) = \wp\left(x + \frac{v}{2}\right) + \wp\left(x - \frac{v}{2}\right) + \wp(v).
\end{equation}
where $\zeta$ is Weierstrass' zeta function. Moreover, by \cite[Lemma 3.6]{AP25},
we have the correspondence between $v \in [-\omega,\omega] \times [-\omega',\omega']$ in (\ref{pcW}) and $\alpha \in (0,K)$ in (\ref{parameters-nu-k-alpha}) given by 
\begin{equation}
\label{v-alpha-correspondence}
\frac{v}{2} = -\frac{iK' + \alpha}{\sqrt{e_1-e_3}}.
\end{equation}

\subsection{Miura transformation}
\label{sec-2-2}

We consider here the spectral problem (\ref{spectral-problem}) with $u(x,t) = \phi(x+ct)$ and use $x$ in place of $x + ct$. Squaring the spectral problem (\ref{spectral-problem}) yields 
\begin{equation}
\label{sp-1}
\left( \begin{matrix} -\partial_x^2 + \phi^2 & \phi' \\
\phi' & -\partial_x^2 + \phi^2 \end{matrix} \right) \varphi = \zeta^2 \varphi,
\end{equation}
where  $\varphi = (p,q)^T $ is the eigenfunction and $\zeta \in \mathbb{R}$ is the spectral parameter. By folding (\ref{sp-1}) in variables $\psi_{\pm} = p \pm q$,  we get two uncoupled Schr\"{o}dinger equations 
\begin{equation}
\label{sp-2} 
\left( - \partial_x^2 + \phi^2 \pm \phi' \right) \psi_{\pm} = \zeta^2 \psi_{\pm}.
\end{equation}
Explicit solutions of the Schr\"{o}dinger equations (\ref{sp-2}) are described in the following lemma. 

\begin{lemma}
	\label{lem-gamma}
	Let $\gamma \in [0,K] \times [0,iK']$ be defined from the spectral parameter $\zeta \in \mathbb{R}$ by the dispersion relation
	\begin{equation}
	\label{Lame-disp}
	\zeta^2 = \zeta_1^2 - (\zeta_1^2 - \zeta_2^2) \sn^2(\gamma) = 
	\zeta_2^2 + (\zeta_1^2 - \zeta_2^2) \cn^2(\gamma) = 
	\zeta_3^2 + (\zeta_1^2 - \zeta_3^2) \dn^2(\gamma).
	\end{equation}
	Assume that $\gamma \neq \{ 0, K, iK', K+iK'\}$.
	General solutions to the Schr\"{o}dinger equations (\ref{sp-2}) are given by 
	\begin{equation}
	\label{Lame-sol}
	\psi_{\pm}(x) = 2 c_1^{\pm} \frac{H(\nu x + \gamma \pm \alpha)}{\Theta(\nu x \pm \alpha)} e^{-\nu x Z(\gamma)} + 
	2 c_2^{\pm} \frac{H(\nu x -\gamma \pm \alpha)}{\Theta(\nu x \pm \alpha)} e^{\nu x Z(\gamma)},
	\end{equation}
	where $c_1^{\pm},c_2^{\pm} \in \mathbb{C}$ are arbitrary coefficients and the factor $2$ is used for convenience.

\end{lemma}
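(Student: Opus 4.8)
The plan is to reduce each of the two scalar equations (\ref{sp-2}) to the classical single-gap ($n=1$) Lam\'{e} equation, whose two solutions are standard, and then rewrite them in the Jacobi theta-notation of (\ref{Lame-sol}). First I would add and subtract the Weierstrass representations (\ref{philinha}) and (\ref{form-3-squared}) to obtain
\[
\phi^2 \pm \phi' = 2\wp\Bigl(x \mp \tfrac{v}{2}\Bigr) + \wp(v),
\]
so that, under the substitution $y = x - v/2$ for $\psi_+$ (resp.\ $y = x + v/2$ for $\psi_-$), each equation in (\ref{sp-2}) becomes the Lam\'{e} equation
\[
-\Psi'' + 2\wp(y)\,\Psi = E\,\Psi, \qquad E = \zeta^2 - \wp(v);
\]
equivalently, by (\ref{rel-Jac-Wei}), (\ref{v-alpha-correspondence}) and $\sn(z+iK')=1/(k\sn z)$, this is the $n=1$ Lam\'{e} equation in Jacobi form, $-w_{\xi\xi} + 2k^2\sn^2(\xi)\,w = \lambda w$ with $\xi = \nu x \pm \alpha$. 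I would then quote the classical solution (see \cite[p.\,395]{Ince}, \cite[Sec.\,3]{GW95}): $-\Psi''+2\wp(y)\Psi = E\Psi$ has the two linearly independent solutions $\Psi_{\pm}(y) = \sigma(y\pm a)\,\sigma(y)^{-1}\,e^{\mp\zeta(a)y}$, where $\sigma,\zeta$ denote the Weierstrass functions of (\ref{periods}) and the parameter $a$ is determined by $\wp(a) = -E = \wp(v)-\zeta^2$; the sign is fixed by differentiating $\log\Psi$, using $\zeta'=-\wp$ and the addition law $[\zeta(y+a)-\zeta(y)-\zeta(a)]^2 = \wp(y+a)+\wp(y)+\wp(a)$, which gives $\Psi''/\Psi = 2\wp(y)+\wp(a)$.

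The remaining work is to pass to Jacobi form. Using $\sigma(y) = c_0\,e^{\eta_1 y^2/(2\omega)}H(\nu y)$ and $\zeta(a) = (\eta_1/\omega)a + \nu H'(\nu a)/H(\nu a)$, the Gaussian prefactors cancel in the ratio $\sigma(y\pm a)/\sigma(y)$, leaving $\Psi_{\pm}(y) \propto H(\nu y\pm\nu a)\,H(\nu y)^{-1}\,e^{\mp\nu y\,H'(\nu a)/H(\nu a)}$. Now I insert the shift $\nu y = \nu x + iK' + \alpha$ for $\psi_+$ (resp.\ $\nu x - iK' - \alpha$ for $\psi_-$), which comes from $v/2 = -(iK'+\alpha)/\nu$ in (\ref{v-alpha-correspondence}), and fix the period-lattice representative of $a$ so that the numerator argument $\nu y\pm\nu a$ collapses to $\nu x\pm\gamma\pm\alpha$; every such representative has $\wp(a) = \wp(\gamma/\nu + \omega') = e_3 + \nu^2 k^2\sn^2\gamma$ (from $\wp(u+\omega')-e_3 = (e_1-e_3)(e_2-e_3)/(\wp(u)-e_3)$ together with (\ref{rel-Jac-Wei})). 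Hence the constraint $\wp(a) = \wp(v)-\zeta^2$, upon inserting $\wp(v)=c/6$ from (\ref{pcW}) and the parameterization (\ref{parameterization}), (\ref{rel-par-e-zeta}), becomes exactly the first equality in the dispersion relation (\ref{Lame-disp}), and the remaining two equalities follow from (\ref{fund-rel}). Finally, the shift formulas (\ref{Wei-Jac-3}) turn $H(\nu y)$ into a constant times $\Theta(\nu x\pm\alpha)$, turn $H(\nu y\pm\nu a)$ into a constant times $H(\nu x\pm\gamma\pm\alpha)$, and give $H'(\gamma\pm iK')/H(\gamma\pm iK') = \mp i\pi/(2K)+Z(\gamma)$; the stray factors $e^{\pm i\pi\nu x/(2K)}$ produced along the way cancel identically, and what is left is precisely the pair of functions in (\ref{Lame-sol}). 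Since (\ref{sp-2}) is second order and linear, their linear span is the general solution, and for $\gamma\notin\{0,K,iK',K+iK'\}$ --- precisely the values corresponding to $\zeta\in\{\zeta_1,\zeta_2,\zeta_3\}$ (the band edges), where the two solutions in (\ref{Lame-sol}) degenerate into one, and to $\gamma=iK'$ (i.e.\ $\zeta\to\infty$) where $Z(\gamma)$ is singular --- the two solutions in (\ref{Lame-sol}) are linearly independent.

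The one genuine difficulty I anticipate is the bookkeeping of half-period translations: one must choose the lattice representative of $a$ coherently for the two independent solutions and for the $\psi_+$ versus $\psi_-$ equations, and track every exponential prefactor created both by the $y^2$-Gaussian in $\sigma$ and by the rules (\ref{Wei-Jac-3}), so as to check that they all cancel and the answer is (\ref{Lame-sol}) on the nose rather than (\ref{Lame-sol}) times a residual exponential. Since both sides of each identity are meromorphic in $\gamma$, the verification can be carried out for $\gamma$ in a subinterval and extended by analyticity; everything else --- the reduction to the Lam\'{e} equation, the classical solution, and the passage from $\wp(a)=\wp(v)-\zeta^2$ to (\ref{Lame-disp}) --- is routine.
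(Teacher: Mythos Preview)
Your proposal is correct and follows the same high-level strategy as the paper: reduce (\ref{sp-2}) to the single-gap Lam\'{e} equation via the Weierstrass identity $\phi^2\pm\phi'=2\wp(x\mp v/2)+\wp(v)$, quote the classical solution, and match the dispersion relation. The difference is in the order of operations. The paper converts the \emph{potential} to Jacobi form first: using (\ref{rel-2}) and (\ref{v-alpha-correspondence}) it writes $\wp(x\mp v/2)$ as $e_3+(\zeta_1^2-\zeta_2^2)\sn^2(\nu x\pm\alpha)$, so that (\ref{sp-2}) becomes literally $-\varphi''(z)+2k^2\sn^2(z)\varphi=\eta\varphi$ with $z=\nu x\pm\alpha$ and $\eta=1+(\zeta^2-\zeta_2^2)/(\zeta_1^2-\zeta_3^2)$; it then quotes the Jacobi-form Lam\'{e} solution $H(z\pm\gamma)/\Theta(z)\,e^{\mp zZ(\gamma)}$ from \cite[p.\,395]{Ince} directly, and reads off (\ref{Lame-disp}) from $\eta=k^2+\dn^2(\gamma)$.

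Your route instead quotes the Weierstrass $\sigma$-form of the solution and converts \emph{afterwards}. This is perfectly valid, but it front-loads exactly the difficulty you flag: the $\sigma\to H$ passage generates Gaussian prefactors and $iK'$-shifts that must be tracked through two layers (first $y\to x$ via $v/2=-(iK'+\alpha)/\nu$, then $a\to\gamma$ via $\nu a=\gamma\pm iK'$), separately for $\psi_+$ and $\psi_-$. The paper's ordering sidesteps all of this: once the equation is in Jacobi form with argument $z=\nu x\pm\alpha$, the solution (\ref{Lame-sol}) is a direct transcription with no cancellations to verify. If you want to shorten your write-up, adopting that ordering would let you delete the entire $\sigma$-bookkeeping paragraph.
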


\begin{proof}
By using (\ref{philinha}) and (\ref{form-3-squared}), we obtain 
\begin{align}
\label{Miura}
\phi^2(x) \pm \phi'(x) = \wp(v) + 2 \wp\left(x \mp \frac{v}{2} \right).
\end{align}
In view of (\ref{v-alpha-correspondence}), we obtain from (\ref{rel-Jac-Wei}) and (\ref{rel-par-e-zeta}) that 
\begin{align}
\wp\left(x \mp \frac{v}{2} \right) &= e_3 + \frac{e_1-e_3}{\sn^2(\sqrt{e_1-e_3}(x \mp \frac{v}{2}))} \notag \\
&= \frac{1}{3} (-2\zeta_1^2 + \zeta_2^2 + \zeta_3^2) + \frac{\zeta_1^2 - \zeta_3^2}{\sn^2(\nu x \pm i K' \pm \alpha)} \notag \\
&= \frac{1}{3} (-2\zeta_1^2 + \zeta_2^2 + \zeta_3^2) + (\zeta_1^2 - \zeta_2^2) \sn^2(\nu x \pm \alpha), \label{rel-1}
\end{align}
where we have used (\ref{rel-2}) as well as the definition of $\nu$ and $k$ in (\ref{parameters-nu-k}). Since 
\begin{equation}
\label{rel-3}
\wp(v) = \frac{c}{6} = \frac{1}{3} (\zeta_1^2 + \zeta_2^2 + \zeta_3^2),
\end{equation}
we finally obtain from (\ref{sp-2}), (\ref{Miura}), and (\ref{rel-1}):
\begin{equation}
\label{sp-3} 
\left( - \partial_x^2 -2 (\zeta_1^2-\zeta_2^2) \cn^2(\nu x \pm \alpha)  \right) \psi_{\pm} = (\zeta^2 - \zeta_1^2 + \zeta_2^2 - \zeta_3^2) \psi_{\pm}.
\end{equation}
Recall from \cite[p.395]{Ince} that the Lam\'{e} equation 
\begin{equation}
\label{Lame-eq}
-\varphi''(z) + 2 k^2 \sn^2(z) \varphi(z) = \eta \varphi(z) 
\end{equation}
admits for $\eta \neq \{ 1,k^2,1+k^2,\infty \}$ a general solution in the form: 
\begin{equation*}
\varphi(z) = c_1 \frac{H(z + \gamma)}{\Theta(z)} e^{-z Z(\gamma)} + 
c_2 \frac{H(z-\gamma)}{\Theta(z)} e^{z Z(\gamma)}, \quad Z(\gamma) := \frac{\Theta'(\gamma)}{\Theta(\gamma)}, 
\end{equation*}
where $c_1,c_2 \in \mathbb{C}$ are arbitrary coefficients and 
$\gamma \in [0,K] \times [0,iK']$ is defined from the spectral parameter $\eta \in \mathbb{R}$ by the dispersion relation $\eta = k^2 + \dn^2(\gamma)$. 
If $\eta \neq \{ 1,k^2,1+k^2,\infty \}$, then $\gamma \neq \{ 0, K, iK', K+iK'\}$.
Since $z = \nu x \pm \alpha$,  we obtain (\ref{Lame-disp}) and (\ref{Lame-sol}) by comparing (\ref{sp-3}) with (\ref{Lame-eq}) and using (\ref{parameters-nu-k}) for $$
\eta = 1 + \frac{\zeta^2 - \zeta_2^2}{\zeta_1^2 - \zeta_3^2}.
$$ 
The three relations in (\ref{Lame-disp}) are equivalent due to 
the fundamental relations for elliptic functions (\ref{fund-rel}).
\end{proof}

\subsection{Coefficients of the linear superposition}
\label{sec-2-3}

Since $\psi_{\pm} = p \pm q$, we get two linearly independent 
solutions $\varphi = (p_1,q_1)$ and $\varphi = (p_2,q_2)$ of the spectral problem 
(\ref{spectral-problem}) from solutions (\ref{Lame-sol}) by using the inverse transformation 
	$$
	p = \frac{\psi_+ + \psi_-}{2}, \quad q = \frac{\psi_+ - \psi_-}{2}.
	$$
Separating the two solutions in (\ref{Lame-sol}) according to the two exponential factors, we obtain the two linearly independent solutions in the form:
\begin{equation}
\label{Lame-sol-1}
\left( \begin{matrix} p_1(x) \\ q_1(x) \end{matrix} \right) = 
\left( \begin{matrix} 
c_1^+ \frac{ H(\nu x + \gamma + \alpha)}{\Theta(\nu x + \alpha)} + c_1^- \frac{H(\nu x + \gamma - \alpha)}{\Theta(\nu x - \alpha)} \\
c_1^+ \frac{ H(\nu x + \gamma + \alpha)}{\Theta(\nu x + \alpha)} - c_1^- \frac{H(\nu x + \gamma - \alpha)}{\Theta(\nu x - \alpha)} \end{matrix} 
\right)  e^{-\nu x Z(\gamma)}
\end{equation}
and 
\begin{equation}
\label{Lame-sol-2}
\left( \begin{matrix} p_2(x) \\ q_2(x) \end{matrix} \right) = 
\left( \begin{matrix} c_2^+ \frac{ H(\nu x - \gamma + \alpha)}{\Theta(\nu x + \alpha)} + c_2^- \frac{H(\nu x - \gamma - \alpha)}{\Theta(\nu x - \alpha)} \\
c_2^+ \frac{ H(\nu x - \gamma + \alpha)}{\Theta(\nu x + \alpha)} - c_2^- \frac{H(\nu x - \gamma - \alpha)}{\Theta(\nu x - \alpha)}  \end{matrix} \right) 
e^{\nu x Z(\gamma)}. 
\end{equation}
Since there exist only two linearly-independent solutions of the spectral problem (\ref{spectral-problem}), each set $(c_1^+,c_1^-)$ and $(c_2^+,c_2^-)$ cannot be arbitrary but must be linearly dependent.

\begin{example}
If $\zeta = 0$, we have $\gamma = 2 \alpha + i K'$, see Figure \ref{fig-image}.
Relations (\ref{Wei-Jac-3}) imply that 
$$
\frac{\Theta'(\gamma)}{\Theta(\gamma)} = \frac{\Theta'(2 \alpha + i K')}{\Theta(2 \alpha + i K')} = -\frac{i \pi}{2K} + \frac{H'(2\alpha)}{H(2\alpha)}.
$$
The expression (\ref{Lame-sol-1}) with $\gamma = 2 \alpha + i K'$ yields (\ref{exact-2})  if $c_1^+ = 0$ since 
$$
\frac{H(\nu x + \gamma - \alpha)}{\Theta(\nu x - \alpha)} e^{-\nu x \frac{\Theta'(\gamma)}{\Theta(\gamma)}} = 
\frac{H(\nu x + \alpha + iK')}{\Theta(\nu x - \alpha)} e^{-\nu x \frac{\Theta'(2 \alpha + i K')}{\Theta(2 \alpha + i K')}} = 
i e^{\frac{\pi K'}{4 K}} \frac{\Theta(\nu x + \alpha)}{\Theta(\nu x - \alpha)}
e^{-\nu x \frac{H'(2\alpha)}{H(2\alpha)}}.
$$
The expression (\ref{Lame-sol-2}) with $\gamma = 2 \alpha + i K'$ yields (\ref{exact-1}) if $c_2^- = 0$ since 
$$
\frac{H(\nu x - \gamma + \alpha)}{\Theta(\nu x + \alpha)} e^{\nu x \frac{\Theta'(\gamma)}{\Theta(\gamma)}} = 
\frac{H(\nu x - \alpha - iK')}{\Theta(\nu x + \alpha)} e^{\nu x \frac{\Theta'(2 \alpha + i K')}{\Theta(2 \alpha + i K')}} = 
-i e^{\frac{\pi K'}{4 K}} \frac{\Theta(\nu x - \alpha)}{\Theta(\nu x + \alpha)}
e^{\nu x \frac{H'(2\alpha)}{H(2\alpha)}}.
$$
The proportionality factors in both expressions are not relevant due to the scalar multiplication of eigenfunctions. Thus, if $\zeta = 0$, then $c_1^+ = 0$ and $c_2^- = 0$. 
\label{example-2}
\end{example}

The following lemma describes the linear relations in each set $(c_1^+,c_1^-)$ 
and $(c_2^+,c_2^-)$.

\begin{lemma}
	\label{lem-coefficients}
If the representations (\ref{Lame-sol-1}) and (\ref{Lame-sol-2}) are solutions of the spectral problem (\ref{spectral-problem}), then their coefficients are uniquely defined up to the constant multiplication factor as 
\begin{equation}
\label{coefficient-1}
	c_1^+ = -i \zeta, \quad c_1^- = \zeta_1 \frac{\Theta(0) \Theta(2\alpha + \gamma)}{\Theta(\gamma) \Theta(2\alpha)}
\end{equation}
and
\begin{equation}
\label{coefficient-2}
c_2^+ = \zeta_1 \frac{\Theta(0) \Theta(2\alpha + \gamma)}{\Theta(\gamma) \Theta(2\alpha)}, \quad c_2^- = i \zeta.
\end{equation}
\end{lemma}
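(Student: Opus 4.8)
The plan is to exploit that the first-order Lax system (\ref{spectral-problem}) contains exactly one scalar relation more than the pair of second-order equations (\ref{sp-2}): once the component $\psi_+$ is fixed among solutions of $L_+\psi=\zeta^2\psi$, with $L_+:=-\partial_x^2+\phi^2+\phi'$, the component $\psi_-$ is forced. Writing (\ref{spectral-problem}) with $\varphi=(p,q)^T$, $u=\phi$, in the variables $\psi_\pm=p\pm q$ gives the equivalent first-order system $\psi_+'=\phi\psi_++i\zeta\psi_-$ and $\psi_-'=i\zeta\psi_+-\phi\psi_-$, hence for $\zeta\neq0$
$$
\psi_-=\frac{1}{i\zeta}A\psi_+,\qquad \psi_+=\frac{1}{i\zeta}B\psi_-,\qquad A:=\partial_x-\phi,\quad B:=\partial_x+\phi,
$$
and conversely, $\psi_+\in\ker(L_+-\zeta^2)$ together with $\psi_-:=\frac{1}{i\zeta}A\psi_+$ solves (\ref{spectral-problem}). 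Since $L_+=-BA$ while $L_-:=-\partial_x^2+\phi^2-\phi'=-AB$, the operator $A$ maps $\ker(L_+-\zeta^2)$ into $\ker(L_--\zeta^2)$ and $B$ maps $\ker(L_--\zeta^2)$ into $\ker(L_+-\zeta^2)$; and because $\phi$ has period $2\nu^{-1}K$, both $A$ and $B$ preserve the Floquet multiplier.

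I apply this to the first solution of Lemma \ref{lem-gamma}, the one with the factor $e^{-\nu x Z(\gamma)}$: there $\psi_+^{(1)}=2c_1^+g_+$ with $g_+(x):=\frac{H(\nu x+\gamma+\alpha)}{\Theta(\nu x+\alpha)}e^{-\nu x Z(\gamma)}\in\ker(L_+-\zeta^2)$, and $\psi_-^{(1)}=2c_1^-g_-$ with $g_-(x):=\frac{H(\nu x+\gamma-\alpha)}{\Theta(\nu x-\alpha)}e^{-\nu x Z(\gamma)}\in\ker(L_--\zeta^2)$. Then $Ag_+\in\ker(L_--\zeta^2)$ has the same Floquet multiplier $-e^{-2KZ(\gamma)}$ as $g_-$; for $\gamma\notin\{0,K,iK',K+iK'\}$ the two Floquet solutions of $L_-\psi=\zeta^2\psi$ are linearly independent (their multipliers $-e^{\mp2KZ(\gamma)}$ are distinct), so that multiplier eigenspace is one-dimensional and $Ag_+=\lambda g_-$ for a constant $\lambda$. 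Comparing with $\psi_-^{(1)}=\frac{1}{i\zeta}A\psi_+^{(1)}$ gives $c_1^-=\frac{\lambda}{i\zeta}c_1^+$, which determines the ratio $c_1^+:c_1^-$ up to a common scale; the normalization $c_1^+=-i\zeta$ then yields $c_1^-=-\lambda$.

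To evaluate $\lambda$ I use a single point, $\nu x_\ast=-\gamma-\alpha$, at which $H(\nu x+\gamma+\alpha)$ and hence $g_+$ vanish: then $(Ag_+)(x_\ast)=g_+'(x_\ast)=\nu H'(0)\,\Theta(\gamma)^{-1}e^{(\gamma+\alpha)Z(\gamma)}$, whereas $g_-(x_\ast)=-H(2\alpha)\,\Theta(2\alpha+\gamma)^{-1}e^{(\gamma+\alpha)Z(\gamma)}$ by the oddness of $H$ and evenness of $\Theta$, so the common exponential cancels and $\lambda=-\nu H'(0)\Theta(2\alpha+\gamma)/(H(2\alpha)\Theta(\gamma))$. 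Inserting $H'(0)=\sqrt{k}\,\Theta(0)$ (from $\sn'(0)=1$) and $H(2\alpha)=\sqrt{k}\,\Theta(2\alpha)\sn(2\alpha)=\sqrt{k}\,\Theta(2\alpha)\,\nu/\zeta_1$ (from (\ref{elliptic-double-alpha})) collapses this to $\lambda=-\zeta_1\Theta(0)\Theta(2\alpha+\gamma)/(\Theta(\gamma)\Theta(2\alpha))$, which is (\ref{coefficient-1}). The second solution, carrying the factor $e^{\nu x Z(\gamma)}$, is handled identically: $\psi_-^{(2)}=2c_2^-h_-$ and $\psi_+^{(2)}=2c_2^+h_+$ with $h_\pm$ the corresponding Floquet solutions, one uses $\psi_+^{(2)}=\frac{1}{i\zeta}B\psi_-^{(2)}$ and $Bh_-=\tilde\lambda h_+$, evaluates $\tilde\lambda$ at the zero $\nu x_{\ast\ast}=\gamma+\alpha$ of $H(\nu x-\gamma-\alpha)$, and with the normalization $c_2^-=i\zeta$ obtains $c_2^+=\zeta_1\Theta(0)\Theta(2\alpha+\gamma)/(\Theta(\gamma)\Theta(2\alpha))$, which is (\ref{coefficient-2}). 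The degenerate value $\zeta=0$ (for which $\gamma=2\alpha+iK'$ and the division by $i\zeta$ is illegitimate) is exactly Example \ref{example-2}, where $c_1^+=0$ and $c_2^-=0$, consistently with $g_+\in\ker A$, $h_-\in\ker B$, and with the stated formulas.

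The one genuinely delicate point is the identity $Ag_+=\lambda g_-$. It rests on the one-dimensionality of the Floquet eigenspace of $L_-\psi=\zeta^2\psi$ (hence the exclusion of the band-edge values of $\gamma$), and on the fact that $Ag_+$ — which a priori could carry a double pole at $\nu x=iK'-\alpha$, arising from the simple pole of $g_+$ against the simple pole of $\phi$ — actually has only a simple pole there, because it solves the second-order equation $L_-\psi=\zeta^2\psi$; equivalently, one checks directly that $(g_+'-\phi g_+)/g_-$ is a doubly periodic function with no poles (by matching the Laurent expansions of numerator and denominator at the poles of $\phi$) and therefore is constant. All remaining steps — the single-point evaluations and the reductions via (\ref{elliptic-double-alpha}) — are routine bookkeeping.
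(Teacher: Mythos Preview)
Your proof is correct and takes a genuinely different route from the paper's. The paper proceeds by Laurent-expanding the candidate solution (\ref{Lame-sol-1}) at the poles $x=\pm v/2$ of $\phi$ and balancing the simple-pole terms in the first-order equation $p'=i\zeta p+\phi q$; this produces a $2\times2$ linear system (\ref{rel-c-1})--(\ref{rel-c-2}) for $(c_1^+,c_1^-)$ whose vanishing determinant reproduces the dispersion relation (\ref{Lame-disp}), after which one row gives the ratio. You instead exploit the factorization $L_\pm=-\partial_x^2+\phi^2\pm\phi'=-BA,\ -AB$ with $A=\partial_x-\phi$, $B=\partial_x+\phi$: the relation $\psi_-=\frac{1}{i\zeta}A\psi_+$ forces $Ag_+$ to lie in the one-dimensional Floquet eigenspace of $L_-$ spanned by $g_-$, and the proportionality constant is read off at the single zero $\nu x_\ast=-\gamma-\alpha$ of $g_+$. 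Your argument is more structural (it is essentially the Darboux/SUSY intertwining picture) and computationally lighter, trading the pole bookkeeping for one evaluation at a regular point; the paper's approach has the incidental benefit of rederiving (\ref{Lame-disp}) as a consistency check. One small slip: at $\nu x=iK'-\alpha$ the potential $\phi^2-\phi'$ is actually regular (cf.\ (\ref{Miura}) with (\ref{rel-1})), so $Ag_+$ is analytic there, not merely with ``only a simple pole''; your alternative justification via $(Ag_+)/g_-$ being entire and doubly periodic is the correct one, and the Floquet argument already suffices in any case.
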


\begin{proof}
To obtain (\ref{coefficient-1}), we compute the pole contributions of the elliptic solutions of the spectral problem (\ref{spectral-problem}) at 
$$
x = \frac{v}{2} = -\frac{\alpha + i K'}{\nu}.
$$ 
Since $\wp(x) = \frac{1}{x^2} + \tilde{\wp}(x)$ as $x \to 0$ with analytic  $\tilde{\wp}(x)$ near $x = 0$, it follows from (\ref{phinova}) that 
\begin{equation}
\label{phi-asymp}
\phi(x) = -\frac{1}{x - \frac{v}{2}} + \mathcal{O}\left(x - \frac{v}{2} \right), \quad \mbox{\rm as} \;\; x \to \frac{v}{2}.
\end{equation}
By using (\ref{Jacobi-sn}) for $H(x) = \sqrt{k} \sn(x) \Theta(x)$, we
obtain the asymptotic expansion:
\begin{equation*}
H(x) = \sqrt{k} \Theta(0) x + \mathcal{O}(x^3), \quad \mbox{\rm as} \;\; x \to 0.
\end{equation*}
In view of (\ref{Wei-Jac-3}) and (\ref{v-alpha-correspondence}), this implies that 
\begin{equation}
\label{Theta-asymp}
\Theta(\nu x + \alpha) = -i \nu \sqrt{k} \Theta(0) e^{\frac{\pi K'}{4K}} e^{\frac{i \pi \nu}{2K} \left(x - \frac{v}{2} \right)} \left( x - \frac{v}{2} \right) + \mathcal{O}\left(x - \frac{v}{2} \right)^3, \quad \mbox{\rm as} \;\; x \to \frac{v}{2}.
\end{equation}
By using (\ref{Lame-sol-1}) and (\ref{Theta-asymp}), we obtain as $x \to \frac{v}{2}$:
\begin{align*}
& \frac{H(\nu x + \gamma + \alpha)}{\Theta(\nu x + \alpha)} e^{-\nu x Z(\gamma)}\\
&= \frac{i}{\nu \sqrt{k} \Theta(0)} e^{-\frac{\pi K'}{4K}} 
e^{(\alpha + iK') Z(\gamma)} e^{-\nu \left(x - \frac{v}{2} \right) \left[ Z(\gamma) + \frac{i \pi}{2K}\right]}
\frac{H\left(\nu\left(x - \frac{v}{2}\right) + \gamma - iK' \right)}{x - \frac{v}{2}} + \mathcal{O}\left(x - \frac{v}{2}\right) \\
&= \frac{1}{\nu \sqrt{k} \Theta(0)} 
e^{(\alpha + iK') Z(\gamma) + \frac{i \pi \gamma}{2K}} e^{-\nu \left(x - \frac{v}{2} \right) Z(\gamma)}
\frac{\Theta\left(\nu\left(x - \frac{v}{2}\right) + \gamma \right)}{x - \frac{v}{2}} + \mathcal{O}\left(x - \frac{v}{2}\right)
\end{align*}
and
\begin{align*}
\frac{H(\nu x + \gamma - \alpha)}{\Theta(\nu x - \alpha)} e^{-\nu x Z(\gamma)}
&= e^{(\alpha + iK') Z(\gamma)} \frac{H(\gamma - 2 \alpha - iK')}{\Theta(-2\alpha - iK')} + \mathcal{O}\left(x - \frac{v}{2}\right) \\
&= e^{(\alpha + iK') Z(\gamma) + \frac{i \pi \gamma}{2K}} \frac{\Theta(\gamma - 2 \alpha)}{H(-2\alpha)} + \mathcal{O}\left(x - \frac{v}{2}\right).
\end{align*}
In order to substitute the meromorphic parts into $p' = i \zeta p + \phi q$, 
we deduce that 
\begin{align}
p(x) &=  \frac{c_1^+ }{ \nu \sqrt{k} \Theta(0)} 
e^{(\alpha + iK') Z(\gamma) + \frac{i \pi \gamma}{2K}} 
\frac{\Theta(\gamma)}{x - \frac{v}{2}} + \mathcal{O}(1), 
\quad \mbox{\rm as} \;\; x \to \frac{v}{2},
\label{p-asymp}
\end{align}
which yields
\begin{align}
p'(x) &= -\frac{c_1^+ }{ \nu \sqrt{k} \Theta(0)} 
e^{(\alpha + iK') Z(\gamma) + \frac{i \pi \gamma}{2K}} 
\frac{\Theta(\gamma)}{\left( x - \frac{v}{2} \right)^2} 
+ \mathcal{O}(1), 
\quad \mbox{\rm as} \;\; x \to \frac{v}{2}.
\label{p-der-asymp}
\end{align}
On the other hand, we get 
\begin{align}
q(x) &= \frac{c_1^+ }{\nu \sqrt{k}\Theta(0)} 
e^{(\alpha + iK') Z(\gamma) + \frac{i \pi \gamma}{2K}} e^{-\nu \left(x - \frac{v}{2} \right) Z(\gamma)}
\frac{\Theta\left(\nu\left(x - \frac{v}{2}\right) + \gamma \right)}{x - \frac{v}{2}} \notag \\
& \quad - c_1^- e^{(\alpha + iK') Z(\gamma) + \frac{i \pi \gamma}{2K}} \frac{\Theta(\gamma - 2 \alpha)}{H(-2\alpha)} 
+ \mathcal{O}\left(x - \frac{v}{2}\right),
\quad \mbox{\rm as} \;\; x \to \frac{v}{2}.
\label{q-asymp}
\end{align}
The double pole in $p'(x)$ given by (\ref{p-der-asymp}) 
cancels out with the double pole in $\phi(x) q(x)$ given by 
the product of (\ref{phi-asymp}) and (\ref{q-asymp}). The simple pole 
from $i \zeta p(x)$ given by (\ref{p-asymp}) must cancel 
the simple pole in $\phi(x) q(x)$, due to a relation 
between $c_1^+$ and $c_1^-$. Using (\ref{phi-asymp}), (\ref{p-asymp}), (\ref{p-der-asymp}), and (\ref{q-asymp}) in $p' = i \zeta p + \phi q$, 
we obtain 
\begin{equation*}
i \zeta \frac{c_1^+ \Theta(\gamma)}{\nu \sqrt{k} \Theta(0)} -  \frac{c_1^+ }{\sqrt{k} \Theta(0)} \left[ - \Theta(\gamma) Z(\gamma) + \Theta'(\gamma) \right] + c_1^- \frac{\Theta(\gamma - 2 \alpha)}{H(-2\alpha)} = 0, 
\end{equation*}
after cancelling the constant factor $e^{(\alpha + iK') Z(\gamma) + \frac{i \pi \gamma}{2K}}$. Since $ \Theta'(\gamma) = \Theta(\gamma) Z(\gamma)$, this yields 
the linear equation 
\begin{equation}
\label{rel-c-1}
i \zeta  \frac{c_1^+ \Theta(\gamma)}{\nu \sqrt{k} \Theta(0)} 
- c_1^- \frac{\Theta(2 \alpha - \gamma)}{H(2\alpha)} = 0.
\end{equation}
The same equation follows also from $q' = -i\zeta q + \phi p$ as $x \to \frac{v}{2}$. 

Repeating the derivation as $x \to -\frac{v}{2}$, where 
$$
\left\{ \begin{array}{l} 
\phi(x) = \frac{1}{x + \frac{v}{2}} + \mathcal{O}\left(x + \frac{v}{2}\right), \\
\Theta(\nu x - \alpha) = i \nu \sqrt{k}\Theta(0)  e^{\frac{\pi K'}{4K}} e^{-\frac{i \pi \nu}{2K} \left(x + \frac{v}{2} \right)} \left( x + \frac{v}{2} \right) + \mathcal{O}\left(x + \frac{v}{2} \right)^3, \end{array} \right. \quad \mbox{\rm as} \;\; x \to -\frac{v}{2}, 
$$
we obtain another linear equation 
\begin{equation}
\label{rel-c-2}
c_1^+ \frac{\Theta(2 \alpha + \gamma)}{H(2\alpha)} + i \zeta \frac{c_1^-  \Theta(\gamma)}{\nu \sqrt{k} \Theta(0)} = 0.
\end{equation}
The determinant of the linear system (\ref{rel-c-1}) and (\ref{rel-c-2}) is zero if 
\begin{equation}
\label{det-eq-from-lin-syst}
\zeta^2 = \frac{\nu^2 k \Theta^2(0) \Theta(2\alpha-\gamma) \Theta(2\alpha + \gamma)}{\Theta^2(\gamma) H^2(2\alpha)}.
\end{equation}
Since $H(x) = \sqrt{k} \sn(x) \Theta(x)$ and 
$$
\Theta^2(0) \Theta(2\alpha-\gamma) \Theta(2\alpha+\gamma) = \Theta^2(2\alpha) \Theta^2(\gamma) - H^2(2\alpha) H^2(\gamma), 
$$
we use the relation $\nu = \zeta_1 \sn(2\alpha)$ from (\ref{elliptic-double-alpha}) and obtain from (\ref{det-eq-from-lin-syst}):
\begin{align*}
\zeta^2 &= \nu^2 k \left[ \frac{\Theta^2(2 \alpha)}{H^2(2 \alpha)} - \frac{H^2(\gamma)}{\Theta^2(\gamma)} \right] \\
&= \nu^2 \left[ \frac{1}{\sn^2(2\alpha)} - k^2 \sn^2(\gamma) \right] \\
&= \zeta_1^2 -  (\zeta_1^2 - \zeta_2^2) \sn^2(\gamma),
\end{align*}
which recovers the dispersion relation (\ref{Lame-disp}). Therefore, one 
of the two equations in the linear system (\ref{rel-c-1}) and (\ref{rel-c-2}) is redundant.

Recall from Example \ref{example-2} that $\gamma = 2 \alpha + i K'$ for $\zeta = 0$, which yields $c_1^+ = 0$ since $\Theta(2\alpha - \gamma) = 0$ and $\Theta(2\alpha + \gamma) \neq 0$. Therefore, we use (\ref{rel-c-2}) of the linear system (\ref{rel-c-1}) and (\ref{rel-c-2}), set $c_1^+ = -i\zeta$ and obtain 
$$
c_1^- = \frac{\nu \sqrt{k} \Theta(0) \Theta(2\alpha + \gamma)}{\Theta(\gamma) H(2\alpha)} = \zeta_1 \frac{\Theta(0) \Theta(2\alpha + \gamma)}{\Theta(\gamma) \Theta(2\alpha)},
$$
where we have used again that $\nu = \zeta_1 \sn(2\alpha)$. This yields (\ref{coefficient-1}).

To obtain (\ref{coefficient-2}) with a similar method, we expand the second solution (\ref{Lame-sol-2}) as $x \to \pm \frac{v}{2}$.  Here, we use the fact that (\ref{Lame-sol-2}) 
is obtained from (\ref{Lame-sol-1}) by replacing $\gamma$ to $-\gamma$. Since the poles $\pm \frac{v}{2}$ are independent of $\gamma$, we obtain the linear system of equations for $(c_2^+,c_2^-)$ by replacing $\gamma$ to $-\gamma$ in (\ref{rel-c-1}) and (\ref{rel-c-2}):
\begin{equation}
\label{rel-c-3}
i \zeta  \frac{c_2^+ \Theta(\gamma)}{\nu \sqrt{k} \Theta(0)} 
- c_2^- \frac{\Theta(2 \alpha + \gamma)}{H(2\alpha)} = 0
\end{equation}
and
\begin{equation}
\label{rel-c-4}
c_2^+ \frac{\Theta(2 \alpha - \gamma)}{H(2\alpha)} + i \zeta \frac{c_2^-  \Theta(\gamma)}{\nu \sqrt{k} \Theta(0)} = 0.
\end{equation}
Again, it follows from Example \ref{example-2} for $\zeta = 0$ that $c_2^- = 0$ since $\Theta(2\alpha - \gamma) = 0$ and $\Theta(2\alpha + \gamma) \neq 0$. 
Therefore, we use (\ref{rel-c-3}) of the linear system 
(\ref{rel-c-3}) and (\ref{rel-c-4}), set $c_2^- = i\zeta$ and obtain 
$$
c_2^+ = \frac{\nu \sqrt{k} \Theta(0) \Theta(2\alpha + \gamma)}{\Theta(\gamma) H(2\alpha)} = \zeta_1 \frac{\Theta(0) \Theta(2\alpha + \gamma)}{\Theta(\gamma) \Theta(2\alpha)}.
$$
This yields (\ref{coefficient-2}).
\end{proof}

The first solution (\ref{Lame-sol-1}) with (\ref{coefficient-1}) yields 
the final form (\ref{Lame-sol-1-fin}) in Theorem \ref{th-main}, whereas  
the second solution  (\ref{Lame-sol-2}) with (\ref{coefficient-2}) yields 
the final form (\ref{Lame-sol-2-fin}).

\section{Time evolution of the eigenfunctions}
\label{sec-3}

Having proven the representation (\ref{Lame-sol-1-fin}) and (\ref{Lame-sol-2-fin}) in Section \ref{sec-2}, we derive here the time evolution (\ref{eigenfunction-time}) of the elliptic eigenfunctions stated in Theorem \ref{th-main}. To do so, we first inspect the relationship between the spectral parameter $\zeta \in \mathbb{R}$ and the shift parameter $\gamma \in [0,K] \times [0,iK']$ given by the dispersion relation (\ref{dispersion}) (Section \ref{sec-3-1}). Then, we separate variables in the Lax system (\ref{LS}) and obtain the characteristic polynomial 
for the traveling waves (Section \ref{sec-3-2}), which is also solved by the elliptic functions. Finally, we derive the unique expression (\ref{omega}) for $\mu$ in terms of the shift parameter $\gamma$ (Section \ref{sec-3-3}).

\subsection{Dispersion relation for eigenfunctions}
\label{sec-3-1}

The following lemma defines the pre-image of the interval $[0,\infty)$ 
under the mapping $\gamma \to \zeta$ given by the dispersion relation (\ref{dispersion}). The pre-image is shown 
as the path in the complex $\gamma$ plane in Figure \ref{fig-image}.

\begin{lemma}
	\label{lem-beta}
	The pre-image of the path $\infty \to \zeta_1 \to \zeta_2 \to \zeta_3 \to 0$ for $\zeta \in (0,\infty)$ is given by the path $(0,iK') \to (0,0) \to (K,0) \to (K,iK') \to (2\alpha,iK')$ in the complex $\gamma$-plane.
\end{lemma}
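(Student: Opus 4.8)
The statement concerns the pre-image under the dispersion relation (\ref{dispersion}), namely $\zeta^2 = \zeta_3^2 + (\zeta_1^2-\zeta_3^2)\dn^2(\gamma)$, as $\gamma$ traverses the boundary of the rectangle $[0,K]\times[0,iK']$. The plan is to track $\zeta(\gamma)$ along each of the four edges of this rectangle, using the standard values and monotonicity properties of $\dn$ on the real and imaginary axes, together with the transformation formulas (\ref{rel-2}) that shift the argument by $iK'$.

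First I would verify the four corner values. Using $\dn(0)=1$, $\dn(K)=k'$, and $\dn(iK')=\infty$ (with $\dn(K+iK')=ik'/k \cdot \text{(appropriate branch)}$, or more directly $\dn^2(K+iK') = $ computed from (\ref{rel-2})), substitution into (\ref{dispersion}) gives: at $\gamma=0$, $\zeta^2 = \zeta_3^2 + (\zeta_1^2-\zeta_3^2) = \zeta_1^2$, so $\zeta=\zeta_1$; at $\gamma=K$, $\zeta^2 = \zeta_3^2 + (\zeta_1^2-\zeta_3^2)(k')^2 = \zeta_3^2 + (\zeta_1^2-\zeta_3^2)\frac{\zeta_2^2-\zeta_3^2}{\zeta_1^2-\zeta_3^2} = \zeta_2^2$, using (\ref{parameters-nu-k}) for $k'^2$; at $\gamma=iK'$, $\dn^2\to\infty$ so $\zeta\to\infty$; at $\gamma=K+iK'$, one computes $\zeta^2=\zeta_3^2$ via (\ref{rel-2}) with $\dn(z+iK') = -i\cn(z)/\sn(z)$ evaluated at $z=K$, giving $\dn^2(K+iK')=-\cn^2(K)/\sn^2(K)=0$ (since $\cn(K)=0$), hence $\zeta=\zeta_3$. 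Finally at $\gamma=2\alpha+iK'$, using (\ref{rel-2}) and the values $\sn(2\alpha)=\nu/\zeta_1$, $\cn(2\alpha)=\zeta_3/\zeta_1$, $\dn(2\alpha)=\zeta_2/\zeta_1$ from (\ref{elliptic-double-alpha}), one gets $\dn^2(2\alpha+iK') = -\cn^2(2\alpha)/\sn^2(2\alpha) = -\zeta_3^2/\nu^2 = -\zeta_3^2/(\zeta_1^2-\zeta_3^2)$, whence $\zeta^2 = \zeta_3^2 - \zeta_3^2 = 0$, so $\zeta=0$. This confirms the endpoints of each segment.

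Next I would argue monotonicity along each edge so that $\zeta$ covers the stated interval bijectively. On the edge $\gamma\in(0,K)$ (real), $\dn$ decreases monotonically from $1$ to $k'$, so $\zeta$ decreases from $\zeta_1$ to $\zeta_2$, covering $(\zeta_2,\zeta_1)$. On the edge $\gamma\in(K,K+iK')$, i.e. $\gamma=K+is$ with $s\in(0,K')$, I write $\dn(K+is) = k'/\dn(is,k') = $ (real and between $0$ and $k'$ by standard imaginary-argument formulas), so $\zeta^2$ ranges over $(\zeta_3^2,\zeta_2^2)$, giving $\zeta\in(\zeta_3,\zeta_2)$. On the edge from $K+iK'$ back toward $2\alpha+iK'$, i.e. $\gamma=\delta+iK'$ with $\delta$ decreasing from $K$ to $2\alpha$, I use $\dn(\delta+iK')=-i\cn(\delta)/\sn(\delta)$, so $\dn^2(\delta+iK') = -\cn^2(\delta)/\sn^2(\delta)$, which is negative and increases from $0$ (at $\delta=K$) to $-\zeta_3^2/(\zeta_1^2-\zeta_3^2)$ (at $\delta=2\alpha$); correspondingly $\zeta^2$ decreases from $\zeta_3^2$ to $0$, so $\zeta\in(0,\zeta_3)$. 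Finally, the edge $\gamma=is$, $s\in(0,K')$, gives $\dn(is,k)$ real and $\geq 1$, increasing to $\infty$ as $s\to K'$, so $\zeta\in(\zeta_1,\infty)$; the orientation is that $\zeta=\infty$ corresponds to the corner $(0,iK')$ and $\zeta=\zeta_1$ to the corner $(0,0)$, matching the path $\infty\to\zeta_1$.

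The main obstacle is bookkeeping with the branches of $\dn$ and its square along the imaginary and shifted-imaginary directions: one must be careful that $\dn^2(\gamma)$ — which is all that enters (\ref{dispersion}) — is real on each of these four edges (it is, since the edges are precisely where $\dn$ is either real or purely imaginary, by the periodicity/reflection properties stated after (\ref{fund-rel})), and that its monotonic range matches the claimed $\zeta$-interval. Once real-valuedness and monotonicity on each edge are established, continuity of $\dn^2$ and the matching of corner values force the pre-image of the full path $\infty\to\zeta_1\to\zeta_2\to\zeta_3\to 0$ to be exactly the concatenated path $(0,iK')\to(0,0)\to(K,0)\to(K,iK')\to(2\alpha,iK')$, which is the assertion. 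I would also remark that injectivity of $\gamma\mapsto\zeta^2$ on the boundary follows from the piecewise strict monotonicity, so the pre-image is genuinely a path and not a larger set.
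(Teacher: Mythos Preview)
Your proposal is correct and follows essentially the same approach as the paper: both parametrize each of the four edges of the rectangle $[0,K]\times[0,iK']$, rewrite $\dn^2(\gamma)$ in real terms along each edge using the standard shift and imaginary-argument transformations, and verify that the resulting $\zeta^2$ covers the appropriate subinterval with the stated endpoint values. Your treatment is slightly more explicit about monotonicity on each edge (which the paper leaves implicit), and there is a minor slip in writing $\dn(K+is)=k'/\dn(is,k')$ (the modulus should be $k$, i.e.\ $\dn(K+is)=k'/\dn(is,k)$, before applying the imaginary transformation), but the conclusion you draw from it is correct.
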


\begin{proof}
	For $\zeta \in [\zeta_1,\infty)$, we use $\gamma = i \gamma'$ with $\gamma' \in [0,K']$ and rewrite (\ref{dispersion}) as 
	\begin{equation}
	\label{zeta-1}
	\zeta^2 = \zeta_3^2 + (\zeta_1^2 - \zeta_3^2) \frac{\dn^2(\gamma';k')}{\cn^2(\gamma';k')},
	\end{equation}
	where $k' = \sqrt{1-k^2}$. Since $\cn^2(K';k') = 0$, the image of $\gamma = iK'$ is $\zeta = \infty$. Since $\cn^2(0;k') = \dn^2(0;k') = 1$, the image of $\gamma = 0$ is $\zeta = \zeta_1$. For $\gamma' \in (0,K')$, we have $\zeta \in (\zeta_1,\infty)$. \\
	
	For $\zeta \in [\zeta_2,\zeta_1]$, we use $\gamma \in [0,K]$ in (\ref{dispersion}), rewritten again as 
	\begin{equation}
	\label{zeta-12}
	\zeta^2 = \zeta_3^2 + (\zeta_1^2 - \zeta_3^2) \dn^2(\gamma),
	\end{equation}
	that the image of $\gamma = 0$ is $\zeta = \zeta_1$ and the image of $\gamma = K$ is $\zeta = \zeta_2$ since $\zeta^2 = \zeta_3^2 + (\zeta_1^2 - \zeta_3^2) (1-k^2) = \zeta_2^2$. For $\gamma \in (0,K)$, we have $\zeta \in (\zeta_2,\zeta_1)$. \\ 
	
	For $\zeta \in [\zeta_3,\zeta_2]$, we use $\gamma = K + i \gamma'$ with $\gamma' \in [0,K']$ and rewrite (\ref{dispersion}) as 
	\begin{equation}
	\label{zeta-2}
	\zeta^2 = \zeta_3^2 + (\zeta_1^2 - \zeta_3^2) \frac{1-k^2}{\dn^2(i\gamma';k)} = \zeta_3^2 + (\zeta_2^2 - \zeta_3^2) \frac{\cn^2(\gamma';k')}{\dn^2(\gamma';k')}.
	\end{equation}
	Since $\cn^2(0;k') = \dn^2(0;k') = 1$, the image of $\gamma = K$ is $\zeta = \zeta_2$. Since $\cn^2(K';k') = 0$, the image of $\gamma = K + i K'$ is $\zeta = \zeta_3$. For $\gamma' \in (0,K')$, we have $\zeta \in (\zeta_3,\zeta_2)$. \\ 
	
	For $\zeta \in [0,\zeta_3]$, we use $\gamma = \delta + i K'$ with $\gamma \in [0,K]$ and rewrite (\ref{dispersion}) as 
	\begin{equation}
	\label{zeta-3}
	\zeta^2 = \zeta_3^2 - (\zeta_1^2 - \zeta_3^2) \frac{\cn^2(\delta)}{\sn^2(\delta)}.
	\end{equation}
	Since $\cn^2(K) = 0$, the image of $\gamma = K + iK'$ is $\zeta^2 = \zeta_3^2$. By (\ref{elliptic-double-alpha}) and $\nu^2 = \zeta_1^2 - \zeta_3^2$, the image of $\gamma = 2\alpha + i K'$ is $\zeta = 0$. For $\delta \in (0,K)$, we have $\zeta \in (0,\zeta_3)$. 	
\end{proof}

Expressions (\ref{zeta-1}) and (\ref{zeta-2}) are useful in the spectral bands $[\zeta_1,\infty)$ and $[\zeta_3,\zeta_2]$, whereas expressions (\ref{zeta-12}) and (\ref{zeta-3}) are useful in the spectral gaps $(\zeta_2,\zeta_1)$ and $[0,\zeta_3)$. We will only use the latter expressions for the purpose of this work. The following lemma gives a useful relation between the dispersion relation (\ref{dispersion}) and Weierstrass' function $\wp$. 

\begin{lemma}
	\label{lem-Weierstrass}
	Let $a \in [0,\omega] \times [0,\omega']$ be defined by 
	\begin{equation}
	\label{relation-gamma-a}
	\gamma = i K' + \nu a.
	\end{equation}
	The dispersion relation (\ref{dispersion}) can be rewritten as 
\begin{equation}
\label{Lame-disp-W}
\zeta^2 = \wp(v) - \wp(a).
\end{equation}
\end{lemma}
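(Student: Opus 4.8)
The plan is to reduce the claimed identity to the explicit Jacobi--Weierstrass dictionary (\ref{rel-Jac-Wei})--(\ref{rel-par-e-zeta}) together with the relation $\wp(v) = c/6$ from (\ref{pcW}). First I would record the two quantities that enter the right-hand side of (\ref{Lame-disp-W}). By (\ref{rel-3}) we have $\wp(v) = \tfrac13(\zeta_1^2 + \zeta_2^2 + \zeta_3^2)$, and combining this with the formula for $e_3$ in (\ref{rel-par-e-zeta}) yields the clean relation $\wp(v) - e_3 = \zeta_1^2$.

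Next I would evaluate $\wp(a)$. Using (\ref{rel-Jac-Wei}) with $\nu = \sqrt{e_1 - e_3}$ and the definition $\gamma = iK' + \nu a$ in (\ref{relation-gamma-a}), so that $\nu a = \gamma - iK'$, we get
\[
\wp(a) = e_3 + \frac{\nu^2}{\sn^2(\nu a,k)} = e_3 + \frac{\nu^2}{\sn^2(\gamma - iK',k)}.
\]
The quasi-periodicity relation (\ref{rel-2}), $\sn(z - iK') = 1/(k\sn(z))$, then turns this into $\wp(a) = e_3 + \nu^2 k^2 \sn^2(\gamma)$. Subtracting, the terms $e_3$ cancel and $\wp(v) - \wp(a) = \zeta_1^2 - \nu^2 k^2 \sn^2(\gamma)$. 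It remains to identify the coefficient: by (\ref{parameters-nu-k}) (equivalently by $e_1 - e_3 = \zeta_1^2 - \zeta_3^2$ and $e_2 - e_3 = \zeta_1^2 - \zeta_2^2$ in (\ref{rel-par-e-zeta})) one has $\nu^2 k^2 = (\zeta_1^2 - \zeta_3^2)\cdot \tfrac{\zeta_1^2 - \zeta_2^2}{\zeta_1^2 - \zeta_3^2} = \zeta_1^2 - \zeta_2^2$, so $\wp(v) - \wp(a) = \zeta_1^2 - (\zeta_1^2 - \zeta_2^2)\sn^2(\gamma)$. This is precisely the first expression in the dispersion relation (\ref{Lame-disp}) of Lemma \ref{lem-gamma}, which by the fundamental identities (\ref{fund-rel}) coincides with the form (\ref{dispersion}) used in the statement; hence $\zeta^2 = \wp(v) - \wp(a)$.

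No step here is genuinely hard: the lemma is a change-of-variables identity, and the computation above is essentially the four displayed lines just described. The only points that need care are bookkeeping ones, which I would flag explicitly. First, one must use the correct branch of (\ref{rel-2}); since $\sn(z \pm iK')^2 = 1/(k^2\sn^2 z)$ for both signs, the ambiguity is harmless. Second, one should check that the prescribed range $a \in [0,\omega] \times [0,\omega']$ is consistent with $\gamma \in [0,K] \times [0,iK']$ under $a = (\gamma - iK')/\nu$: this follows at once from $\nu\omega = K$ and $\nu\omega' = iK'$ in (\ref{periods}) together with the $2iK'$-periodicity of $\wp$ (matching that of $\dn$ and $\Theta$), so that $iK' + ([0,K]\times[0,iK'])$ reduces modulo periods to $[0,K]\times[0,iK']$. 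I would otherwise present the argument exactly as the chain of equalities above.
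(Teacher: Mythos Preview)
Your proof is correct and follows essentially the same route as the paper's: both compute $\wp(v)-\wp(a)$ via (\ref{rel-Jac-Wei}), (\ref{rel-par-e-zeta}), (\ref{rel-3}) and the shift formula (\ref{rel-2}) to land on $\zeta_1^2 - (\zeta_1^2-\zeta_2^2)\sn^2(\gamma)$, which is the dispersion relation. Your extra bookkeeping remarks on the sign in (\ref{rel-2}) and the range of $a$ are not in the paper's proof but do no harm; note only that the periodicity you invoke is that of $\wp$ in the $x$-variable (periods $2\omega,2\omega'$), which translates to $2K,2iK'$ after scaling by $\nu$.
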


\begin{proof}
By using (\ref{rel-Jac-Wei}), (\ref{rel-par-e-zeta}), and (\ref{rel-3}), we obtain 
\begin{align*}
\zeta^2 &= \wp(v) - \wp(a) \\
&= \frac{1}{3} (\zeta_1^2 + \zeta_2^2 + \zeta_3^2) - \frac{1}{3} (-2\zeta_1^2 + \zeta_2^2 + \zeta_3^2) - \frac{\zeta_1^2 - \zeta_3^2}{\sn^2(\gamma \mp i K')} \\
&= \zeta_1^2 - (\zeta_1^2 - \zeta_2^2) \sn^2(\gamma),
\end{align*}
which is equivalent to (\ref{dispersion}).
\end{proof}

\subsection{Characteristic polynomial for traveling waves}
\label{sec-3-2}

The variables $(x,t)$ in the Lax system (\ref{LS}) with the traveling wave $u(x,t) = \phi(x+ct)$ can be separated in the form 
$\varphi(x,t) = \psi(x+ct) e^{\mu t}$. With the separation of variables, the Lax system (\ref{LS}) is split into the spectral problem
\begin{equation}
\label{spectral}
\frac{d}{dx} \psi = \left(\begin{array}{ll} i\zeta & \phi \\ \phi  & -i\zeta\end{array} \right) \psi
\end{equation}
and the linear algebraic system
\begin{align}
\label{time}
\mu \psi + c \left(\begin{array}{ll} i\zeta & \phi \\ \phi  & -i\zeta\end{array} \right) \psi = \left( \begin{array}{ll} 4i\zeta^3+2i\zeta \phi^2 & 4\zeta^2 \phi - 2i\zeta \phi' +2\phi^3 - \phi'' \\ 4\zeta^2 \phi + 2i \zeta \phi' + 2 \phi^3 - \phi'' & -4i\zeta^3-2i\zeta \phi^2 \end{array}\right) \psi,
\end{align}
where all functions depend on only one variable which stands for the traveling wave coordinate $x+ct$. 

The following lemma gives the admissible values of $\mu$, where we use Lemma \ref{lem-Weierstrass} and a relation to Weierstrass' function $\wp$.

\begin{lemma}
\label{lem-polynomial}
Let $a \in [0,\omega] \times [0,\omega']$ be defined by (\ref{relation-gamma-a}) and (\ref{Lame-disp-W}). Then, $\mu^2 = 4 (\wp'(a))^2$.
\end{lemma}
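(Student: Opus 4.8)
The plan is to extract $\mu$ from the algebraic system \eqref{time} by eliminating $\psi$. Since \eqref{time} is a homogeneous linear system for the two components of $\psi$, the compatibility condition is that the determinant of the coefficient matrix vanishes; this produces a scalar equation relating $\mu$, $\zeta$, $c$, and the elliptic function $\phi$ and its derivatives evaluated at the traveling-wave coordinate. First I would rewrite \eqref{time} as $M\psi = 0$ with
\begin{equation*}
M = \begin{pmatrix} \mu + (c-4i\zeta^2)\, i\zeta - 2i\zeta\phi^2 & (c-4\zeta^2)\phi + 2i\zeta\phi' - 2\phi^3 + \phi'' \\ (c-4\zeta^2)\phi - 2i\zeta\phi' - 2\phi^3 + \phi'' & \mu - (c-4i\zeta^2)\, i\zeta + 2i\zeta\phi^2 \end{pmatrix},
\end{equation*}
and impose $\det M = 0$. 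Expanding the determinant, the $\mu$-linear terms cancel by the antisymmetric placement of the diagonal entries, leaving $\mu^2$ equal to a combination of $\phi$, $\phi'$, $\phi''$, $\zeta$, $c$. The off-diagonal product contributes $[(c-4\zeta^2)\phi - 2\phi^3 + \phi'']^2 + 4\zeta^2(\phi')^2$, and the diagonal product contributes $-[\,i\zeta(c - 4\zeta^2) - 2i\zeta\phi^2\,]^2 = \zeta^2(c - 4\zeta^2 - 2\phi^2)^2$ with the appropriate sign.

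The next step is to simplify this expression using the ODEs \eqref{first}, \eqref{second}, \eqref{third} that $\phi$ satisfies. From \eqref{second} we have $\phi'' = 2\phi^3 - c\phi + b$, so $(c - 4\zeta^2)\phi - 2\phi^3 + \phi'' = (c - 4\zeta^2)\phi - c\phi + b = -4\zeta^2\phi + b$, which collapses one bracket dramatically. Likewise $(\phi')^2 = \phi^4 - c\phi^2 + 2b\phi + 2d$ from \eqref{third}. Substituting these, $\mu^2$ becomes a polynomial in $\phi$ and in $\zeta^2$ with coefficients built from $b$, $c$, $d$; I expect substantial cancellation, and in fact the $\phi$-dependence should disappear entirely (it must, since $\mu$ is a constant), leaving $\mu^2$ as a polynomial in $\zeta^2$ with coefficients $b, c, d$. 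Using the parameterization \eqref{parameterization} this polynomial in $\zeta^2$ should factor as (a constant times) $(\zeta^2 - \zeta_1^2)(\zeta^2 - \zeta_2^2)(\zeta^2 - \zeta_3^2)$ up to the precise normalization, i.e. $\mu^2$ is essentially $\pm 16(\zeta^2 - \zeta_1^2)(\zeta^2-\zeta_2^2)(\zeta^2-\zeta_3^2)$.

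To finish, I would invoke Lemma \ref{lem-Weierstrass}: since $\zeta^2 = \wp(v) - \wp(a)$, and $\wp(v) = \tfrac{c}{6} = \tfrac13(\zeta_1^2+\zeta_2^2+\zeta_3^2)$ with $e_1,e_2,e_3$ given in terms of $\zeta_j^2$ by \eqref{rel-par-e-zeta}, one computes $\zeta^2 - \zeta_j^2 = \wp(v) - \zeta_j^2 - \wp(a)$. A short check with \eqref{rel-par-e-zeta} shows $\wp(v) - \zeta_1^2 = \tfrac13(\zeta_1^2+\zeta_2^2+\zeta_3^2) - \zeta_1^2 = e_3$, and similarly $\wp(v) - \zeta_2^2 = e_2$, $\wp(v) - \zeta_3^2 = e_1$. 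Hence $(\zeta^2-\zeta_1^2)(\zeta^2-\zeta_2^2)(\zeta^2-\zeta_3^2) = (e_3 - \wp(a))(e_2 - \wp(a))(e_1 - \wp(a)) = -\tfrac14(\wp'(a))^2$ by the standard identity $(\wp')^2 = 4(\wp - e_1)(\wp - e_2)(\wp - e_3)$. Combined with the normalization constant found above, this gives $\mu^2 = 4(\wp'(a))^2$.

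The main obstacle is the bookkeeping in simplifying $\det M = 0$ — one must correctly use both second-order reductions \eqref{second} and \eqref{third}, and verify that all $\phi$-dependent terms cancel, which requires careful tracking of the $\zeta$-powers. The final identification via Weierstrass functions is then routine given Lemma \ref{lem-Weierstrass} and \eqref{rel-par-e-zeta}. One should also confirm that the overall sign and constant come out so that $\mu^2 = 4(\wp'(a))^2$ exactly rather than off by a factor, which pins down the correct form \eqref{omega} in the subsequent computation at a single convenient point $x_0$.
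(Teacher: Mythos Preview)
Your proposal is correct and follows essentially the same route as the paper: set $\det M=0$ from \eqref{time}, simplify using \eqref{second} and \eqref{third} to obtain $\mu^2 = -16(\zeta^2-\zeta_1^2)(\zeta^2-\zeta_2^2)(\zeta^2-\zeta_3^2)$, and then convert via $\zeta^2=\wp(v)-\wp(a)$ and the identities $\wp(v)-\zeta_j^2\in\{e_1,e_2,e_3\}$ together with $(\wp')^2=4\prod(\wp-e_j)$. The only cosmetic difference is that you spell out the collapse $(c-4\zeta^2)\phi-2\phi^3+\phi''=b-4\zeta^2\phi$ explicitly, which the paper leaves implicit.
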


\begin{proof}
Since (\ref{time}) is the linear algebraic system, the admissible values of $\mu$ are found from the characteristic equation 
\begin{equation}
\label{char-eq}
\left| 
\begin{array}{ll} 4i\zeta^3+2i\zeta \phi^2 - ic \zeta - \mu & 4\zeta^2 \phi - 2i\zeta \phi' + 2\phi^3 - \phi'' - c \phi \\ 4\zeta^2 \phi + 2i \zeta \phi' + 2 \phi^3 - \phi'' - c\phi & -4i\zeta^3-2i\zeta \phi^2 + i c \zeta - \mu \end{array} 
\right| = 0, 
\end{equation}
whereas the quotient $\rho = q/p$ for the eigenfunction $\psi = (p,q)^T$ satisfies  
\begin{equation}
\label{rho-basic}
\rho = -\frac{4 i \zeta^3 + i \zeta (2 \phi^2 - c) - \mu}{4 \zeta^2 \phi - 2 i \zeta \phi' - b} = \frac{4 \zeta^2 \phi + 2 i \zeta \phi' - b}{4 i \zeta^3 + i \zeta (2 \phi^2 - c) + \mu}.
\end{equation}
Expanding the determinant in (\ref{char-eq}) and using (\ref{second}) and (\ref{third}), we obtain
\begin{equation}
\label{character-eq}
\mu^2 + 16 P(\zeta) = 0,
\end{equation}
where the characteristic polynomial $P(\zeta)$ can be written in the form:
\begin{align*}
P(\zeta) &= \zeta^6 - \frac{c}{2} \zeta^4 + \frac{1}{16} (c^2 - 8d) \zeta^2 - \frac{b^2}{16} \\
&= (\zeta^2 - \zeta_1^2) (\zeta^2 - \zeta_2^2) (\zeta^2 - \zeta_3^2),
\end{align*}
due to the parameterization (\ref{parameterization}). 
Substituting (\ref{rel-3}) and (\ref{Lame-disp-W}) into (\ref{character-eq}) and using (\ref{rel-par-e-zeta}), we obtain 
\begin{align*}
\mu^2 &= 16 (\zeta_1^2 + \wp(a) - \wp(v)) (\zeta_2^2 + \wp(a) - \wp(v)) (\zeta_3^2 + \wp(a) - \wp(v)) \\
&= 16 (\wp(a) - e_1) (\wp(a) - e_2) (\wp(a) - e_3) \\
&= 4 (\wp'(a))^2,
\end{align*}
where we used the first-order quadrature $(\wp')^2 = 4 (\wp - e_1) (\wp - e_2) (\wp - e_3)$ for Weierstrass' elliptic function $\wp$. 
\end{proof}

\subsection{Explicit expression for $\mu$}
\label{sec-3-3}

	By Lemma \ref{lem-polynomial}, there exist two admissible roots for $\mu$ 
for the two linearly independent solutions of the spectral problem (\ref{spectral}). The following lemma shows that $\mu = -2 \wp'(a)$ is uniquely selected for the two solutions (\ref{Lame-sol-1-fin}) and (\ref{Lame-sol-2-fin}) in Theorem \ref{th-main}.

\begin{lemma}
\label{lem-mu}
For the eigenfunctions of the Lax system (\ref{LS}) defined by (\ref{eigenfunction-time}) 
(\ref{Lame-sol-1-fin}) and (\ref{Lame-sol-2-fin}), 
the value of $\mu$ is given by 
\begin{align}
\label{expr-omega}
\mu = -4 \nu^3 k^2 \sn(\gamma) \cn(\gamma) \dn(\gamma), 
\end{align}
\end{lemma}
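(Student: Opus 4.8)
The plan is to evaluate the residue relation at a single convenient point in the complex $x$-plane and match it against the known value $\mu^2 = 4(\wp'(a))^2$ from Lemma \ref{lem-polynomial}, thereby pinning down the sign and the explicit elliptic form of $\mu$. Concretely, I would take the first eigenfunction $\psi = (p_1,q_1)^T$ given by (\ref{Lame-sol-1-fin}) and substitute it into the linear algebraic system (\ref{time}); since (\ref{Lame-sol-1-fin}) is already a solution of the spectral problem (\ref{spectral}), the combination (\ref{time}) must hold identically in $x$, so it suffices to extract it at one point. The natural choice is $x = \tfrac{v}{2} = -\nu^{-1}(\alpha + iK')$, the pole of $\phi$, exactly as in the proof of Lemma \ref{lem-coefficients}: there $\phi$ has a simple pole with residue $-1$ by (\ref{phi-asymp}), $p_1$ has a simple pole with residue computed from (\ref{p-asymp}), while $q_1$ is analytic there (its pole was cancelled by the coefficient relation (\ref{rel-c-1})). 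Reading off, say, the first component of (\ref{time}) and comparing the $\mathcal{O}\big((x-\tfrac{v}{2})^{-1}\big)$ terms yields a single scalar equation determining $\mu$, because the highest-order singular term in the right-hand side entry $4\zeta^2\phi - 2i\zeta\phi' + 2\phi^3 - \phi''$ acting on $q_1$ and in $(4i\zeta^3 + 2i\zeta\phi^2 - ic\zeta)p_1$ will involve $\mu p_1$ only at the simple-pole level.

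After extracting the residue identity, I would convert everything to Weierstrass form: using (\ref{phinova})--(\ref{form-3-squared}), (\ref{pcW}), and Lemma \ref{lem-Weierstrass} (so that $\zeta^2 = \wp(v) - \wp(a)$), the scalar equation should collapse to $\mu = \pm 2\wp'(a)$ with a definite sign, and then I would pick out the sign by a limiting check. The cleanest such check is the one already carried out in the Example with $\zeta = 0$: there $\gamma = 2\alpha + iK'$, hence $a = \nu^{-1}(\gamma - iK') = 2\alpha/\nu$, and $\wp'(2\alpha/\nu)$ can be evaluated via (\ref{elliptic-double-alpha}) to give $\mu = 4\zeta_1\zeta_2\zeta_3 = b$, matching \cite{AP25}; this fixes the sign as $\mu = -2\wp'(a)$ (the minus appears once one tracks $\wp'$ of a negative-argument point, since $\wp'$ is odd). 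Finally I would translate $-2\wp'(a)$ back into Jacobi functions: from (\ref{rel-Jac-Wei}) one has $\wp(x) = e_3 + \nu^2/\sn^2(\nu x)$, so $\wp'(x) = -2\nu^3 \cn(\nu x)\dn(\nu x)/\sn^3(\nu x)$, and then the substitution $\nu a = \gamma - iK'$ together with the shift identities (\ref{rel-2}) turns $\sn,\cn,\dn$ of $\gamma - iK'$ into $1/(k\sn\gamma)$, $-i\dn\gamma/(k\sn\gamma)$, $-i\cn\gamma/\sn\gamma$ respectively; multiplying these out gives $\mu = -4\nu^3 k^2 \sn(\gamma)\cn(\gamma)\dn(\gamma)$, which is (\ref{expr-omega}).

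The main obstacle is the bookkeeping in the residue computation at $x = \tfrac{v}{2}$: one must carry the expansions (\ref{phi-asymp}), (\ref{p-asymp}), (\ref{p-der-asymp}), (\ref{q-asymp}) to sufficient order, keep track of the common exponential prefactor $e^{(\alpha + iK')Z(\gamma) + i\pi\gamma/(2K)}$ (which cancels), and correctly identify which entry of the $2\times 2$ system (\ref{time}) gives the simplest equation — the entry $\mu p = (4i\zeta^3 + 2i\zeta\phi^2 - ic\zeta)p + (4\zeta^2\phi - 2i\zeta\phi' + 2\phi^3 - \phi'')q - c\phi q$, whose singular part at the simple-pole order directly isolates $\mu$ once the double-pole cancellation is used. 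A secondary subtlety is that the argument of $\wp'$ one lands on is $a$ with $\nu a = \gamma - iK'$, not $\gamma$ itself, so one cannot skip the shift identities (\ref{rel-2}); getting the factors of $i$ and $k$ right there is where sign or modulus errors would creep in. Once those calculations are organized, the second eigenfunction $\psi = (p_2,q_2)^T$ requires no separate work: it is obtained from the first by $\gamma \mapsto -\gamma$, under which $\sn(\gamma)\cn(\gamma)\dn(\gamma)$ is odd, so its admissible value is $+4\nu^3 k^2\sn\gamma\cn\gamma\dn\gamma = -\mu$, consistent with the $e^{\pm\mu t}$ pairing in (\ref{eigenfunction-time}).
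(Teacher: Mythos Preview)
Your proposal contains a factual error that undermines the residue-matching plan: $q_1$ is \emph{not} analytic at $x=\tfrac{v}{2}$. From the expansion (\ref{q-asymp}) in the proof of Lemma~\ref{lem-coefficients}, both $p_1$ and $q_1$ have simple poles at $x=\tfrac{v}{2}$ with \emph{identical} residues (each coming from the $c_1^{+}$ term). The relation (\ref{rel-c-1}) was derived by cancelling the simple-pole residue of $p'-i\zeta p-\phi q$, not by making $q_1$ regular. With both components singular, substituting into (\ref{time}) at $x=\tfrac{v}{2}$ produces a Laurent expansion whose leading cubic-pole terms cancel, but isolating $\mu$ then requires tracking the expansion to one or two further orders --- considerably more bookkeeping than you anticipate, and the ``single scalar equation'' you describe does not emerge at the simple-pole level.

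The paper avoids this entirely by a different and much cleaner manoeuvre. It observes from (\ref{Lame-sol-1-fin}) that $p_1(x_0)=q_1(x_0)$ at the \emph{regular} point $x_0=(\alpha-\gamma)/\nu=-a-\tfrac{v}{2}$ (the zero of $H(\nu x+\gamma-\alpha)$), i.e.\ $\rho(x_0)=1$. Setting $\rho=1$ in the algebraic quotient (\ref{rho-basic}) gives a single scalar relation for $\mu$ directly, with no Laurent expansion required. Substituting the Weierstrass expressions (\ref{phinova})--(\ref{form-3-squared}), (\ref{pcW}), (\ref{Lame-disp-W}) and using the addition formula $\zeta(a+v)-\zeta(a)-\zeta(v)=\tfrac{1}{2}\bigl(\wp'(a)-\wp'(v)\bigr)/\bigl(\wp(a)-\wp(v)\bigr)$ collapses this to $\mu=-2\wp'(a)$ with the sign determined intrinsically --- no separate $\zeta\to 0$ check is needed. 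Your final step, converting $-2\wp'(a)$ to Jacobi form via (\ref{rel-Jac-Wei}) and the shift identities (\ref{rel-2}), is exactly what the paper does.
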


\begin{proof}
We  note from (\ref{Lame-sol-1-fin}) that $p_1(x_0) = q_1(x_0)$ at 
$$
x_0 := \frac{\alpha - \gamma}{\nu} = -a - \frac{v}{2},
$$
where we have used (\ref{v-alpha-correspondence}) and (\ref{relation-gamma-a}). 
By using the quotient (\ref{rho-basic}) with $\rho(x_0) = 1$ for $x_0 = -a -\frac{v}{2}$, we obtain 
\begin{align*}
\mu &= 4 \zeta^2 \phi(x_0) - b + 2i \zeta \left( 2 \zeta^2 + \phi^2(x_0) - \frac{c}{2} - \phi'(x_0) \right) \\
&= 4 \zeta^2 (\zeta(-a) - \zeta(-a-v) - \zeta(v)) - 2 \wp'(v) 
+ 4i \zeta ( \zeta^2 + \wp(-a) - \wp(v)) \\
&= 4 (\wp(v) - \wp(a)) (\zeta(a+v) - \zeta(a) - \zeta(v)) - 2 \wp'(v), 
\end{align*}
where we have used expressions (\ref{pcW}), (\ref{phinova}), (\ref{philinha}), (\ref{form-3-squared}), and (\ref{Lame-disp-W}). 
By using \cite[8.177]{GR}, 
$$
\zeta(a+v) - \zeta(a) - \zeta(v) = \frac{1}{2} \frac{\wp'(a) - \wp'(v)}{\wp(a) - \wp(a)},
$$
we obtain $\mu = -2 \wp'(a)$. Since $a = (\gamma + i K')/\nu$, we use 
(\ref{rel-2}) and (\ref{rel-Jac-Wei}) to rewrite the expression for $\mu$ in the form:
$$
\mu = -2 \wp'(a) = \frac{4 \nu^3 \cn(\nu a) \dn(\nu a)}{\sn^3(\nu a)} = -4 \nu^3 k^2 \sn(\gamma) \cn(\gamma) \dn(\gamma), 
$$
which yields (\ref{expr-omega}).
\end{proof}

The explicit expression (\ref{expr-omega}) enables a unique parameterization of the square root of the characteristic polynomial $P(\zeta)$ in the definition 
(\ref{character-eq}) of $\mu$. We give explicit expressions 
for $\mu$ in the gaps $(\zeta_2,\zeta_1)$ and $(0,\zeta_3)$ to show that 
the different branches of the square root of $P(\zeta)$ are chosen in between the two gaps. 

\begin{example}
\underline{For $\zeta \in (\zeta_2,\zeta_1)$,} we have $\gamma \in (0,K)$. The expression (\ref{expr-omega}) is equivalent to
\begin{equation}
\label{omega-final-dark}
\mu  = -4 \sqrt{(\zeta_1^2 - \zeta^2) (\zeta^2 - \zeta_2^2) (\zeta^2 - \zeta_3^2)},\quad \zeta \in (\zeta_2,\zeta_1),
\end{equation}
due to (\ref{Lame-disp}). \underline{For $\zeta \in (0,\zeta_3)$,} we write $\gamma = \delta + i K'$ with $\delta \in (2\alpha,K)$. The expression (\ref{expr-omega}) can be rewritten as 
\begin{align}
\label{expr-omega-bright}
\mu = -4 \nu^3 k^2 \sn(\delta + iK') \cn(\delta + iK') \dn(\delta + iK') = \frac{4 \nu^3 \cn(\delta) \dn(\delta)}{\sn^3(\delta)},
\end{align}
which is equivalent to 
\begin{equation}
\label{omega-final}
\mu  = 4 \sqrt{(\zeta_1^2 - \zeta^2) (\zeta_2^2 - \zeta^2) (\zeta_3^2 - \zeta^2)},
\end{equation}
since the dispersion relation (\ref{Lame-disp}) can be rewritten for $\gamma = \delta + i K'$ as
$$
\zeta^2 = \zeta_1^2 - \frac{(\zeta_1^2 - \zeta_3^2)}{\sn^2(\delta)} = 
\zeta_2^2 - \frac{(\zeta_1^2 - \zeta_3^2) \dn^2(\delta)}{\sn^2(\delta)} = 
\zeta_3^2 - \frac{(\zeta_1^2 - \zeta_3^2) \cn^2(\delta)}{\sn^2(\delta)}.
$$
The sign of the square root branch has been changed between (\ref{omega-final-dark}) and  (\ref{omega-final}). When $\zeta \to 0$, we have $\mu \to 4 \zeta_1 \zeta_2 \zeta_3$ from (\ref{omega-final}), in which case the expressions (\ref{eigenfunction-time}) with (\ref{exact-2}) and (\ref{exact-1}) coincide with (2.27)--(2.28) in \cite{AP25}.
\end{example}

\section{Breathers on the elliptic wave background}
\label{sec-4}

We use a linear combination of the two eigenfunctions in Theorem \ref{th-main} in order to construct breathers by using the Darboux transformation (\ref{DT}). This leads to the expression (\ref{DT-expand}). We intend here to show for the gaps $(0,\zeta_3)$ and $(\zeta_2,\zeta_1)$ that the choice (\ref{c1-c2}) produces a bounded and real-valued solution $u = u(x,t)$ of the mKdV equation (\ref{mkdv}). Sections \ref{sec-4-1} and \ref{sec-4-2} report the corresponding results in the two gaps, from which breather solutions in Figures \ref{Fig:Breather2}--\ref{Fig:Breather12} and \ref{Fig:Breather2-dark}--\ref{Fig:Breather12-dark} are constructed. 

\subsection{Breathers in the half gap $(0,\zeta_3)$}
\label{sec-4-1}

By Lemma \ref{lem-beta}, we use the parameterization $\gamma = i K' + \delta$ with $\delta \in (2 \alpha,K)$ and associate the spectral parameter $\zeta$ 
with $\delta$ by using (\ref{zeta-3}). The eigenfunctions (\ref{Lame-sol-1-fin}) and (\ref{Lame-sol-2-fin}) change due to the relations (\ref{Wei-Jac-3}). Both terms in the linear superposition for either (\ref{Lame-sol-1-fin}) or (\ref{Lame-sol-2-fin}) have the same constant multiplicative factor, which we do not write to redefine the eigenfunctions in the form 
\begin{equation}
\label{p1}
\left( \begin{matrix} p_1(x) \\ q_1(x) \end{matrix} \right) = 
\left( \begin{matrix} -i\zeta\frac{\Theta(\nu x + \delta + \alpha)}{\Theta(\nu x + \alpha)} + \zeta_1 \frac{\Theta(0) H(2\alpha + \delta)}{H(\delta) \Theta(2\alpha)} \frac{\Theta(\nu x + \delta - \alpha)}{\Theta(\nu x - \alpha)} \\
-i\zeta\frac{\Theta(\nu x + \delta + \alpha)}{\Theta(\nu x + \alpha)} - \zeta_1 \frac{\Theta(0) H(2\alpha + \delta)}{H(\delta) \Theta(2\alpha)} \frac{\Theta(\nu x + \delta - \alpha)}{\Theta(\nu x - \alpha)} \end{matrix} \right)
e^{-\nu x \frac{H'(\delta)}{H(\delta)}}
\end{equation}
and 
\begin{equation}
\label{p2}
\left( \begin{matrix} p_2(x) \\ q_2(x) \end{matrix} \right) = 
\left( \begin{matrix} \zeta_1 \frac{\Theta(0) H(2\alpha + \delta)}{H(\delta) \Theta(2\alpha)}\frac{\Theta(\nu x - \delta + \alpha)}{\Theta(\nu x + \alpha)} +  i\zeta \frac{\Theta(\nu x - \delta - \alpha)}{\Theta(\nu x - \alpha)} \\
\zeta_1 \frac{\Theta(0) H(2\alpha + \delta)}{H(\delta) \Theta(2\alpha)}\frac{\Theta(\nu x - \delta + \alpha)}{\Theta(\nu x + \alpha)} -  i\zeta \frac{\Theta(\nu x - \delta - \alpha)}{\Theta(\nu x - \alpha)} 
\end{matrix} \right)  e^{\nu x \frac{H'(\delta)}{H(\delta)}}
\end{equation}
All elliptic functions are real-valued for real $\delta \in (2 \alpha,K)$, 
which justify the conditions (\ref{condition-1}) and (\ref{condition-2}). 
The time variable in (\ref{p1}) and (\ref{p2}) is set at $t = 0$. The dependence on time $t$ is given by (\ref{eigenfunction-time}) with $\mu$ given by (\ref{expr-omega-bright}).

\begin{figure}[htb!]
	\includegraphics[width=0.45\textwidth]{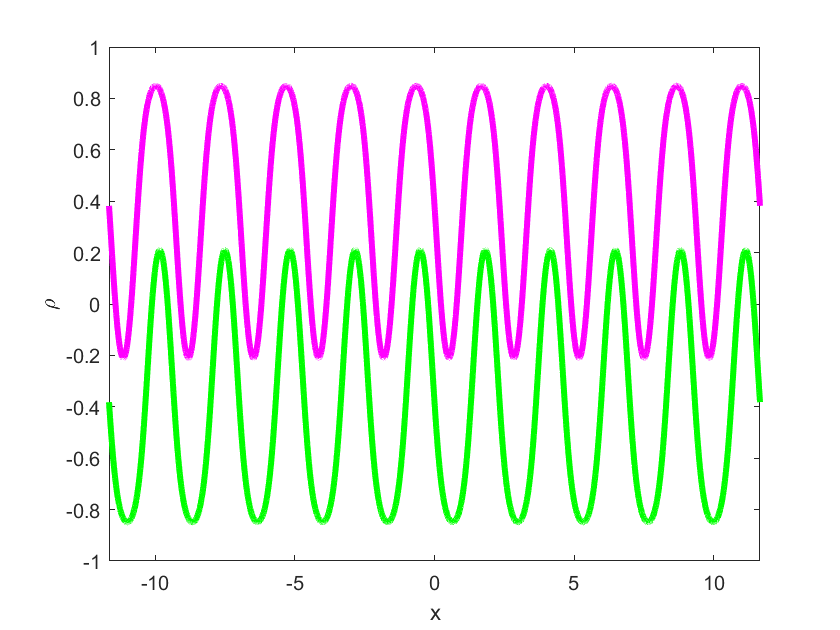}
	\includegraphics[width=0.45\textwidth]{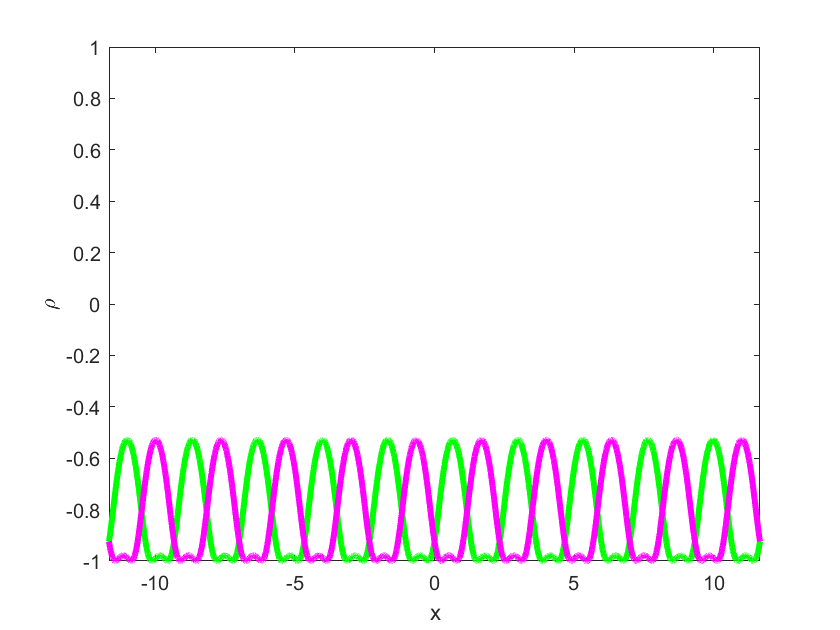}
	\caption{${\rm Re}(\rho)$ versus $x$ (left) and ${\rm Im}(\rho)$ versus $x$ (right), where $\rho = q/p$ is the quotient of the two components of eigenfunctions $\psi = (p,q)^T$ of the spectral problem (\ref{spectral}) with the elliptic potential $\phi$ for $(\zeta_1,\zeta_2,\zeta_3) = (2,1,0.5)$. The spectral parameter $\zeta$ is defined by (\ref{zeta-3}) with $\delta = \frac{1}{2} (2 \alpha + K)$. The green line shows $\rho$ for the first solution (\ref{p1}) and the magenta line shows $\rho$ for the second solution (\ref{p2}).}
	\label{Fig:Rho}
\end{figure}

For the numerical check, we plot in Figure \ref{Fig:Rho} the quotient $\rho = q/p$ (real part is shown on the left panel and imaginary part is on the right panel) for the first solution (\ref{p1}) (green line) and the second solution (\ref{p2}) (magenta line). The quotient $\rho$ is computed in two different ways: by using the elliptic functions (\ref{rho-basic}) with $\mu$ given by (\ref{expr-omega-bright}) and by using the explicit solutions (\ref{p1}) and (\ref{p2}). The two representations are mathematically equivalent and hence the computational difference between them is found within the machine precision error. 

The breather solutions are obtained from (\ref{DT-expand}) with the choice (\ref{c1-c2}), where $\kappa = \frac{H'(\delta)}{H(\delta)}$ is defined by (\ref{kappa}). The breather speed $c_s$ is given by (\ref{speed}) and the spatial decay rate of the breather at infinity is defined by $\kappa$. 
Since $\kappa > 0$ and $\mu > 0$, we have $c_s < c$.

\subsection{Breathers in the gap $(\zeta_2,\zeta_1)$}
\label{sec-4-2}
 
By Lemma \ref{lem-beta}, we use the parameter $\gamma \in (0,K)$ and associate the spectral parameter $\zeta$ by using (\ref{zeta-12}). The eigenfunctions (\ref{Lame-sol-1-fin}) and (\ref{Lame-sol-2-fin}) can be used directly since 
all elliptic functions are real-valued for real $\gamma \in (0,K)$. 
 
For the numerical check, we plot in Figure \ref{Fig:Rho-dark} the quotient $\rho = q/p$ (real part is shown on the left panel and imaginary part is on the right panel) for the first solution (\ref{Lame-sol-1-fin}) (green line) and the second solution (\ref{Lame-sol-2-fin}) (magenta line). The quotient $\rho$ is computed in two different ways: by using the elliptic functions (\ref{rho-basic}) 
with $\mu$ given by (\ref{expr-omega}) and by using the explicit solutions (\ref{Lame-sol-1-fin}) and (\ref{Lame-sol-2-fin}). The computationa difference between the two equivalent representations is again found within the machine precision error.  

\begin{figure}[htb!]
	\includegraphics[width=0.45\textwidth]{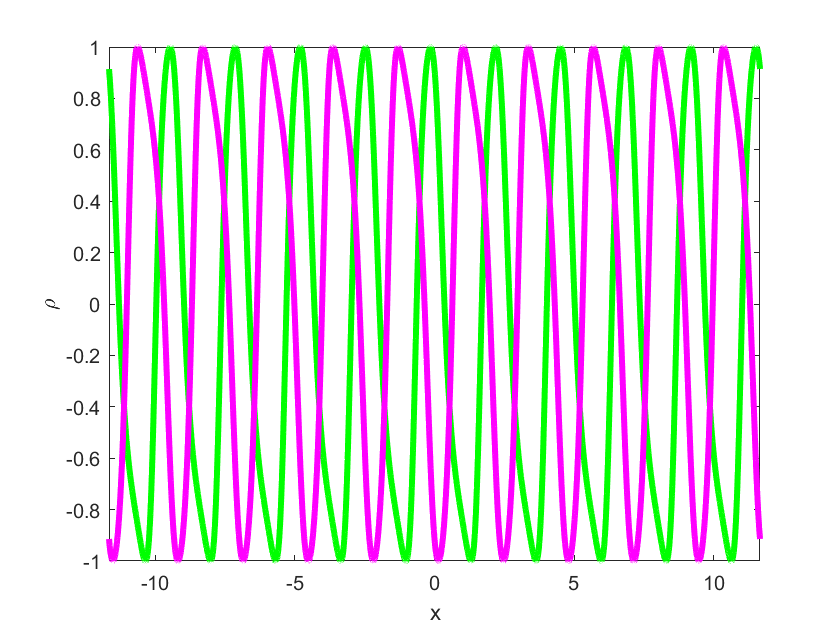}
	\includegraphics[width=0.45\textwidth]{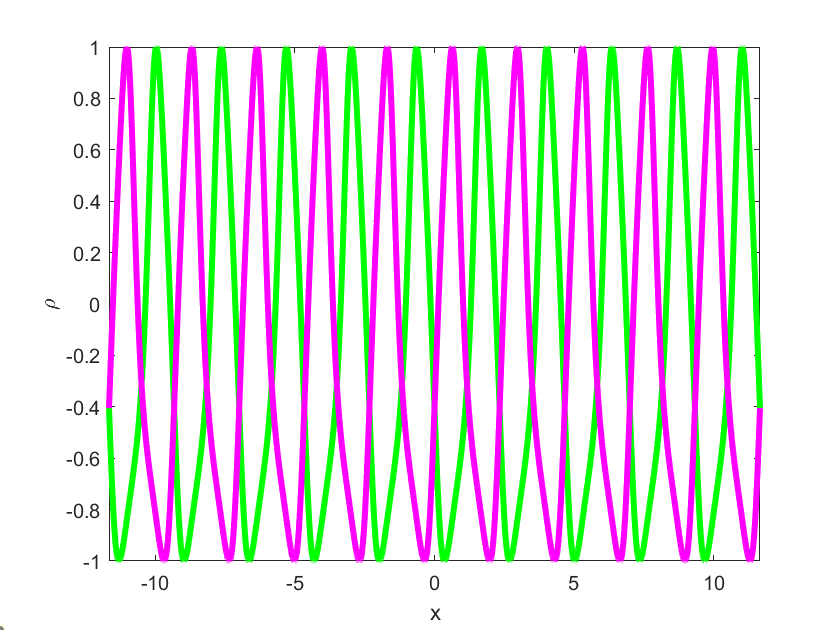}
	\caption{${\rm Re}(\rho)$ versus $x$ (left) and ${\rm Im}(\rho)$ versus $x$ (right), where $\rho = q/p$ is the quotient of the two components of eigenfunctions $\psi = (p,q)^T$ of the spectral problem (\ref{spectral}) with the elliptic potential $\phi$ for $(\zeta_1,\zeta_2,\zeta_3) = (2,1,0.5)$. The spectral parameter $\zeta$ is defined by (\ref{zeta-12}) with $\gamma = \frac{1}{2} K$. The green line shows $\rho$ for the first solution (\ref{Lame-sol-1-fin}) and the magenta line shows $\rho$ for the second solution (\ref{Lame-sol-2-fin}).}
	\label{Fig:Rho-dark}
\end{figure}

If we proceed with the breather solution based on the  transformation formula (\ref{DT-expand}) with eigenfunctions (\ref{Lame-sol-1-fin}) and (\ref{Lame-sol-2-fin}), we get singular solutions for real $x \in \mathbb{R}$. 
However, based on \cite{HMP23,MP24}, we change $x \to x + i K'$ in order to get  bounded breather solutions. In application to the elliptic wave with the profile $\phi$, the transformation $x \to x + i K'$ leaves $\phi$ real-valued but transforms the bounded solution (\ref{form-2}) into the singular solution, 
\begin{equation}
\label{form-2-singular}
\phi(x) = \dfrac{2 (\zeta_1 + \zeta_3)(\zeta_2 + \zeta_3) k^2 \sn^2(\nu x)}{(\zeta_1 + \zeta_3) k^2 {\rm sn}^2(\nu x,k) - (\zeta_1 - \zeta_2)} - \zeta_1 - \zeta_2 - \zeta_3,
\end{equation}
where we have used (\ref{rel-2}). In application to the eigenfunctions (\ref{Lame-sol-1-fin}) and (\ref{Lame-sol-2-fin}), the transformation $x \to x + i K'$ yields up to the constant multiplicative factor:
\begin{equation}
\label{Lame-sol-1-dark}
\left( \begin{matrix} p_1(x) \\ q_1(x) \end{matrix} \right) = 
\left( \begin{matrix}  -i\zeta\frac{\Theta(\nu x + \gamma + \alpha)}{H(\nu x + \alpha)} + \zeta_1 \frac{\Theta(0) \Theta(2\alpha + \gamma)}{\Theta(\gamma) \Theta(2\alpha)} \frac{\Theta(\nu x + \gamma - \alpha)}{H(\nu x - \alpha)} \\
-i\zeta\frac{\Theta(\nu x + \gamma + \alpha)}{H(\nu x + \alpha)} - \zeta_1 \frac{\Theta(0) \Theta(2\alpha + \gamma)}{\Theta(\gamma) \Theta(2\alpha)} \frac{\Theta(\nu x + \gamma - \alpha)}{H(\nu x - \alpha)} \end{matrix} \right) 
e^{-\nu x Z(\gamma)}
\end{equation}
and
\begin{equation}
\label{Lame-sol-2-dark}
\left( \begin{matrix} p_2(x) \\ q_2(x) \end{matrix} \right) = 
\left( \begin{matrix}  \zeta_1 \frac{\Theta(0) \Theta(2\alpha + \gamma)}{\Theta(\gamma) \Theta(2\alpha)}\frac{\Theta(\nu x - \gamma + \alpha)}{H(\nu x + \alpha)} +  i\zeta \frac{\Theta(\nu x - \gamma - \alpha)}{H(\nu x - \alpha)} \\ 
\zeta_1 \frac{\Theta(0) \Theta(2\alpha + \gamma)}{\Theta(\gamma) \Theta(2\alpha)}\frac{\Theta(\nu x - \gamma + \alpha)}{H(\nu x + \alpha)} -  i\zeta \frac{\Theta(\nu x - \gamma - \alpha)}{H(\nu x - \alpha)} \end{matrix} \right) e^{\nu x Z(\gamma)},
\end{equation}
where we have used (\ref{Wei-Jac-3}). All elliptic functions 
in (\ref{Lame-sol-1-dark}) and (\ref{Lame-sol-2-dark}) 
are real-valued for real $\gamma \in (0,K)$, 
which justify the conditions (\ref{condition-1}) and (\ref{condition-2}). 
The time variable in (\ref{Lame-sol-1-dark}) and (\ref{Lame-sol-2-dark}) is set at $t = 0$. The dependence on time $t$ is given by (\ref{eigenfunction-time}) with $\mu$ given by (\ref{omega-final-dark}). 

The breather solutions are obtained from (\ref{DT-expand}) with the choice (\ref{c1-c2}), where $\kappa = Z(\gamma)$ is defined by (\ref{kappa}). Although $\phi$ given by (\ref{form-2-singular}) is singular for real $x \in \mathbb{R}$, the solution $u$ in (\ref{DT-expand}) with the choice (\ref{c1-c2}) is bounded for real $x \in \mathbb{R}$. The breather speed $c_s$ is given by (\ref{speed}) and the spatial decay rate of the breather at infinity is defined by $\kappa$. Since $\kappa > 0$ and $\mu < 0$, we have $c_s > c$.

\section{Factorization of elliptic eigenfunctions}
\label{sec-5}

We have obtained eigenfunctions of the spectral problem (\ref{spectral}) in the explicit form (\ref{Lame-sol-1-fin}) and (\ref{Lame-sol-2-fin}). The spectral parameter $\zeta \in \mathbb{R}$ is represented by the shift parameter $\gamma \in [0,K] \times [0,iK']$ from the dispersion relation (\ref{dispersion}), which also parameterizes the eigenfunctions (\ref{Lame-sol-1-fin}) and (\ref{Lame-sol-2-fin}). This explicit representation of eigenfunctions is now compared with the factorized form discussed in \cite{AP25} and 
used in other works \cite{Geng,Geng3}. We obtain the factorized form
in Section \ref{sec-5-0}. Furthermore, we show that the poles and zeros of the elliptic eigenfunctions depend on the spectral parameter via the branch point singularities, which cannot be unfolded by using elliptic functions. This is done 
for the simpler cases of rational and hyperbolic degenerations of the elliptic functions in Sections \ref{sec-5-1} and \ref{sec-5-2} respectively. 

\subsection{Derivation of the factorized forms}
\label{sec-5-0}

By using the quotient (\ref{rho-basic}) with $b = 4 \zeta_1 \zeta_2 \zeta_3$ and 
$c = 2 (\zeta_1^2 + \zeta_2^2 + \zeta_3^2)$, and a unique choice for $\mu$ in (\ref{expr-omega}), we introduce zeros of numerators and denominators 
in the general case $0 < \zeta_3 < \zeta_2 < \zeta_1$. In particular, we prove in Lemma \ref{prop-zeros} below 
that there exist exactly two symmetric  pairs of four roots in $x$ labeled as 
$\{ \pm x_1, \pm x_2\}$ in $[-\omega,\omega] \times [-\omega',\omega']$ 
for the numerator of the first quotient in (\ref{rho-basic}),
\begin{equation}
\label{square-root-1}
2 i \zeta \left[ \phi^2(x) + 2 \zeta^2 - \zeta_1^2 - \zeta_2^2 - \zeta_3^2 \right] + \mu = 0,
\end{equation}
and exactly two symmetric  pairs of four roots in $x$ labeled as 
$\{ \pm x_1^*, \pm x_2^*\}$ in $[-\omega,\omega] \times [-\omega',\omega']$ 
for the denominator of the second quotient in (\ref{rho-basic}),
\begin{equation}
\label{square-root-2}
2 i \zeta \left[ \phi^2(x) + 2 \zeta^2 - \zeta_1^2 - \zeta_2^2 - \zeta_3^2 \right] - \mu = 0,
\end{equation}
where $\omega$ and $\omega'$ are given by (\ref{periods}).
Furthermore, we prove in Lemma \ref{prop-zeros-2} that two roots from $\{ \pm x_1, \pm x_2\}$ and two roots from $\{ \pm x_1^*, \pm x_2^*\}$ are 
four roots in $x$ for the denominator of the first quotient in (\ref{rho-basic}),
\begin{equation}
\label{zero-denominator}
\zeta^2 \phi(x) - \frac{i}{2} \zeta \phi'(x) - \zeta_1 \zeta_2 \zeta_3 = 0,
\end{equation}
in $[-\omega,\omega] \times [-\omega',\omega']$, whereas the complement from the two sets of roots give four roots in $x$ for the numerator of the second quotient in (\ref{rho-basic}),
\begin{equation}
\label{zero-denominator-recipr}
\zeta^2 \phi(x) + \frac{i}{2} \zeta \phi'(x) - \zeta_1 \zeta_2 \zeta_3 = 0.
\end{equation}
For convenience, we denote roots of (\ref{square-root-1}) and (\ref{square-root-2}), which are simultaneously roots of (\ref{zero-denominator}) by $\{x_1,x_2 \}$ and $\{ x_1^*,x_2^*\}$ respectively. Since $\phi$ is even and $\phi'$ is odd in $x$, the symmetry between (\ref{zero-denominator}) and (\ref{zero-denominator-recipr}) implies that 
roots of (\ref{zero-denominator-recipr}) are given by $\{-x_1,-x_2 \}$ and $\{ -x_1^*,-x_2^*\}$.

\begin{lemma}
	\label{prop-zeros}
There exist exactly two symmetric pairs of four roots in $[-\omega,\omega] \times [-\omega',\omega']$ with respect to $x$ for either  (\ref{square-root-1}) or (\ref{square-root-2}). 
\end{lemma}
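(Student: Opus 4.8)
The plan is to rewrite (\ref{square-root-1}) and (\ref{square-root-2}) as level-set equations $\phi^2(x)=C_\pm$ for a single even elliptic function $\phi^2$ of order four, and then to read off both the number of roots and their $\pm$-symmetry from classical Liouville theory. Since $\zeta\neq0$ under the hypotheses for which (\ref{rho-basic}) was written, (\ref{square-root-1}) is equivalent to $\phi^2(x)=C_+$ and (\ref{square-root-2}) to $\phi^2(x)=C_-$, where $C_\pm:=\zeta_1^2+\zeta_2^2+\zeta_3^2-2\zeta^2\mp\frac{\mu}{2i\zeta}$ is a fixed finite constant depending only on $\zeta$ and $\mu$ (it may also be expressed in Weierstrass variables through (\ref{rel-3}), (\ref{Lame-disp-W}) and $\mu=-2\wp'(a)$, but its precise value plays no role). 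It therefore suffices to show that for every constant $C$ the equation $\phi^2(x)=C$ has exactly four roots in $[-\omega,\omega]\times[-\omega',\omega']$, counted with multiplicity, and that these occur in pairs $\pm x_j$.

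First I would establish that $\phi^2$ is an elliptic function of $x$ of order exactly four with periods $2\omega,2\omega'$. This is immediate from the representation (\ref{form-3-squared}), namely $\phi^2(x)=\wp(x+\tfrac{v}{2})+\wp(x-\tfrac{v}{2})+\wp(v)$: the first two terms produce double poles at $x\equiv-\tfrac{v}{2}$ and $x\equiv\tfrac{v}{2}$, and these are distinct points of the fundamental cell because $\wp'(v)=\tfrac{b}{2}=2\zeta_1\zeta_2\zeta_3\neq0$, so $v$ is not a point of order two in the lattice and hence $\tfrac{v}{2}\not\equiv-\tfrac{v}{2}$. Thus $\phi^2$ has precisely two double poles in $[-\omega,\omega]\times[-\omega',\omega']$, i.e.\ it is of order four. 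By the standard fact (the third Liouville theorem) that a non-constant elliptic function of order $n$ attains every value in $\mathbb{C}\cup\{\infty\}$ exactly $n$ times in a fundamental cell, counted with multiplicity, the function $\phi^2-C$ has exactly four zeros there for every $C$; taking $C=C_\pm$ gives four roots of (\ref{square-root-1}) and of (\ref{square-root-2}).

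Second, I would use that $\phi$ is even in $x$ — clear from (\ref{form-2}), where $\phi$ depends on $x$ only through $\sn^2(\nu x)$, or from (\ref{phinova}) and the oddness of Weierstrass' zeta function — so $\phi^2$ is even and the zero set of $\phi^2-C_\pm$ in the fundamental cell is invariant under $x\mapsto-x$. Pairing each root with its reflection organizes the four roots into two symmetric pairs $\{x_1,-x_1\}$ and $\{x_2,-x_2\}$, which is the assertion, the labelling $\{\pm x_1,\pm x_2\}$ being fixed up to the obvious relabelling. (In the degenerate situations where a root $x_0$ falls on one of the half-periods $0,\omega,\omega',\omega+\omega'$, then $\phi^2$ being even and $2x_0$ being a period force $(\phi^2)'(x_0)=0$, so $x_0$ is at least a double root and is counted as the symmetric pair $\{x_0,-x_0\}$ with $x_0\equiv-x_0$; a double root off the $2$-torsion is likewise counted together with its mirror image.)

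I do not expect a serious obstacle: the only substantive input is the identification of $\phi^2$ as an order-four even elliptic function via (\ref{form-3-squared}), after which the count and the symmetry are routine. The single point that needs a little care is the multiplicity bookkeeping in the exceptional configurations above, where two of $\pm x_1,\pm x_2$ coincide or land on the $2$-torsion. This can be handled either by simply tracking roots with multiplicity — which is all that the subsequent Darboux construction uses — or, if genuine distinctness is wanted for the relevant $\zeta$, by observing that a multiple root $x_0$ of $\phi^2-C_\pm$ forces $\phi(x_0)\phi'(x_0)=0$ and hence, via $\mu^2=-16P(\zeta)$, a single polynomial constraint on $\zeta$, which fails for all but finitely many $\zeta$ in each gap. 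The same case analysis will reappear when one identifies which of these roots are shared with (\ref{zero-denominator}) in Lemma \ref{prop-zeros-2}.
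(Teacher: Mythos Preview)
Your argument is correct and cleaner than the paper's. Both proofs begin by recognizing that, for $\zeta\neq0$, the equations reduce to level sets $\phi^2(x)=C_\pm$; from there the paper takes an explicit route: it rewrites $\phi^2$ in Jacobi form via (\ref{form-3-squared}), (\ref{rel-Jac-Wei}), (\ref{rel-2}) and the addition formula for $\sn$, arrives at a concrete bi-quadratic equation (\ref{root-elliptic}) in $\sn^2(\nu x)$, and then invokes a cited result (Proposition~3.1 in \cite{AP25}) that $\sn^2(z)=w$ has exactly two solutions in the fundamental rectangle, with the $\pm$-symmetry read off from the evenness of $\sn^2$. You instead identify $\phi^2$ as an even elliptic function of order four directly from (\ref{form-3-squared}) and apply the third Liouville theorem, bypassing all explicit computation. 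Your route is shorter and more conceptual; the paper's explicit reduction, on the other hand, pays dividends later in Section~\ref{sec-5-2}, where the same equation (\ref{root-elliptic}) is reused and solved in closed form for the hyperbolic degeneration to display the roots concretely.
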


\begin{proof}
	For simplicity, we deal with solutions of (\ref{square-root-1}). By using the analytical representation (\ref{form-3-squared}) with (\ref{pcW}), we can rewrite (\ref{square-root-1}) in the equivalent form 
	$$
	2 i \zeta \left[ \wp\left( x + \frac{v}{2} \right) + 
	\wp\left( x - \frac{v}{2} \right) + 2 \zeta^2 - \frac{2}{3} (\zeta_1^2 + \zeta_2^2 + \zeta_3^2) \right] + \mu = 0.
	$$
By using (\ref{rel-Jac-Wei}), (\ref{rel-par-e-zeta}), and (\ref{v-alpha-correspondence}), we obtain 
	\begin{equation}
	\label{roots-sn-squared}
	2 i \zeta (\zeta_1^2 - \zeta_2^2) \left[  \sn^2(z + \alpha) + \sn^2(z - \alpha) \right]  +  4i \zeta (\zeta^2 - \zeta_1^2) + \mu = 0,
	\end{equation}
	where $z = \nu x$. With the help of the addition formula
	\begin{equation*}
	\sn(u \pm v) = \frac{\sn(u) \cn(v) \dn(v) \pm \sn(v) \cn(u) \dn(u)}{1 - k^2 \sn^2(u) \sn^2(v)},
	\end{equation*}
	we rewrite (\ref{roots-sn-squared}) in the form 
	$$
	4i \zeta (\zeta_1^2 - \zeta_2^2) \frac{\sn^2(z) \cn^2(\alpha) \dn^2(\alpha) + \sn^2(\alpha) \cn^2(z) \dn^2(z)}{[1 - k^2 \sn^2(z) \sn^2(\alpha)]^2} + 
	4i \zeta (\zeta^2 - \zeta_1^2) + \mu = 0.
	$$
Eliminating $\sn(\alpha)$, $\cn(\alpha)$, and $\dn(\alpha)$ from (\ref{elliptic-alpha}) yields
	\begin{align}
4i \zeta (\zeta_1 - \zeta_2) (\zeta_1 + \zeta_3) \frac{(\zeta_2 + \zeta_3)^2 \sn^2(z) + (\zeta_1^2 - \zeta_3^2) \cn^2(z) \dn^2(z)}{[(\zeta_1 + \zeta_3) - (\zeta_1 - \zeta_2) \sn^2(z)]^2} + 4i \zeta (\zeta^2 - \zeta_1^2) + \mu = 0.
	\label{root-elliptic}
	\end{align}
By using the fundamental relations (\ref{fund-rel}), equation (\ref{root-elliptic}) yields a bi-quadratic equation for $\sn^2(z)$ with two roots. For each root of $\sn^2(z)$, there exist exactly two solutions of $\sn^2(z) = w \in \mathbb{C}$ in $[-K,K] \times [-iK',iK']$, see Proposition 3.1 in \cite{AP25}. This construction yields exactly four solutions of (\ref{square-root-1}) for $x$ in $[-\omega,\omega] \times [-\omega',\omega']$. Since $\sn^2(z)$ is even in $z$, the four roots of (\ref{square-root-1}) form two symmetric pairs. The proof for (\ref{square-root-2}) is identical.
\end{proof}

\begin{lemma}
	\label{prop-zeros-2} 
	Two roots of (\ref{square-root-1}) and two roots of (\ref{square-root-2}) are exactly four roots of (\ref{zero-denominator}) in $[-\omega,\omega] \times [-\omega',\omega']$ with respect to $x$.
\end{lemma}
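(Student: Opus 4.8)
The plan is to reduce the statement to a comparison of the zero divisors of four elliptic functions, using a single algebraic identity read off from (\ref{rho-basic}). Write $N_1$, $N_2$ for the left-hand sides of (\ref{square-root-1}) and (\ref{square-root-2}), and $D$, $\tilde{D}$ for the left-hand sides of (\ref{zero-denominator}) and (\ref{zero-denominator-recipr}). With $b = 4\zeta_1\zeta_2\zeta_3$ and $c = 2(\zeta_1^2 + \zeta_2^2 + \zeta_3^2)$, the numerator of the first quotient in (\ref{rho-basic}) is $N_2$, its denominator is $4D$, the numerator of the second quotient is $4\tilde{D}$, and its denominator is $N_1$; so the equality of the two forms of $\rho = q/p$ in (\ref{rho-basic}) — which is exactly the characteristic identity (\ref{character-eq}) together with (\ref{second})--(\ref{third}) — gives, after cross-multiplication,
\begin{equation*}
N_1 N_2 = -16\,D\,\tilde{D}.
\end{equation*}
Everything else is bookkeeping of zeros and poles.

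First I would record the pole structure. By (\ref{phi-asymp}) the profile $\phi$ has simple poles at $\pm v/2$, so $\phi^2$ and $\phi'$ have genuine double poles there, and the leading pole terms are not cancelled by the lower-order terms: the $\phi'$ term dominates $D$ and $\tilde{D}$, while the $\phi^2$ term dominates $N_1$ and $N_2$. Hence each of $N_1, N_2, D, \tilde{D}$ is an elliptic function of order exactly $4$ with the periods $2\omega, 2\omega'$ from (\ref{periods}), so each has a zero divisor of degree $4$ in the fundamental parallelogram $[-\omega,\omega]\times[-\omega',\omega']$. Since $\phi$ is even and $\phi'$ is odd, $N_1$ and $N_2$ are even in $x$ (so their zero divisors are symmetric under $x\mapsto -x$, which is the content of Lemma \ref{prop-zeros}), and $\tilde{D}(x) = D(-x)$, so the zero divisor of $\tilde{D}$ equals $-S$, where $S$ denotes the zero divisor of $D$. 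Moreover, since $\zeta$ lies in an open gap, $\mu\neq 0$ by (\ref{omega}); a common zero of $N_1$ and $N_2$ would force $N_1 - N_2 = 2\mu = 0$, so $N_1$ and $N_2$ have disjoint zero sets.

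With these facts the identity $N_1 N_2 = -16 D\tilde{D}$ reads, at the level of divisors, $Z_1 + Z_2 = S + (-S)$, where $Z_i$ is the zero divisor of $N_i$. Every point of $\supp S$ is a zero of $N_1 N_2$, hence of exactly one of $N_1$, $N_2$; thus $S = S_1 + S_2$ with $\supp S_i \subseteq \supp Z_i$, and by the $x\mapsto -x$ symmetry of $\supp Z_i$ also $\supp(-S_i)\subseteq\supp Z_i$. Restricting the divisor identity $Z_1 + Z_2 = S_1 + S_2 + (-S_1) + (-S_2)$ to the support of $Z_1$, which is disjoint from that of $Z_2$, annihilates all terms supported on $\supp Z_2$ and leaves $Z_1 = S_1 + (-S_1)$; taking degrees, $4 = 2\deg S_1$, so $\deg S_1 = 2$, and likewise $\deg S_2 = 2$, with $\deg S_1 + \deg S_2 = 4 = \deg S$. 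This says precisely that two of the four zeros of (\ref{zero-denominator}) solve (\ref{square-root-1}) and the other two solve (\ref{square-root-2}), which is the claim; labelling the former by $\{x_1,x_2\}$ and the latter by $\{x_1^*,x_2^*\}$ and using $\tilde{D}(x)=D(-x)$ then produces the roots $\{-x_1,-x_2,-x_1^*,-x_2^*\}$ of (\ref{zero-denominator-recipr}).

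The only places that need care are the verification that no cancellation lowers the pole orders (handled above, since the $\phi'$ term dominates $D$ and the $\phi^2$ term dominates $N_i$ near $\pm v/2$) and the bookkeeping of multiplicities when some of the eight zeros coincide — which is why I would phrase the counting entirely in terms of divisors, so that multiplicities are carried automatically and no genericity hypothesis on $\zeta$ within the gap is needed. The identity $N_1 N_2 = -16 D\tilde{D}$ requires nothing beyond the consistency of the two expressions for $\rho$ in (\ref{rho-basic}), already established via (\ref{character-eq}); so the genuine content of the lemma is the elliptic-function degree count combined with the $x\mapsto -x$ symmetry, and that is the step I expect to be the crux.
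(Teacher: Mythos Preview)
Your proof is correct and rests on the same algebraic identity the paper uses: the paper squares (\ref{zero-denominator}) and invokes (\ref{third}) to arrive at (\ref{square-root}), which is exactly your identity $N_1 N_2 = -16\,D\,\tilde D$ rewritten. The substantive difference is in the counting step. The paper simply asserts that, because $\phi$ is even and $\phi'$ is odd, the four roots of $D$ ``do not form symmetric pairs'' and concludes the $2+2$ split from that; your divisor argument---using disjointness of $\supp Z_1$ and $\supp Z_2$ (from $\mu\neq 0$) together with the evenness $-Z_i=Z_i$ to obtain $Z_1=S_1+(-S_1)$ and hence $\deg S_1=2$---is a more careful version of the same idea that automatically handles multiplicities and does not rely on an unproven genericity claim about symmetric pairs. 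Either route needs $\mu\neq 0$ (equivalently $\zeta$ away from the band edges), which you flag and the paper leaves implicit.
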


\begin{proof}
By squaring (\ref{zero-denominator}) and using (\ref{third}), we obtain 
\begin{align*}
& \quad -\zeta^2 (\phi')^2 = 4 (\zeta^2 \phi - \zeta_1 \zeta_2 \zeta_3)^2 \\
&= -\zeta^2 (\phi^4 -2 (\zeta_1^2 + \zeta_2^2 + \zeta_3^2) \phi^2 + 8 \zeta_1 \zeta_2 \zeta_3 \phi + \zeta_1^4 + \zeta_2^4 + \zeta_3^4 - 2(\zeta_1^2 \zeta_2^2 + \zeta_1^2 \zeta_3^2 + \zeta_2^2 \zeta_3^2)).
\end{align*}
This yields a bi-quadratic equation 
\begin{equation}
\label{square-root}
4 \zeta^2 \left( \phi^2 + 2 \zeta^2 - \zeta_1^2 - \zeta_2^2 - \zeta_3^2 \right)^2 + \mu^2 = 0,
\end{equation}
where we have used (\ref{character-eq}) with $P(\zeta) = (\zeta^2 - \zeta_1^2)(\zeta^2 - \zeta_2^2)(\zeta^2 - \zeta_3^2)$. Extracting the square roots of (\ref{square-root}) yields (\ref{square-root-1}) and (\ref{square-root-2}). 
Since $\phi$ is even and $\phi'$ is odd in $x$, the roots of (\ref{zero-denominator}) do not form symmetric pairs. This implies that 
two of the four roots of (\ref{zero-denominator}) are given by roots 
of (\ref{square-root-1}) and the other two roots of (\ref{zero-denominator}) are given by roots of (\ref{square-root-2}). 
\end{proof}

With the use of Lemmas \ref{prop-zeros} and \ref{prop-zeros-2}, we obtain the factorized form for the eigenfunctions of the spectral problem (\ref{spectral-problem}). The factorized form provides an alternative 
definition of the eigenfunctions compared to the representations 
(\ref{Lame-sol-1-fin}) and (\ref{Lame-sol-2-fin}) in Theorem \ref{th-main}. 
The corresponding result is given by the following theorem. 

\begin{theorem}
	\label{th-factorizaton}
	Let $u(x,t) = \phi(x+ct)$ be defined by (\ref{form-2}) for $0 < \zeta_3 < \zeta_2 < \zeta_1$. Define $\{ x_1,x_2\}  \in [-\omega,\omega] \times [-\omega',\omega']$ and $\{ x_1^*,x_2^*\}  \in [-\omega,\omega] \times [-\omega',\omega']$ by roots of (\ref{square-root-1}) and (\ref{square-root-2}) respectively, which are simultaneously roots of (\ref{zero-denominator}). There exist two linearly independent solutions of the Lax system (\ref{LS}) in the form (\ref{eigenfunction-time}) with 
	\begin{equation}
	\label{p-gen}
	\left( \begin{matrix} p_1(x)  \\ q_1(x) \end{matrix} \right) = 
	\left( \begin{matrix} 
	\frac{H(\nu (x-x_1)) H(\nu (x-x_2))}{\Theta(\nu x-\alpha) \Theta(\nu x+ \alpha) \Theta(\alpha - \nu x_1) \Theta(\alpha - \nu x_2)} e^{-\frac{i\pi \nu}{2K} (x_1 + x_2)} \\ -
	\frac{H(\nu (x+x_1^*)) H(\nu (x+x_2^*))}{\Theta(\nu x-\alpha) \Theta(\nu x+ \alpha) \Theta(\alpha + \nu x_1^*) \Theta(\alpha + \nu x_2^*)} e^{\frac{i\pi \nu}{2K} (x_1^* + x_2^*)} \end{matrix} \right) e^{s x}.
	\end{equation}
	and
	\begin{equation}
	\label{p-gen-1}
	\left( \begin{matrix} p_2(x) \\ q_2(x) \end{matrix} \right) = 
	\left( \begin{matrix}
	\frac{H(\nu (x-x_1^*)) H(\nu (x-x_2^*))}{\Theta(\nu x-\alpha) \Theta(\nu x+ \alpha) \Theta(\alpha + \nu x_1^*) \Theta(\alpha + \nu x_2^*)} e^{\frac{i\pi \nu}{2K} (x_1^* + x_2^*)} \\
	\frac{H(\nu (x+x_1)) H(\nu (x+x_2))}{\Theta(\nu x-\alpha) \Theta(\nu x+ \alpha) \Theta(\alpha - \nu x_1) \Theta(\alpha - \nu x_2)} e^{-\frac{i\pi \nu }{2K} (x_1 + x_2)} \end{matrix} \right) e^{-s x},
	\end{equation}
	where
\begin{align}
s = i \frac{4 \zeta^4 - 4 \zeta^2 (\zeta_2 + \zeta_3 - \zeta_1)^2 
	- i \zeta \mu + 4 \zeta_1 \zeta_2 \zeta_3 (\zeta_2 + \zeta_3 - \zeta_1)}{4 \zeta^3 +4  \zeta (\zeta_2 \zeta_3 - \zeta_1 \zeta_2 - \zeta_1 \zeta_3) - i \mu} 
+ \frac{\nu H'(\nu x_1)}{H(\nu x_1)} + \frac{\nu H'(\nu x_2)}{H(\nu x_2)}.
\label{s-final}
\end{align}	
\end{theorem}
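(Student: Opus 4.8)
The plan is to exploit the fact that (\ref{p-gen})--(\ref{p-gen-1}) is merely a second way of writing the eigenfunctions already produced in Theorem \ref{th-main}, so I would read off the zeros and poles of the components $p_{1},q_{1},p_{2},q_{2}$ and then reconstruct them. After clearing denominators in (\ref{Lame-sol-1-fin}), $p_{1}$ is an entire combination of products of theta functions divided by $\Theta(\nu x-\alpha)\Theta(\nu x+\alpha)$ and multiplied by $e^{-\nu x Z(\gamma)}$; since $\zeta\neq0$, the expansions (\ref{phi-asymp}) and (\ref{Theta-asymp}) show that its only singularities are the two simple poles $\pm v/2$ (the poles of $\phi$). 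A meromorphic function on $\mathbb C$ that transforms by constant multiplicative factors under both periods $2\omega$ and $2\omega'$ has equal numbers of zeros and poles in a fundamental cell, so $p_{1}$ has exactly two zeros in $[-\omega,\omega]\times[-\omega',\omega']$; likewise for $q_{1},p_{2},q_{2}$. The whole problem then reduces to locating these zeros and pinning down the exponential prefactor.

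To locate the zeros I would use both representations of the quotient $\rho=q_{1}/p_{1}$ in (\ref{rho-basic}) together with Lemmas \ref{prop-zeros} and \ref{prop-zeros-2}. Since $p_{1}$ and $q_{1}$ share the simple poles $\pm v/2$ and cannot vanish simultaneously (by uniqueness for the first-order system (\ref{spectral})), the poles of $\rho$ are exactly the zeros of $p_{1}$ and the zeros of $\rho$ are exactly the zeros of $q_{1}$. A pole of $\rho$ forces the denominator of the first representation in (\ref{rho-basic}) to vanish, i.e.\ (\ref{zero-denominator}) to hold, and the denominator of the second representation to vanish, i.e.\ (\ref{square-root-1}) to hold; by Lemma \ref{prop-zeros-2} these two equations have exactly the two common roots $\{x_{1},x_{2}\}$, which must therefore be the zeros of $p_{1}$. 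The same computation gives the zeros of $q_{1}$ at $\{-x_{1}^{*},-x_{2}^{*}\}$, and the replacement $\gamma\mapsto-\gamma$ (hence $\mu\mapsto-\mu$), which interchanges (\ref{square-root-1}) $\leftrightarrow$ (\ref{square-root-2}) and (\ref{zero-denominator}) $\leftrightarrow$ (\ref{zero-denominator-recipr}), gives the zeros of $p_{2}$ at $\{x_{1}^{*},x_{2}^{*}\}$ and of $q_{2}$ at $\{-x_{1},-x_{2}\}$. These are precisely the divisors of the theta quotients in (\ref{p-gen})--(\ref{p-gen-1}), so the ratio of $p_{1}$ to its theta quotient is entire, nonvanishing and transforms by constant factors under $2\omega,2\omega'$, hence equals $C\,e^{sx}$; the same $s$ serves for $q_{1}$ because $q_{1}=\phi^{-1}(p_{1}'-i\zeta p_{1})$ carries the factor $e^{sx}$ through unchanged, and $(p_{2},q_{2})$ is the reciprocal Bloch solution and so carries $e^{-sx}$. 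The remaining normalization constants in (\ref{p-gen})--(\ref{p-gen-1}) are fixed by matching residues at $\pm v/2$ against the pole expansions computed in the proof of Lemma \ref{lem-coefficients}, and the time factors $e^{\pm\mu t}$ are inherited from Theorem \ref{th-main}.

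Finally I would extract $s$ by comparing logarithmic derivatives at $x=0$. On one side, the spectral equation (\ref{spectral}) gives $p_{1}'(0)/p_{1}(0)=i\zeta+\phi(0)\rho(0)$, and here the two decisive simplifications are $\phi'(0)=0$ ($\phi$ is even) and $\phi(0)=\zeta_{2}+\zeta_{3}-\zeta_{1}$ (from (\ref{form-2}) at $\sn(0)=0$); substituting these into (\ref{rho-basic}) turns $\rho(0)$, and hence $p_{1}'(0)/p_{1}(0)$, into the rational function of $\zeta,\mu,\zeta_{1},\zeta_{2},\zeta_{3}$ appearing in (\ref{s-final}). On the other side, differentiating the logarithm of the theta quotient in (\ref{p-gen}) at $x=0$, the evenness of $\Theta$ makes the two $\Theta(\nu x\pm\alpha)$ contributions cancel while the oddness of $H$ converts the two $H(\nu(x-x_{j}))$ contributions into $\tfrac{\nu H'(\nu x_{1})}{H(\nu x_{1})}+\tfrac{\nu H'(\nu x_{2})}{H(\nu x_{2})}$; equating the two expressions for $p_{1}'(0)/p_{1}(0)$ yields (\ref{s-final}). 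I expect the main obstacle to be the bookkeeping of the previous paragraph: keeping straight which of the four roots of (\ref{square-root-1})/(\ref{square-root-2}) survive as zeros of each of $p_{1},q_{1},p_{2},q_{2}$ after the cancellations in (\ref{rho-basic}), and checking that the generic configuration — the points $x_{1},x_{2},x_{1}^{*},x_{2}^{*}$ distinct from one another and from $\pm v/2$, with each component having two simple zeros — is enough, the non-generic coalescences being recovered afterwards by continuity in $(\zeta_{1},\zeta_{2},\zeta_{3},\zeta)$.
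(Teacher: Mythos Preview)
Your approach is correct but genuinely different from the paper's. The paper never appeals to Theorem \ref{th-main}; instead it works directly from the first--order equation $\partial_x\log p_1=i\zeta+\phi\rho$ in (\ref{p-eq}), rewrites $\phi$ via the linear fractional relation (\ref{lin-frac-trans}) to Weierstrass' $\wp$, and then integrates term by term using the closed forms $\int(\wp(x)-\wp(x_0))^{-1}dx$ from \cite{GR}. This produces $\log(\phi^2-\upsilon^2)$ plus explicit logarithms of $H$--ratios, and exponentiation together with the factorization (\ref{phi-squared-1}) yields (\ref{p-1}) and the expression for $s$; the same is done for the second solution, and the paper then needs a separate Wronskian computation (Lemma \ref{lem-s}) to conclude $s^*=-s$. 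Your route --- locating the two zeros of each component by intersecting the two denominator conditions from (\ref{rho-basic}), reconstructing the theta quotient from its divisor, and reading off $s$ from $p_1'(0)/p_1(0)$ using $\phi'(0)=0$, $\phi(0)=\zeta_2+\zeta_3-\zeta_1$ --- bypasses all the explicit antiderivatives and is considerably cleaner. Two small points to tighten: first, the logarithmic derivative of $H(\nu(x-x_j))$ at $x=0$ is $-\nu H'(\nu x_j)/H(\nu x_j)$ (since $H'/H$ is odd), so you pick up the $H'/H$ terms with the correct sign only after moving them to the other side of the identity $p_1'(0)/p_1(0)=s+(\text{theta log-derivative})$; second, ``reciprocal Bloch solution'' only gives $s+s^*\in\tfrac{i\pi}{\omega}\mathbb Z$, not $s+s^*=0$ on the nose --- you still need the Wronskian constancy argument of Lemma \ref{lem-s} (or an equivalent evaluation of $p_2'(0)/p_2(0)$) to pin this down.
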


\begin{proof}
	The proof is computational and consists of the following steps. 

\underline{The first quotient $\rho = \frac{q_1}{p_1}$.} \\
Poles of both numerator and denominator in (\ref{rho-basic}) coincide with the poles of $\phi^2(x)$ and $\phi'(x)$, which are double poles at the same locations $\pm \frac{v}{2}$, see (\ref{philinha}) and (\ref{form-3-squared}). Since both the numerator and the denominator in each quotient of (\ref{rho-basic}) are elliptic functions, which have four zeros in $[-\omega,\omega] \times [-\omega',\omega']$  and two double poles at $\pm \frac{v}{2}$, the factorized form for elliptic functions, see Proposition 3.3 in \cite{AP25}, implies 
for both quotients in (\ref{rho-basic}) that 
\begin{align}
\rho(x) &= C 
\frac{H(\nu (x + x_1^*)) H(\nu (x + x_2^*)) 
	\cancel{H(\nu (x - x_1^*)) H(\nu (x - x_2^*))}}{H(\nu (x - x_1)) H(\nu (x - x_2)) 
	\cancel{H(\nu (x - x_1^*)) H(\nu (x - x_2^*))}} \notag \\
&= C \frac{H(\nu (x + x_1^*)) H(\nu (x + x_2^*)) 
	\cancel{H(\nu (x + x_1)) H(\nu (x + x_2))}}{H(\nu (x - x_1)) H(\nu (x - x_2)) 
	\cancel{H(\nu (x + x_1)) H(\nu (x + x_2))}}, \label{rho-first}
\end{align} 
for some constant $C \in \R$. To obtain $C$ explicitly, we consider the quotients in (\ref{rho-basic}) at the pole singularity as $x \to -\frac{v}{2}$, that is, 
$x \to \nu^{-1} (iK'+\alpha)$.  By using (\ref{phi-asymp}) and (\ref{rho-first}), we obtain 
\begin{equation}
\label{constant-C-unique}
\lim_{x \to \nu^{-1} (iK'+\alpha)} \rho(x) = -1 = C e^{-\frac{i\pi \nu}{2K} (x_1 + x_2 + x_1^* + x_2^*)} \frac{\Theta(\alpha + \nu x_1^*) \Theta(\alpha + \nu x_2^*)}{\Theta(\alpha - \nu x_1) \Theta(\alpha - \nu x_2)}.
\end{equation}
This yields 
\begin{align}
\rho(x) &= -e^{\frac{i\pi \nu}{2K} (x_1 + x_2 + x_1^* + x_2^*)} 
\frac{H(\nu (x + x_1^*)) H(\nu (x + x_2^*)) \Theta(\alpha - \nu x_1) \Theta(\alpha - \nu x_2)}{H(\nu (x - x_1)) H(\nu (x - x_2)) \Theta(\alpha + \nu x_1^*) \Theta(\alpha + \nu x_2^*)}, \label{rho-final}
\end{align} 
which is the quotient $\rho = \frac{q_1}{p_1}$ of the first solution $\psi = (p_1,q_1)^T$ of the spectral problem (\ref{spectral}). The two poles and two zeros of the elliptic function $\rho$ in (\ref{rho-first}) are related by 
\begin{equation*}
x_1 + x_2 + x_1^* + x_2^* = 0 \; \mbox{\rm mod } (2\omega, 2 \omega').
\end{equation*}

\underline{The second quotient $\rho = \frac{q_2}{p_2}$.} \\
We change $\mu$ to $-\mu$ in (\ref{rho-basic}) and use the same notations for the sets of roots in Lemmas \ref{prop-zeros} and \ref{prop-zeros-2}. This yields
\begin{align}
\rho(x) &= C
\frac{H(\nu (x+x_1)) H(\nu (x+x_2)) 
	\cancel{H(\nu (x-x_1)) H(\nu (x-x_2))}}{H(\nu (x-x_1^*)) H(\nu (x-x_2^*)) 
	\cancel{H(\nu (x-x_1)) H(\nu (x-x_2))}} \notag \\
&= C \frac{H(\nu (x+x_1)) H(\nu (x+x_2)) 
\cancel{H(\nu (x+x_1^*)) H(\nu (x+x_2^*))}}{H(\nu (x-x_1^*)) H(\nu (x-x_2^*)) 
\cancel{H(\nu (x+x_1^*)) H(\nu (x+x_2^*)}}, \label{rho-second}
\end{align} 
for some constant $C \in \R$. To obtain $C$ explicitly, we consider the singular behavior as $x \to \frac{v}{2}$, that is, $x \to -\nu^{-1} (iK'+\alpha)$, for which 
\begin{equation}
\label{constant-C-unique-second}
\lim_{x \to -\nu^{-1} (iK'+\alpha)} \rho(x) = 1 = C e^{\frac{i\pi \nu}{2K}  (x_1 + x_2 + x_1^* + x_2^*)} \frac{\Theta(\alpha - \nu x_1) \Theta(\alpha - \nu x_2)}{\Theta(\alpha + \nu x_1^*) \Theta(\alpha + \nu x_2^*)}.
\end{equation}
This yields 
\begin{align}
\rho(x) &= e^{-\frac{i\pi \nu }{2K}  (x_1 + x_2 + x_1^* + x_2^*)} 
\frac{H(\nu (x+ x_1)) H(\nu (x+ x_2))\Theta(\alpha + \nu x_1^*) \Theta(\alpha + \nu x_2^*)}{H(\nu (x - x_1^*)) H(\nu (x - x_2^*)) \Theta(\alpha - \nu x_1) \Theta(\alpha - \nu x_2)}, 
\label{rho-final-second}
\end{align} 
which is the quotient $\rho = \frac{q_2}{p_2}$ of the second solution $\psi = (p_2,q_2)^T$ of the spectral problem (\ref{spectral}). The limit (\ref{constant-C-unique-second}) is different from the limit (\ref{constant-C-unique}) and this yields to the asymmetry between (\ref{rho-final}) and (\ref{rho-final-second}) in the sense that the two expressions are not obtained by replacing $\{x_1,x_2\}$ with $\{x_1^*,x_2^*\}$ and vice versa. \\

\underline{Factorized form for $(p_1,q_1)$.} \\
If $\rho$ is defined, the components of $\psi = (p,q)^T$ can be found from the first-order  equations:
\begin{equation}
\label{p-eq}
\partial_x p = (i \zeta + \phi \rho) p, \quad 
\partial_x q = (\phi \rho^{-1} - i \zeta) q.
\end{equation}
By using the second quotient in (\ref{rho-basic}) in the first equation of system (\ref{p-eq}), we write
\begin{equation}
\label{p-integrals}
\partial_x \log p_1 = i \zeta + 4 \frac{\zeta^2 \phi^2 + \frac{i}{2} \zeta \phi \phi' - \zeta_1 \zeta_2 \zeta_3 \phi}{2 i \zeta (\phi^2 + 2 \zeta^2 - \zeta_1^2 - \zeta_2^2 - \zeta_3^2) + \mu}.
\end{equation}
For simplicity of notations, we define
\begin{equation}
\label{upsilon}
\upsilon^2 := \zeta_1^2 + \zeta_2^2 + \zeta_3^2 - 2 \zeta^2 - \frac{\mu}{2i \zeta}.
\end{equation}
In order to compute integrals explicitly, we recall the linear fractional tranformation between the elliptic profile $\phi$ and the Weierstrass' function $\wp$, see Lemma 3.4 in \cite{AP25}:
\begin{equation}
\label{lin-frac-trans}
\phi(x) = \frac{\alpha_1 \wp(x) + \beta_1}{\gamma_1 \wp(x) + \delta_1},
\end{equation}
where 
\begin{align*}
\alpha_1 &= \zeta_2 + \zeta_3 - \zeta_1,\\
\beta_1 &= \frac{1}{3} \zeta_1 (\zeta_1^2 - 2 \zeta_2^2 - 2 \zeta_3^2 - 3 \zeta_2 \zeta_3) + \frac{1}{3} (\zeta_2 + \zeta_3) (2 \zeta_1^2 - \zeta_2^2 - \zeta_3^2 + 3 \zeta_2 \zeta_3), \\
\gamma_1 &= 1,\\
\delta_1 &= \zeta_1 (\zeta_2 + \zeta_3) - \zeta_2 \zeta_3 -\frac{1}{3} (\zeta_1^2 + \zeta_2^2 + \zeta_3^2).
\end{align*}
By using (\ref{lin-frac-trans}), 5.141.5 on p. 626 in \cite{GR}, 
and 8.193.2-3 on p. 880 in \cite{GR}, we obtain 
\begin{align*}
\int \frac{dx}{\phi(x) \pm \upsilon} &= \int \frac{\gamma_1 \wp(x) + \delta_1}{(
	\alpha_1 \pm \upsilon \gamma_1) \wp(x) + (\beta_1 \pm \upsilon \delta_1)} dx \\
&= \frac{\gamma_1 x}{\alpha_1 \pm \upsilon \gamma_1} + \frac{\beta_1 \gamma_1 - \alpha_1 \delta_1}{(\alpha_1 \pm \upsilon \gamma_1)^2 \wp'(x_{\pm})} \left[ \log \frac{H(\nu(x + x_{\pm}))}{H(\nu(x-x_{\pm}))} - 2 \nu x \frac{H'(\nu x_{\pm})}{H(\nu x_{\pm})} \right], 
\end{align*}
where $x_{\pm}$ are roots of $\phi(x_{\pm}) \pm \upsilon = 0$ respectively. We note that 
\begin{align*}
\phi'(x_{\pm}) = \frac{\alpha_1 \delta_1 - \beta_1 \gamma_1}{(\gamma_1 \wp(x_{\pm}) + \delta_1)^2} \wp'(x_{\pm}) = \frac{(\alpha_1 \pm \upsilon \gamma_1)^2}{\alpha_1 \delta_1 - \beta_1 \gamma_1} \wp'(x_{\pm}). 
\end{align*}
The choice of the sign in $\pm x_+$ and $\pm x_-$ do not change the outcome of integration. Since $\phi^2(x) = \upsilon^2$ is equivalent to (\ref{square-root-1}), roots of which have been denoted as $\{ \pm x_1, \pm x_2\}$, we can further fix the sign of $x_{\pm}$ to require them to be roots 
of (\ref{zero-denominator}), that is, $x_1 = x_+$ and $x_2 = x_-$. As a result, we have 
\begin{align*}
i \zeta \phi'(x_{\pm}) = 2 (\zeta^2 \phi(x_{\pm}) - \zeta_1 \zeta_2 \zeta_3) =  2(\mp \upsilon \zeta^2 - \zeta_1 \zeta_2 \zeta_3).
\end{align*}
By using these expressions, we compute the integrals in (\ref{p-integrals}) as follows:
\begin{align*}
\log p_1 &= i \zeta x + \int \frac{\phi \phi' }{\phi^2 - \upsilon^2} dx
+ \frac{2i \zeta_1 \zeta_2 \zeta_3}{\zeta} \int \frac{\phi}{\phi^2 - \upsilon^2} dx 
- 2 i \zeta \int \frac{\phi^2}{\phi^2 - \upsilon^2} dx \\
&= -i \zeta x + \frac{1}{2} \log(\phi^2 - \upsilon^2) \\
& \qquad 
+ \frac{i \zeta_1 \zeta_2 \zeta_3}{\zeta} \int \left[ \frac{1}{\phi + \upsilon} + \frac{1}{\phi - \upsilon} \right] dx 
+ i \zeta \upsilon  \int \left[ \frac{1}{\phi + \upsilon} - \frac{1}{\phi - \upsilon} \right] dx \\
&= -i \zeta x + \frac{1}{2} \log(\phi^2 - a^2)
+ \frac{1}{2} \phi'(x_+) \int \frac{dx}{\phi + a}  
+ \frac{1}{2} \phi'(x_-)  \int \frac{dx}{\phi - a} \\
&= -i \zeta x + \frac{1}{2} \log(\phi^2 - \upsilon^2)
+ \frac{\gamma_1 x}{2} \left[ \frac{\phi'(x_+)}{\alpha_1 + \upsilon \gamma_1} 
+ \frac{\phi'(x_-)}{\alpha_1 - \upsilon \gamma_1} \right] \\
& \qquad -\frac{1}{2} \log \frac{H(\nu(x+x_1)) H(\nu (x+x_2))}{H(\nu (x-x_1)) H(\nu (x-x_2))} + \nu x \left[ \frac{H'(\nu x_1)}{H(\nu x_1)} + \frac{H'(\nu x_2)}{H(\nu x_2)} \right].
\end{align*}
Since $\phi^2(x) - \upsilon^2$ is an elliptic function with roots given by roots of (\ref{square-root-1}), we use Lemma \ref{prop-zeros} and write it in the factorized form 
\begin{equation}
\label{phi-squared-1}
\phi^2(x) - \upsilon^2 = C \frac{H(\nu (x-x_1)) H(\nu (x-x_2)) H(\nu (x+x_1)) H(\nu (x+x_2))}{\Theta^2(\nu x - \alpha) \Theta^2(\nu x + \alpha)},
\end{equation}
where $C$ is a constant. 
By exponentiating, we obtain the explicit formula for $p$ 
\begin{align}
p_1 &= C_1 e^{s x} \frac{\sqrt{H(\nu (x-x_1)) H(\nu (x-x_2)) 	\cancel{H(\nu (x+x_1))} 	\cancel{H(\nu (x+x_2))}}}{\Theta(\nu x - \alpha) \Theta(\nu x + \alpha)} \notag \\
& \qquad \qquad \qquad \qquad
\times \frac{\sqrt{H(\nu (x-x_1)) H(\nu (x-x_2))}}{\sqrt{	\cancel{H(\nu (x+x_1))} 	\cancel{H(\nu (x+x_2))}}} \notag \\
&= C_1 e^{s x}  \frac{H(\nu (x-x_1)) H(\nu (x-x_2))}{\Theta(\nu x - \alpha) \Theta(\nu x + \alpha)}, \label{p-1}
\end{align}
where $C_1$ is arbitrary constant and 
\begin{align*}
s := -i\zeta + \frac{2 \gamma_1 (\upsilon^2 \gamma_1 \zeta^2 - \alpha_1 \zeta_1 \zeta_2 \zeta_3)}{i \zeta (\alpha_1^2 - \gamma_1^2 \upsilon^2)} + \frac{\nu H'(\nu x_1)}{H(\nu x_1)} + \frac{\nu H'(\nu x_2)}{H(\nu x_2)},
\end{align*}
which yields (\ref{s-final}) from (\ref{upsilon}) and (\ref{lin-frac-trans}).
By using the quotient (\ref{rho-final}) for $\rho$, we hence obtain the explicit formula for $q_1 = p_1 \rho$ as 
\begin{align}
q_1 = C_1 C e^{s x}  \frac{H(\nu (x+x_1^*)) H(\nu (x+x_2^*))}{\Theta(\nu x - \alpha) \Theta(\nu x + \alpha)}, \label{q-1}
\end{align}
where the constants $C$ is defined uniquely in (\ref{constant-C-unique}). 
By choosing 
$$
C_1 = \frac{e^{-\frac{i\pi \nu}{2K} (x_1 + x_2)}}{\Theta(\alpha - \nu x_1) \Theta(\alpha - \nu x_2)} , 
$$
we obtain the explicit expressions for the first solution in the form (\ref{p-gen}).

\underline{Factorized form for $(p_2,q_2)$.} \\
We replace $\mu$ by $-\mu$ in (\ref{p-integrals}). This replaces $\upsilon^2$ 
in (\ref{upsilon}) with 
$$
(\upsilon^*)^2 := \zeta_1^2 + \zeta_2^2 + \zeta_3^2 - 2 \zeta^2 + \frac{\mu}{2i \zeta}. 
$$
Since $\phi^2(x) - (\upsilon^*)^2$ is an elliptic function with roots given by roots of (\ref{square-root-2}), we use Lemma \ref{prop-zeros} and write it in the factorized form
\begin{equation}
\label{phi-squared-2}
\phi^2(x) - (\upsilon^*)^2 = C^* \frac{H(\nu (x-x_1^*)) H(\nu (x-x_2^*)) 
	H(\nu (x+x_1^*)) H(\nu (x+x_2^*))}{\Theta^2(\nu x - \alpha) \Theta^2(\nu x + \alpha)},
\end{equation}
where $C^*$ is another constant. Computations remain the same with $\{\pm x_1,\pm x_2\}$ replaced by $\{ \pm x_1^*, \pm x_2^*\}$ so that roots of $\phi(x_{\pm}^*) \pm \upsilon^* = 0$ are placed in the correspondence with $x_1^* = x_+^*$ and $x_2^* = x_-^*$. As result of similar computations, we obtain 
\begin{align}
 \label{p-2}
p_2 = C_2 e^{s^* x}  \frac{H(\nu (x-x_1^*)) H(\nu (x-x_2^*))}{\Theta(\nu x - \alpha) \Theta(\nu x + \alpha)}
\end{align}
and 
\begin{align}
 \label{q-2}
q_2 = C_2 C e^{s^* x}  \frac{H(\nu (x+x_1)) H(\nu (x+x_2))}{\Theta(\nu x - \alpha) \Theta(\nu x + \alpha)},
\end{align}
where $C_2$ is arbitrary constant, the constant $C$ is defined uniquely in (\ref{constant-C-unique-second}), and 
\begin{align}
s^* := i \frac{4 \zeta^4 - 4 \zeta^2 (\zeta_2 + \zeta_3 - \zeta_1)^2 
	+ i \zeta \mu + 4 \zeta_1 \zeta_2 \zeta_3 (\zeta_2 + \zeta_3 - \zeta_1)}{4 \zeta^3 + 4 \zeta (\zeta_2 \zeta_3 - \zeta_1 \zeta_2 - \zeta_1 \zeta_3) + i \mu} 
+ \frac{\nu H'(\nu x_1^*)}{H(\nu x_1^*)} + \frac{\nu H'(\nu x_2^*)}{H(\nu x_2^*)}.
\label{s-star}
\end{align}
We prove in Lemma \ref{lem-s} that $s^* = -s$. Hence, by choosing 
$$
C_2 = \frac{e^{\frac{i\pi \nu}{2K} (x_1^* + x_2^*)}}{\Theta(\alpha + \nu x_1^*) \Theta(\alpha + \nu x_2^*)}, 
$$
we obtain the explicit expressions for the second solution in the form (\ref{p-gen-1}).
\end{proof}

It remains to prove that $s^* = -s$, which we do in the following lemma.

\begin{lemma}
	\label{lem-s}
It follows that $s + s^* = 0$, where $s$ and $s^*$ are given by (\ref{s-final}) and (\ref{s-star}).
\end{lemma}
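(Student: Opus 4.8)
The plan is to avoid manipulating the unwieldy closed forms (\ref{s-final}) and (\ref{s-star}) directly, and instead to deduce $s^{*}=-s$ from a reflection symmetry of the Lax system. Since the profile $\phi$ in (\ref{form-2}) is even, the potential $u(x,t)=\phi(x+ct)$ satisfies $u(-x,-t)=u(x,t)$, so under $(x,t)\mapsto(-x,-t)$ the quantities $u$, $u^{2}$, $u^{3}$, $u_{xx}$ are even while $u_{x}$ is odd. A short computation with $\sigma_{2}=\left(\begin{smallmatrix}0&-i\\ i&0\end{smallmatrix}\right)$ then gives the conjugation identities $\sigma_{2}\,U(\zeta,u)\,\sigma_{2}=-U(\zeta,u)$ and $\sigma_{2}\,V(\zeta,u)|_{u_{x}\to-u_{x}}\,\sigma_{2}=-V(\zeta,u)$. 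Hence, if $\varphi(x,t)$ solves (\ref{LS}), then so does $\tilde\varphi(x,t):=\sigma_{2}\varphi(-x,-t)$, because the sign changes coming from $\partial_{x}\mapsto-\partial_{x}$, $\partial_{t}\mapsto-\partial_{t}$ and (inside $V$) $u_{x}\mapsto-u_{x}$ are precisely undone by conjugating with $\sigma_{2}$.

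I would apply this to the first solution. By construction in the proof of Theorem \ref{th-factorizaton}, the function $\varphi_{1}(x,t)=(p_{1}(x+ct),q_{1}(x+ct))^{T}e^{\mu t}$ with $p_{1},q_{1}$ given by (\ref{p-1})--(\ref{q-1}) solves (\ref{LS}); therefore $\tilde\varphi_{1}(x,t)=(-i\,q_{1}(-x-ct),\ i\,p_{1}(-x-ct))^{T}e^{-\mu t}$ is again a solution. Writing any solution of (\ref{LS}) as $\psi(x+ct)e^{\lambda t}$, one checks that the operator $\partial_{t}-c\partial_{x}$ acts on $\varphi_{1}$, on $\varphi_{2}=(p_{2},q_{2})^{T}e^{-\mu t}$, and on $\tilde\varphi_{1}$ by multiplication by $\mu$, $-\mu$, and $-\mu$, respectively. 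Under the standing assumption $\gamma\notin\{0,K,iK',K+iK'\}$ one has $\mu\neq0$ (indeed $\mu^{2}=-16P(\zeta)$ with $P(\zeta)=(\zeta^{2}-\zeta_{1}^{2})(\zeta^{2}-\zeta_{2}^{2})(\zeta^{2}-\zeta_{3}^{2})$ vanishing only at the excluded branch points), so the eigenvalue $-\mu$ of $\partial_{t}-c\partial_{x}$ on the two-dimensional solution space is simple, forcing $\tilde\varphi_{1}=\lambda\varphi_{2}$ for some constant $\lambda\neq0$.

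Finally I would read off the conclusion. Comparing the second components of $\tilde\varphi_{1}=\lambda\varphi_{2}$ in the variable $\xi=x+ct$ gives $q_{2}(\xi)=(i/\lambda)\,p_{1}(-\xi)$. Inserting the factorized form (\ref{q-2}) for $q_{2}$ and (\ref{p-1}) for $p_{1}$, and using that $H$ is odd and $\Theta$ is even, the product $H(\nu(\xi+x_{1}))H(\nu(\xi+x_{2}))$ and the denominator $\Theta(\nu\xi-\alpha)\Theta(\nu\xi+\alpha)$ occur identically on both sides; cancelling them leaves $C_{2}C\,e^{s^{*}\xi}=(iC_{1}/\lambda)\,e^{-s\xi}$ for all $\xi$, whence $s^{*}+s=0$. (The first components yield the mirror relation $p_{2}(\xi)=-(i/\lambda)\,q_{1}(-\xi)$, which via (\ref{p-2}) and (\ref{q-1}) gives the same conclusion and, as a by-product, re-proves $x_{1}+x_{2}+x_{1}^{*}+x_{2}^{*}\equiv0\bmod(2\omega,2\omega')$ and the symmetry $(p_{2}(x),q_{2}(x))\propto(-q_{1}(-x),p_{1}(-x))$ implicit in Theorem \ref{th-factorizaton}.) I expect the only slightly delicate point to be the two matrix-conjugation identities --- in particular keeping track that substituting $x\mapsto-x$ inside $V(\zeta,u)$ changes the sign of the $u_{x}$-term but of nothing else --- together with the $H$-odd/$\Theta$-even bookkeeping in the last step; both are short and mechanical, so I do not foresee a real obstruction. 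Since the argument uses only the intermediate forms (\ref{p-1})--(\ref{q-2}) and not the normalized expressions (\ref{p-gen})--(\ref{p-gen-1}), it is not circular within the proof of Theorem \ref{th-factorizaton}.
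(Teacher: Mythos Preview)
Your argument is correct and takes a genuinely different route from the paper's proof. The paper computes the Wronskian $W=p_1q_2-q_1p_2$ directly from the factorized forms (\ref{p-1})--(\ref{q-2}), rewrites it via (\ref{phi-squared-1})--(\ref{phi-squared-2}) as $e^{(s+s^*)x}$ times a linear combination of $\phi^2(x)$ and a constant, and then spends the bulk of the work showing that the coefficient of $\phi^2(x)$ vanishes --- this requires a nontrivial theta-function identity obtained by evaluating $\rho$ in (\ref{rho-final}) and (\ref{rho-second}) at the pole $\mp\nu^{-1}(iK'+\alpha)$. Only after that cancellation does constancy of $W$ force $s+s^*=0$.

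Your symmetry argument bypasses all of this: the conjugation $\sigma_2 U\sigma_2=-U$, $\sigma_2 V|_{u_x\to -u_x}\sigma_2=-V$ (both easily checked entrywise, exactly as you indicate) together with the evenness of $\phi$ give a canonical isomorphism between the $\mu$- and $(-\mu)$-eigenspaces of $V-cU$ on the two-dimensional solution space. Since $\mu\neq 0$ away from the branch points, this forces $(p_2(\xi),q_2(\xi))\propto(-q_1(-\xi),p_1(-\xi))$, and the comparison of exponential factors using only that $H$ is odd and $\Theta$ is even immediately yields $s^*=-s$. Your approach is more conceptual, explains \emph{why} the relation holds (a built-in reflection symmetry of the Lax pair), and as you note gives the solution symmetry $(p_2,q_2)(\xi)\propto(-q_1,p_1)(-\xi)$ and the root relation $x_1+x_2+x_1^*+x_2^*\equiv 0$ for free; the paper's approach, while heavier, has the merit of producing along the way explicit theta-function identities relating the roots $\{x_j,x_j^*\}$ that may be of independent use in the factorized picture of Section~\ref{sec-5}.
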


\begin{proof}
	We recall that the Wronskian determinant of the two solutions of the linear system (\ref{LS}) is independent of $(x,t)$. By using (\ref{constant-C-unique}), (\ref{constant-C-unique-second}), (\ref{p-1}), (\ref{q-1}), (\ref{p-2}), and (\ref{q-2}), we obtain for the Wronskian $W$ of the two solutions:
\begin{align*}
W &= e^{(s+s^*)x} \frac{H(\nu (x-x_1)) H(\nu (x-x_2)) H(\nu (x+x_1)) 
	H(\nu (x+x_2))}{\Theta^2(\nu x - \alpha) \Theta^2(\nu x + \alpha) 
\Theta^2(\alpha - \nu x_1) \Theta^2(\alpha - \nu x_2)} e^{-\frac{i \pi \nu }{K} (x_1 + x_2)} \\
& \quad + e^{(s+s^*)x} \frac{H(\nu (x - x_1^*)) H(\nu (x - x_2^*)) H(\nu (x + x_1^*)) H(\nu (x + x_2^*))}{\Theta^2(\nu x - \alpha) \Theta^2(\nu x + \alpha) 
	\Theta^2(\alpha + \nu x_1^*) \Theta^2(\alpha + \nu x_2^*)} e^{\frac{i \pi \nu}{K} (x_1^* + x_2^*)}.
\end{align*}
By using (\ref{phi-squared-1}) and (\ref{phi-squared-2}), this can be rewritten in the form 
\begin{align*}
W &= e^{(s+s^*)x} \frac{\phi^2(x) - \upsilon^2}{C
	\Theta^2(\alpha - \nu x_1) \Theta^2(\alpha - \nu x_2)} e^{-\frac{i \pi \nu}{K} (x_1 + x_2)} \\
& \quad + e^{(s+s^*)x} \frac{\phi^2(x) - (\upsilon^*)^2}{C^*
	\Theta^2(\alpha + \nu x_1^*) \Theta^2(\alpha + \nu x_2^*)} e^{\frac{i \pi \nu}{K} (x_1^* + x_2^*)}.
\end{align*}
We show that the coefficient in front of $e^{(s+s^*)x} \phi^2(x)$ vanishes identically. Indeed, by using (\ref{Wei-Jac-3}), we compute from (\ref{phi-squared-1}) and (\ref{phi-squared-2}) as $\nu x \to iK' + \alpha$ that 
\begin{align*}
(\zeta_1 - \zeta_2 - \zeta_3)^2 \frac{\Theta^4(\alpha)}{H^4(\beta)} 
\Theta^2(\alpha - \beta) \Theta^2(\alpha + \beta) &= C \Theta(\alpha - \nu x_1) \Theta(\alpha - \nu x_2) \Theta(\alpha + \nu x_1) \Theta(\alpha + \nu x_2), \\
(\zeta_1 - \zeta_2 - \zeta_3)^2 \frac{\Theta^4(\alpha)}{H^4(\beta)} 
\Theta^2(\alpha - \beta) \Theta^2(\alpha + \beta) &= C^* \Theta(\alpha - \nu x_1^*) \Theta(\alpha - \nu x_2^*) \Theta(\alpha + \nu x_1^*) \Theta(\alpha + \nu x_2^*).
\end{align*}
The coefficient in front of $e^{(s+s^*)x} \phi^2(x)$ in $W$ is zero if and only if 
$$
\frac{\Theta(\alpha + \nu x_1) \Theta(\alpha + \nu x_2)}{\Theta(\alpha - \nu x_1) \Theta(\alpha - \nu x_2)} e^{-\frac{i \pi}{K} \nu (x_1 + x_2)} + \frac{\Theta(\alpha - \nu x_1^*) \Theta(\alpha - \nu x_2^*)}{\Theta(\alpha + \nu x_1^*) \Theta(\alpha + \nu x_2^*)} e^{\frac{i \pi}{K} \nu (x_1^* + x_2^*)} = 0.
$$
This relation follows by taking the limit $\rho(x)$ in (\ref{rho-final}) as $x \to -\nu^{-1} (iK'+\alpha)$ or $\rho(x)$ in (\ref{rho-second}) as $x \to \nu^{-1} (iK'+\alpha)$:
\begin{equation*}
-1 = e^{\frac{i\pi \nu}{K} (x_1 + x_2 + x_1^* + x_2^*)} \frac{\Theta(\alpha - \nu x_1) \Theta(\alpha - \nu x_2) \Theta(\alpha - \nu x_1^*) \Theta(\alpha - \nu x_2^*)}{\Theta(\alpha + \nu x_1) \Theta(\alpha + \nu x_2) \Theta(\alpha + \nu x_1^*) \Theta(\alpha + \nu x_2^*)}.
\end{equation*}
Hence we get 
\begin{align*}
W = -e^{(s+s^*)x} \left[ \frac{\upsilon^2 e^{-\frac{i \pi \nu}{K} (x_1 + x_2)}}{C
	\Theta^2(\alpha - \nu x_1) \Theta^2(\alpha - \nu x_2)}  + \frac{(\upsilon^*)^2 e^{\frac{i \pi \nu }{K} (x_1^* + x_2^*)}}{C^*
	\Theta^2(\alpha + \nu x_1^*) \Theta^2(\alpha + \nu x_2^*)}  \right].
\end{align*}
The Wronskian $W$ is independent of $x$ if and only if $s + s^* = 0$.
\end{proof}

\begin{example}
As $\zeta \to 0$, we recover the exact solutions (\ref{exact-2}) and (\ref{exact-1}) from (\ref{p-gen}) and (\ref{p-gen-1}) by using 
\begin{align*}
x_1 &= \frac{iK' - \alpha}{\nu} = \frac{v}{2} + 2 i K' , \\
x_2 &= \frac{-iK' - \alpha}{\nu} = \frac{v}{2} , \\
x_1^*  &= \frac{iK' + \alpha}{\nu} = -\frac{v}{2}, \\
x_2^* &= \frac{-iK' + \alpha}{\nu} = -\frac{v}{2} - 2 i K' , 
\end{align*}
which agrees with the fact that roots of (\ref{square-root-1}) and (\ref{square-root-2}) approach to the double poles of $\phi^2(x)$ as $\zeta \to 0$. Since  $\mu \to 4 \zeta_1 \zeta_2 \zeta_3$ as $\zeta \to 0$, 
we use (\ref{s-final}) and obtain $s \to s_0$ as $\zeta \to 0$ with 
$$
s_0 = \zeta_1 - \zeta_2 - \zeta_3 - 2 Z'(\alpha) = -\nu \frac{H'(2\alpha)}{H(2 \alpha)},
$$
where we have used Lemma 4.2 in \cite{AP25} for the second equality. 
This agrees with (\ref{p1}) and (\ref{p2}) for $\delta = 2 \alpha$. 
We substitute values of $x_1$, $x_2$, $x_1^*$, $x_2^*$ into (\ref{p-gen}) and obtain 
$$
p_1(x) = -q_1(x) = \frac{\Theta(\nu x + \alpha)}{\Theta(\nu x - \alpha)} 
\frac{e^{\frac{i \pi \alpha}{K}}}{H^2(2 \alpha)} e^{-\nu x  \frac{H'(2\alpha)}{H(2\alpha)}},
$$
which is equivalent to (\ref{exact-2}) up to the scalar multiplication. Similarly, we substitute values of $x_1$, $x_2$, $x_1^*$, $x_2^*$ into (\ref{p-gen-1}) and obtain 
$$
p_2(x) = q_2(x) = \frac{\Theta(\nu x - \alpha)}{\Theta(\nu x + \alpha)} 
\frac{e^{\frac{i \pi \alpha}{K}}}{H^2(2 \alpha)} e^{\nu x  \frac{H'(2\alpha)}{H(2\alpha)}},
$$
which is equivalent to (\ref{exact-1}) up to the scalar multiplication.
\end{example} 

\subsection{Rational solutions}
\label{sec-5-1}

If $e_1 = e_2 = e_3$, then $\wp(x) = x^{-2}$, which implies from (\ref{phinova}) that 
\begin{equation}
\label{phi-rat}
\phi(x) = -\frac{1}{v} - \frac{4v}{4x^2 - v^2},
\end{equation}
with $v \in \mathbb{C}$ being the only parameter of the solution. 
It is clear that the rational solution (\ref{phi-rat}) is not important 
for applications since the solution is not real-valued if $v \in \mathbb{C}\backslash \mathbb{R}$ and is singular for real $x$ if $v \in \mathbb{R}$. This case is included for illustrations of expressions 
for roots $\{ \pm x_1,\pm x_2\}$ and $\{\pm x_1^*, \pm x_2^*\}$ 
of equations (\ref{square-root-1}) and (\ref{square-root-2}), respectively.

It follows from (\ref{pcW}) with $\wp(x) = x^{-2}$ that 
$$
c = \frac{6}{v^2}, \quad b = -\frac{4}{v^3}.
$$
By using (\ref{phi-rat}) in (\ref{third}) with the given values of $c$ and $b$, 
we also obtain 
$$
d = -\frac{3}{2v^4}.
$$
It follows from (\ref{parameterization}) with expressions for $b,c,d$ that $\zeta_1 = -v^{-1}$ and $\zeta_2 = \zeta_3 = v^{-1}$. By using parameter $a \in \mathbb{C}$, we get from $\zeta^2 = \wp(v) - \wp(a)$ and $\mu = -2 \wp'(a)$ that 
$$
\zeta^2 = \frac{a^2 - v^2}{a^2 v^2}, \quad \mu = \frac{4}{a^3}.
$$

The following lemma defines the roots $\{ x_1, x_2\}$ and $\{ x_1^*,x_2^*\}$
of Lemma \ref{prop-zeros} and \ref{prop-zeros-2} for the rational solutions. 

\begin{lemma}
	\label{lem-rational}
	The roots $\{ x_1, x_2\}$ and $\{ x_1^*,x_2^*\}$ of (\ref{square-root-1}) and (\ref{square-root-2}) for the rational solution (\ref{phi-rat}) are given by 
	$x_{1,2} = \frac{v}{2} y_{\pm}$ and $x_{1,2}^* = \frac{v}{2} y_{\pm}^*$, where 
	$\{ y_+, y_-\}$ and $\{ y_+^*, y_-^*\}$ are defined up to the sign choice by 
	\begin{equation}
	\label{y1-2}
	y_{\pm} = \sqrt{1 + \frac{4}{\pm \sqrt{1 + 2 (1-z^2) (1 + i z^{-1} \sqrt{1-z^2})}-1}}
	\end{equation}
	and
	\begin{equation}
	\label{y1-2-star}
	y_{\pm}^* = \sqrt{1 + \frac{4}{\pm \sqrt{1 + 2 (1-z^2) (1 - i z^{-1} \sqrt{1-z^2})}-1}},
	\end{equation}
	where $z = v \zeta$.
\end{lemma}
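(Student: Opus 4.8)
The plan is to reduce both equations (\ref{square-root-1}) and (\ref{square-root-2}) to the purely algebraic conditions $\phi^2(x)=\upsilon^2$ and $\phi^2(x)=(\upsilon^*)^2$ that already appear in the proof of Theorem \ref{th-factorizaton}, and then to solve these explicitly using the rational form (\ref{phi-rat}). First I would note that in the rational limit $\zeta_1=-v^{-1}$ and $\zeta_2=\zeta_3=v^{-1}$, so that $\zeta_1^2+\zeta_2^2+\zeta_3^2=3v^{-2}$; dividing (\ref{square-root-1}) by $2i\zeta$ and invoking the definition (\ref{upsilon}) of $\upsilon^2$ turns it into $\phi^2(x)-\upsilon^2=0$, and replacing $\mu$ by $-\mu$ does the same for (\ref{square-root-2}) with $(\upsilon^*)^2$ in place of $\upsilon^2$. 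So the whole problem is to solve a single algebraic equation of the form $\phi^2(x)=\text{const}$.

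Next I would introduce the rescaled variable $x=\tfrac{v}{2}y$, for which (\ref{phi-rat}) collapses to the convenient shape
$$ \phi\!\left(\tfrac{v}{2}y\right)=-\frac1v\,\frac{y^2+3}{y^2-1}, \qquad\text{so}\qquad v^2\phi^2=\frac{(y^2+3)^2}{(y^2-1)^2}. $$
Then I would evaluate $v^2\upsilon^2$ in terms of $z=v\zeta$: from $\zeta^2=(a^2-v^2)/(a^2v^2)$ one gets $z^2=1-v^2/a^2$, hence $v/a=\sqrt{1-z^2}$ up to a sign, so $\mu=4/a^3=4(1-z^2)^{3/2}/v^3$ up to the same sign; substituting this together with $\zeta^2=z^2/v^2$ into $\upsilon^2=3v^{-2}-2\zeta^2-\mu/(2i\zeta)$ and using $3-2z^2=1+2(1-z^2)$ gives
$$ v^2\upsilon^2=1+2(1-z^2)\bigl(1+iz^{-1}\sqrt{1-z^2}\bigr), \qquad v^2(\upsilon^*)^2=1+2(1-z^2)\bigl(1-iz^{-1}\sqrt{1-z^2}\bigr), $$
the second arising from the first by the replacement $\mu\mapsto-\mu$.

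Finally, writing $R=\sqrt{v^2\upsilon^2}$, the equation $v^2\phi^2=v^2\upsilon^2$ becomes $(y^2+3)/(y^2-1)=\pm R$, which solves to $y^2=(\pm R+3)/(\pm R-1)=1+4/(\pm R-1)$; this is precisely (\ref{y1-2}) after recalling the value of $R$, and the analogous computation with $(\upsilon^*)^2$ produces (\ref{y1-2-star}). Since each value of $y^2$ yields a pair $\pm y$ and $x=\tfrac{v}{2}y$, this recovers the two symmetric pairs of roots $\{\pm x_1,\pm x_2\}$ and $\{\pm x_1^*,\pm x_2^*\}$ of Lemma \ref{prop-zeros}, with $x_{1,2}=\tfrac{v}{2}y_\pm$ and $x_{1,2}^*=\tfrac{v}{2}y_\pm^*$. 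The only genuinely delicate point is keeping track of the three nested sign/branch choices --- in $\sqrt{1-z^2}$ (equivalently the choice of $a$ among the solutions of $\zeta^2=\wp(v)-\wp(a)$), in $R=\sqrt{v^2\upsilon^2}$, and in $y=\pm\sqrt{\,\cdot\,}$ --- but since the statement only asserts the roots up to these sign choices, it is enough to observe that every branch of $R$ and of $y$ is actually attained, which is immediate from the biquadratic structure of $(y^2+3)^2=R^2(y^2-1)^2$.
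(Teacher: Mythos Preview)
Your proof is correct and follows essentially the same route as the paper's: introduce the scaled variables $x=\tfrac{v}{2}y$ and $z=v\zeta$, reduce (\ref{square-root-1})--(\ref{square-root-2}) to an algebraic equation in $y^2$, and solve. The only cosmetic difference is that the paper expands $\phi^2$ directly and obtains a quadratic in $1/(y^2-1)$ (equation (\ref{eq-2-app})), whereas you first observe the tidy factorisation $v\phi=-(y^2+3)/(y^2-1)$, which turns $\phi^2=\upsilon^2$ into a linear equation in $y^2$ after one square root; both manipulations yield (\ref{y1-2}) and (\ref{y1-2-star}) identically.
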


\begin{proof}
Equation (\ref{square-root-1}) is rewritten explicitly as 
\begin{equation}
\label{eq-1-app}
\frac{16 v^2}{(4 x^2 - v^2)^2} + \frac{8}{4 x^2 - v^2} - \frac{2}{a^2} - \frac{2i}{a^3 \zeta} = 0.
\end{equation}
In order to scale $v \in \mathbb{C}$ out, we use the scaled variables 
\begin{equation}
\label{eq-scaled-app}
x = \frac{v}{2} y, \quad \zeta = \frac{z}{v}, \quad a = \frac{v}{\sqrt{1-z^2}}.
\end{equation}
Equation (\ref{eq-1-app}) in scaled variables (\ref{eq-scaled-app}) 
can be rewritten in the form 
\begin{equation}
\label{eq-2-app}
\frac{16}{(y^2 - 1)^2} + \frac{8}{y^2 - 1} - 2 (1-z^2) \left(1 + i z^{-1} \sqrt{1-z^2} \right) = 0,
\end{equation}
from which we obtain two pairs of roots $\{ \pm y_+, \pm y_- \}$ in the form (\ref{y1-2}). Similarly, equation (\ref{square-root-2}) can be rewritten in the form
\begin{equation}
\label{eq-3-app}
\frac{16}{(y^2 - 1)^2} + \frac{8}{y^2 - 1} - 2 (1-z^2) \left(1 - i z^{-1} \sqrt{1-z^2} \right) = 0,
\end{equation}
from which we obtain two pairs of roots $\{ \pm y_+^*, \pm y_-^* \}$ in the form (\ref{y1-2-star}).
\end{proof}

\begin{figure}[htb!]
	\includegraphics[width=0.8\textwidth]{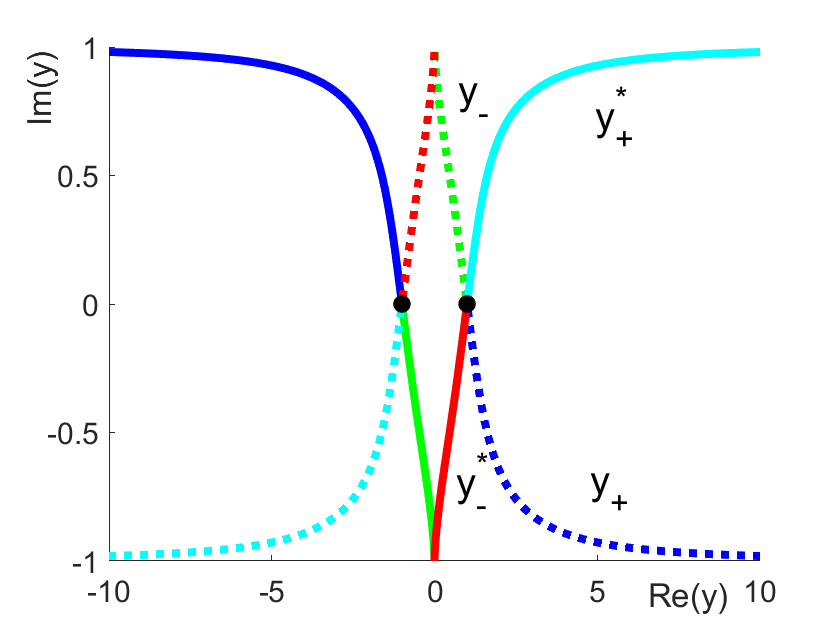}
	\caption{Pairs of roots (\ref{y1-2}) and (\ref{y1-2-star}) on the complex $y$-plane parameterized by $z \in [0,1]$. The solid lines correspond to roots of (\ref{eq-2-app}) and (\ref{eq-3-app}), which are simultaneously 
		roots of (\ref{eq-4-app}). The dashed lines correspond to 
	roots of (\ref{eq-2-app}) and (\ref{eq-3-app}), which are simultaneously 
	roots of (\ref{eq-5-app}).}
	\label{fig-roots}
\end{figure}

Let us fix $z \in (0,1)$. Figure \ref{fig-roots} show 
pairs of roots $\pm y_+$, $\pm y_-$, $\pm y_+^*$, $\pm y_-^*$ on the 
complex $y$-plane parameterized by $z \in (0,1)$. The roots converge to $\pm 1$ as $z \to 0$ (shown by black dots). Two pairs of roots converge to $i$ as $z \to 1$ and two pairs of roots diverge to infinity as $z \to 1$. 
The solid lines show the roots of (\ref{eq-2-app}) and (\ref{eq-3-app}), 
which are simultaneously roots of 
\begin{equation}
\label{eq-4-app}
\frac{8 i zy}{(y^2-1)^2} + \frac{4 z^2}{y^2-1} - (1-z^2) = 0,
\end{equation}
which is obtained from (\ref{zero-denominator}) with the scaling (\ref{eq-scaled-app}). Similarly, the dashed lines show the sign-reflected 
roots, which are roots of 
\begin{equation}
\label{eq-5-app}
-\frac{8 i zy}{(y^2-1)^2} + \frac{4 z^2}{y^2-1} - (1-z^2) = 0,
\end{equation}
obtained from (\ref{zero-denominator-recipr}) with the scaling (\ref{eq-scaled-app}). The appropriate roots of (\ref{y1-2}) and (\ref{y1-2-star}), which are simultaneously roots of (\ref{eq-4-app}) and (\ref{eq-5-app}) are identified with the help of computer computations. 

Thus, based on the numerical data and 
the scaling transformation (\ref{eq-scaled-app}), we have 
$$
x_1 = -\frac{v}{2} y_+, \quad 
x_2 = -\frac{v}{2} y_-, \quad 
x_1^* = \frac{v}{2} y_+^*, \quad 
x_2^* = \frac{v}{2} y_-^*,
$$
from which the relation $y_{\pm}^* = \overline{y}_{\pm}$ suggests 
that $x_{1,2}^* = -\overline{x}_{1,2}$. We emphasize that 
the roots $\{ x_+,x_-\}$ and $\{ x_+^*,x_-^*\}$ obtained from 
(\ref{y1-2}) and (\ref{y1-2-star}) depends on the spectral 
parameter $\zeta$ via the square root singularities 
at $z = 0$ and $z = \pm 1$. 

\subsection{Hyperbolic solutions} 
\label{sec-5-2}

If $0 < \zeta_3 = \zeta_2 < \zeta_1$, we have $k = 1$ from (\ref{parameters-nu-k}), which corresponds to the solitary wave rather than 
the periodic wave of the mKdV equation (\ref{mkdv}). Since
$K = \infty$ and $K' = \frac{\pi}{2}$, the fundamental rectangle $[-K,K] \times [-iK',iK']$ becomes a horizontal strip of the width $\pi$ in the vertical direction. 

The following lemma defines the roots $\{ x_1, x_2\}$ and $\{ x_1^*,x_2^*\}$
of Lemma \ref{prop-zeros} and \ref{prop-zeros-2} for the hyperbolic solutions. 

\begin{lemma}
	\label{lem-hyperbolic}
	The roots $\{ x_1, x_2\}$ and $\{ x_1^*,x_2^*\}$ of (\ref{square-root-1}) and (\ref{square-root-2}) for the hyperbolic solution (\ref{form-2}) 
	with $0 < \zeta_3 = \zeta_2 < \zeta_1$ are given by 
	$x_{1,2} = \nu^{-1} z_{1,2}$ and $x_{1,2}^* = \nu^{-1} z_{1,2}^*$, where 
	$\{ z_1, z_2\} \in \R \times \left[-\frac{i \pi}{2},\frac{i \pi}{2}\right]$ and $\{ z_1^*, z_2^*\} \in \R \times \left[-\frac{i \pi}{2},\frac{i \pi}{2}\right]$ are defined up to the sign choice by 
	roots of 
\begin{align}
\cosh(2z) = \frac{-\zeta_2^2 (\zeta_1^2-\zeta^2) + i \zeta (\zeta_1^2 - \zeta_2^2) \sqrt{\zeta_1^2 - \zeta^2} \pm (\zeta_1^2-\zeta_2^2) \sqrt{ \zeta^2 (2 \zeta_2^2-\zeta_1^2) - 2 i \zeta \zeta_2^2 \sqrt{\zeta_1^2 - \zeta^2}}}{\zeta_1 \zeta_2 (\zeta_2^2 - \zeta^2)}
\label{roots-sech}
\end{align}
	and
\begin{align}
\cosh(2z) = \frac{-\zeta_2^2 (\zeta_1^2-\zeta^2) - i \zeta (\zeta_1^2 - \zeta_2^2) \sqrt{\zeta_1^2 - \zeta^2} \pm (\zeta_1^2-\zeta_2^2) \sqrt{ \zeta^2 (2 \zeta_2^2-\zeta_1^2) + 2 i \zeta \zeta_2^2 \sqrt{\zeta_1^2 - \zeta^2}}}{\zeta_1 \zeta_2 (\zeta_2^2 - \zeta^2)},
\label{roots-sech-star}
\end{align}
where  $\zeta \in (0,\zeta_2)$.
\end{lemma}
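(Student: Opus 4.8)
The idea is to carry out, for the degeneration $k=1$ and $\zeta_2=\zeta_3$, the same reduction that underlies Lemma~\ref{prop-zeros}, but with every elliptic function replaced by its hyperbolic limit. Here $\nu=\sqrt{\zeta_1^2-\zeta_2^2}$, $K=\infty$, $K'=\pi/2$, and $\mathrm{sn}(\nu x,1)=\tanh(\nu x)$, so that the fundamental rectangle $[-\omega,\omega]\times[-\omega',\omega']$ for $x$ degenerates to the strip $\R\times[-i\pi/2,i\pi/2]$ for $z=\nu x$, which is where the roots must live. First I would rewrite the profile (\ref{form-2}): using $\tanh^2=1-\sech^2$ and $\sech^2(\nu x)=2/(\cosh(2\nu x)+1)$, the profile becomes a linear-fractional function of the single variable $w:=\cosh(2\nu x)$, namely $\phi=(\zeta_1\zeta_2 w+2\zeta_2^2-\zeta_1^2)/(\zeta_2 w+\zeta_1)$. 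Next, by (\ref{omega-final}) in the degenerate case (equivalently, from $\mu^2+16P(\zeta)=0$ with $P(\zeta)=(\zeta^2-\zeta_1^2)(\zeta^2-\zeta_2^2)^2$), one has $\mu=4(\zeta_2^2-\zeta^2)\sqrt{\zeta_1^2-\zeta^2}$ for $\zeta\in(0,\zeta_2)$, so that equation (\ref{square-root-1}) reads simply $\phi^2=\upsilon^2$ with $\upsilon^2=\zeta_1^2+2\zeta_2^2-2\zeta^2-\mu/(2i\zeta)$ (the hyperbolic limit of (\ref{upsilon})), and (\ref{square-root-2}) reads $\phi^2=(\upsilon^*)^2$ with $\mu$ replaced by $-\mu$.

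Since $\phi$ is linear-fractional in $w$, each of the two equations $\phi=\upsilon$ and $\phi=-\upsilon$ has a unique solution $w$, and passing from one to the other amounts to $\upsilon\mapsto-\upsilon$; hence the two roots of $\phi^2=\upsilon^2$ are $w=(\zeta_1^2-2\zeta_2^2+\upsilon\zeta_1)/(\zeta_2(\zeta_1-\upsilon))$ for the two square roots $\upsilon$ of $\upsilon^2$. I would then rationalize this, using the identities $\zeta_1^2-\upsilon^2=-2(\zeta_2^2-\zeta^2)(\zeta+i\sqrt{\zeta_1^2-\zeta^2})/\zeta$ and $(\zeta+i\sqrt{\zeta_1^2-\zeta^2})(\zeta-i\sqrt{\zeta_1^2-\zeta^2})=\zeta_1^2$: multiplying numerator and denominator by $\zeta-i\sqrt{\zeta_1^2-\zeta^2}$ makes the denominator equal to $\zeta_1\zeta_2(\zeta_2^2-\zeta^2)$ and splits the numerator into a $\upsilon$-free part $-\zeta_2^2(\zeta_1^2-\zeta^2)+i\zeta(\zeta_1^2-\zeta_2^2)\sqrt{\zeta_1^2-\zeta^2}$ and a $\upsilon$-proportional part $-\zeta(\zeta_1^2-\zeta_2^2)\upsilon(\zeta-i\sqrt{\zeta_1^2-\zeta^2})/\zeta_1$. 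The last step is to square this latter term: expanding $\zeta^2\upsilon^2(\zeta-i\sqrt{\zeta_1^2-\zeta^2})^2/\zeta_1^2$ and separating real and imaginary parts, one checks that it equals $\zeta^2(2\zeta_2^2-\zeta_1^2)-2i\zeta\zeta_2^2\sqrt{\zeta_1^2-\zeta^2}$, so that the $\upsilon$-proportional part equals $\pm(\zeta_1^2-\zeta_2^2)\sqrt{\zeta^2(2\zeta_2^2-\zeta_1^2)-2i\zeta\zeta_2^2\sqrt{\zeta_1^2-\zeta^2}}$, the sign encoding the two branches of $\upsilon$; this yields precisely (\ref{roots-sech}). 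Replacing $\mu\mapsto-\mu$ (equivalently $i\mapsto-i$ in the radicals) gives (\ref{roots-sech-star}). Finally, on the strip $\R\times[-i\pi/2,i\pi/2]$ the map $z\mapsto\cosh(2z)$ is two-to-one with the two preimages opposite by evenness of $\cosh$, so each value of $w$ produces a pair $\{z,-z\}$; the two roots $w$ of (\ref{square-root-1}) therefore give the four roots $\{\pm x_1,\pm x_2\}$ with $x_{1,2}=\nu^{-1}z_{1,2}$, and likewise for (\ref{square-root-2}). The selection of $\{x_1,x_2\}$ (resp. $\{x_1^*,x_2^*\}$) as the roots that are simultaneously roots of (\ref{zero-denominator}) is then made exactly as in Lemma~\ref{prop-zeros-2}.

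The main obstacle I anticipate is purely the bookkeeping with branch cuts: one must keep the choice of $\upsilon=\pm\sqrt{\upsilon^2}$ consistent with the auxiliary radical $\zeta-i\sqrt{\zeta_1^2-\zeta^2}$ introduced in the rationalization, and verify that the nested expression $\upsilon^2(\zeta-i\sqrt{\zeta_1^2-\zeta^2})^2$ indeed collapses onto the radicand displayed in (\ref{roots-sech}) and not onto some other branch. Everything else is a routine degeneration of formulas already established for the general elliptic wave, so no new structural ingredient is required.
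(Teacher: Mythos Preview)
Your proposal is correct and reaches (\ref{roots-sech})--(\ref{roots-sech-star}) by a route that differs from the paper's in how the algebra is organized. The paper specializes the general bi-quadratic (\ref{root-elliptic}) from Lemma~\ref{prop-zeros} to $k=1$, obtaining a quadratic in $\cosh(2z)$ whose coefficients still involve $\mu/(4i\zeta)$; it then multiplies through by $\bigl(\zeta^2-\zeta_2^2-\mu/(4i\zeta)\bigr)$ (using $\mu^2=-16(\zeta^2-\zeta_1^2)(\zeta^2-\zeta_2^2)^2$) so that the quadratic can be written as a perfect square plus a constant, and reads off $\cosh(2z)$ from there. You instead bypass the quadratic altogether by first writing $\phi$ as a linear-fractional function of $w=\cosh(2\nu x)$ and factoring $\phi^2-\upsilon^2=(\phi-\upsilon)(\phi+\upsilon)$, so that each branch is a \emph{linear} equation for $w$; you then rationalize using $\zeta_1^2-\upsilon^2=-2(\zeta_2^2-\zeta^2)(\zeta+i\sqrt{\zeta_1^2-\zeta^2})/\zeta$ and the conjugate identity $A\bar A=\zeta_1^2$. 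Your approach is slightly more elementary (no quadratic formula, and no need to invoke (\ref{root-elliptic})), at the modest price of carrying the auxiliary square root $\upsilon$ explicitly and tracking its branch --- exactly the obstacle you flag. One small imprecision: the single multiplication ``by $\zeta-i\sqrt{\zeta_1^2-\zeta^2}$'' you describe does not by itself turn the denominator $\zeta_2(\zeta_1-\upsilon)$ into $\zeta_1\zeta_2(\zeta_2^2-\zeta^2)$; one must first rationalize by $(\zeta_1+\upsilon)$ and then clear $\bar A$ by the factor $-\zeta A/(2\zeta_1)$. With that two-step rationalization the split into $N_0$ and $N_1$ you give is exactly right, and your verification $\zeta^2\upsilon^2 A^2/\zeta_1^2=\zeta^2(2\zeta_2^2-\zeta_1^2)-2i\zeta\zeta_2^2\sqrt{\zeta_1^2-\zeta^2}$ is the key identity that collapses the nested radicals.
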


\begin{proof}
	It follows from the proof of Lemma \ref{prop-zeros} that 
	equation (\ref{square-root-1}) can be written as equation (\ref{root-elliptic}), which is further rewritten as 
$$
4 i \zeta (\zeta_1^2 - \zeta_2^2)\frac{\zeta_2^2 \cosh^2(2z) + \zeta_1^2 - 2 \zeta_2^2}{(\zeta_1 + \zeta_2 \cosh(2z))^2} + 4 i \zeta (\zeta^2 - \zeta_1^2) + \mu = 0. 
$$
This equation can be expanded as 
$$
\zeta_2^2 \left(\zeta^2 - \zeta_2^2 + \frac{\mu}{4i\zeta} \right) \cosh^2(2z) + 2 \zeta_1 \zeta_2 \left(\zeta^2 - \zeta_1^2 + \frac{\mu}{4i\zeta} \right) \cosh(2z) + \zeta^2 \zeta_1^2 - 3 \zeta_1^2 \zeta_2^2 + 2 \zeta_2^4 + \frac{\mu \zeta_1^2}{4i\zeta}  = 0.
$$
Since $\mu^2 = -16 (\zeta^2 - \zeta_1^2) (\zeta^2 - \zeta_2^2)^2$, multiplying by 
$\left( \zeta^2 - \zeta_2^2 - \frac{\mu}{4i\zeta} \right)$ yields the following equation after straightforward computations:
\begin{align*}
& \left( \zeta_1 \zeta_2 (\zeta^2 - \zeta_2^2) \cosh(2z) + \zeta_2^2 (\zeta^2 - \zeta_1^2) + \frac{\mu \zeta^2 (\zeta_1^2 - \zeta_2^2)}{4 i \zeta (\zeta^2 - \zeta_2^2)} \right)^2 \\
& \qquad + (\zeta_1^2 - \zeta_2^2)^2 \left( \zeta^2 (\zeta_1^2 - 2 \zeta_2^2) + \frac{2 \mu \zeta^2 \zeta_2^2}{4 i \zeta (\zeta^2 - \zeta_2^2)} \right) = 0.
\end{align*}
If $\zeta \in (0,\zeta_2)$, then $\mu = 4 (\zeta_2^2 - \zeta^2) \sqrt{\zeta_1^2 - \zeta^2}$ follows from (\ref{omega-final}). Solving the previous equation for $\cosh(2z)$ yields two solutions given by (\ref{roots-sech}). Similarly, 
equation (\ref{square-root-2}) can be solved with two solutions given by (\ref{roots-sech-star}).
\end{proof}

\begin{figure}[htb!]
	\includegraphics[width=0.8\textwidth]{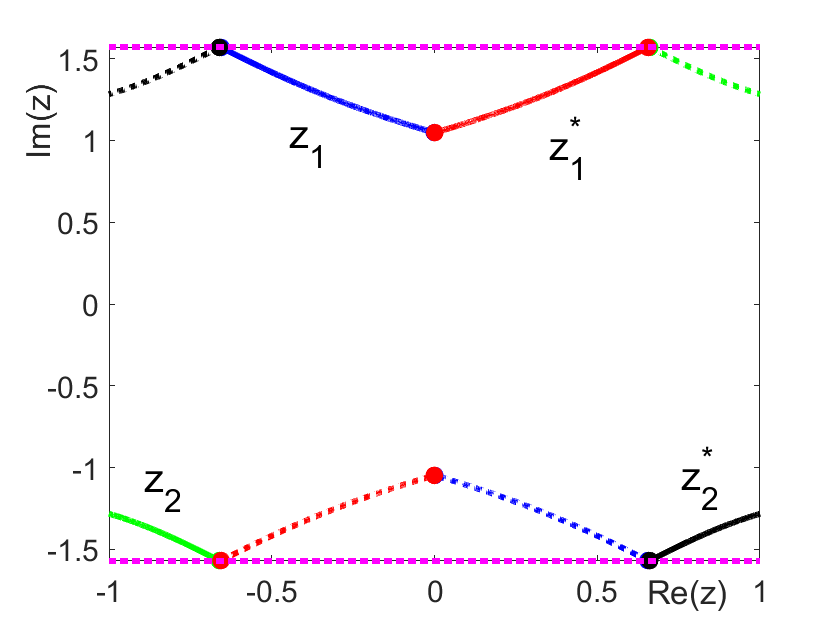}
	\caption{Roots $\{ \pm z_1, \pm z_2\}$ of (\ref{roots-sech}) and $\{ \pm z_1^*, \pm z_2^*\}$ of (\ref{roots-sech-star}) on the complex plane for $\zeta \in (0,\zeta_2)$ for  $\zeta_1 = 1$ and $\zeta_2 = \zeta_3 = 0.5$.}
	\label{fig-hyperbolic}
\end{figure}

Figure \ref{fig-hyperbolic} displays roots $\{ \pm z_1, \pm z_2\}$ and $\{ \pm z_1^*, \pm z_2^* \}$ of (\ref{roots-sech}) and (\ref{roots-sech-star}) for $\zeta_1 = 1$ and $\zeta_2 = \zeta_3 = 0.5$. The roots are parameterized by $\zeta \in (0,\zeta_2)$. The big dots of the same color show roots at $\zeta = \zeta_2$ and $\zeta = 0$. Dashed horizontal lines in magenta shows the boundary of the vertical strip at $\pm iK' = \pm \frac{\pi i}{2}$. For $\zeta = 0$, four pairs of roots are located at $\pm i K' \pm \alpha$. As $\zeta \to \zeta_2$, four roots diverge to infinity since $K = \infty$ and two pairs of roots coalesce on $i \mathbb{R}$.

The solid lines show the roots of (\ref{roots-sech}) and (\ref{roots-sech-star}) which are simultaneously roots of (\ref{zero-denominator}), that is, roots labeled as $\{ z_1,z_2,z_1^*,z_2^*\}$. Consequently, the dashed lines show the reflected roots $\{-z_1,-z_2,-z_1^*,-z_2^*\}$ which are roots of (\ref{roots-sech}) and (\ref{roots-sech-star}) which are simultaneously roots of (\ref{zero-denominator-recipr}). Again, the roots are identified with the help of computer computations. We can see from Figure \ref{fig-hyperbolic} that the roots satisfy the symmetry
$z_{1,2}^* = -\bar{z}_{1,2}$, which suggests 
the same relation $x_{1,2}^* = -\overline{x}_{1,2}$ as in the case of rational solutions. 

\vspace{0.2cm}

{\bf Acknowledgement.}  The first author thanks M. Bertola, D. Korotkin, and L. Ling for many discussions related to this project. The main breakthrough for this project was obtained during the Third Joint Alabama-Florida Conference on Differential Equations, Dynamical Systems and Applications at University of Alabama Birmingham (UAB), May 20-22, 2025. Some computations in Section \ref{sec-5} were obtained earlier, in collaboration with L. K. Arruda during the previous project \cite{AP25}. The work of D. E. Pelinovsky was supported in part
by the National Natural Science Foundation of China (grant no. 12371248).

\vspace{0.2cm}

{\bf Conflict of interests.}  The authors declare no conflict of interests. 

\vspace{0.2cm}

{\bf Data availability.}  Data is available upon a reasonable request.

\vspace{0.2cm}

{\bf Funding statements.}  The authors report no funding for this work.

\end{document}